\documentclass[10pt,conference,letterpaper]{IEEEtran}
\usepackage{times}
\usepackage{amsmath}
\usepackage{amsfonts}
\usepackage{epsfig}
\usepackage{graphicx}
\usepackage{balance} 
\usepackage{cite}
\usepackage{url}
\usepackage{tikz}
\usepackage[caption=false,font=footnotesize]{subfig}
\usepackage{algorithm}
\usepackage{algpseudocode}
\usepackage{booktabs}
\usepackage{multirow}
\usepackage{makecell}
\usepackage{courier}
\usepackage{tabularx}
\usepackage{amssymb}
\usepackage{myenumitem}

\usetikzlibrary{positioning,shapes,arrows}

\newtheorem{lemma}{Lemma}
\newtheorem{theorem}{Theorem}
\newtheorem{definition}{Definition}
\newtheorem{example}{Example}

\makeatletter
\newcommand*{\centerfloat}{%
    \parindent \z@
    \leftskip \z@ \@plus 1fil \@minus \textwidth
    \rightskip\leftskip
    \parfillskip \z@skip}
\makeatother

\title{An Efficient Probabilistic Approach for Graph Similarity Search [Extended Technical Report]}

\author{
    {Zijian Li{\small $~^{\#1}$}, Xun Jian{\small $~^{\#2}$}, Xiang Lian{\small $~^{ *3}$}, Lei Chen{\small $~^{\#4}$}}%
    \vspace{1.6mm}\\
    \fontsize{10}{10}\selectfont\itshape
    $^{\#}$\,Computer Science and Engineering Department, HKUST, Hong Kong, China\\
    \fontsize{9}{9}\selectfont\ttfamily\upshape
    $^{1}$\,zlicb@cse.ust.hk \hspace{1em}
    $^{2}$\,xjian@cse.ust.hk \hspace{1em}
    $^{4}$\,leichen@cse.ust.hk%
    \vspace{1.2mm}\\
    \fontsize{10}{10}\selectfont\rmfamily\itshape
    $^{*}$\, Computer Science Department, Kent State University, Kent, USA\\
    \fontsize{9}{9}\selectfont\ttfamily\upshape
    $^{3}$\,xlian@kent.edu
}

\begin{document}
    \maketitle
    
\begin{abstract}
    Graph similarity search is a common and fundamental operation in
    graph databases. One of the most popular graph similarity
    measures is the {Graph Edit Distance} (GED) mainly because
    of its broad applicability and high interpretability. Despite its prevalence, exact GED computation is proved to be NP-hard, which
    could result in unsatisfactory computational efficiency on large graphs. 
    However, exactly accurate search results are usually 
    unnecessary for real-world applications 
    especially when the responsiveness is far more important than the accuracy.
    Thus, in this paper, we propose a novel probabilistic approach to efficiently estimate 
    GED, which is further leveraged for the graph similarity search. 
    Specifically, we first take {branches} as elementary structures in
    graphs, and introduce a novel graph similarity measure by comparing branches 
    between graphs, i.e., {Graph Branch Distance} (GBD),
    which can be efficiently calculated in polynomial time. 
    Then, we formulate the relationship between GED and GBD 
    by considering branch variations as the result ascribed to 
    graph edit operations, and model this process by probabilistic approaches. 
    By applying our model, the GED between any two graphs 
    can be efficiently estimated by their GBD, 
    and these estimations are finally utilized in the graph similarity search. 
    Extensive experiments show that our approach has
    better accuracy, efficiency and scalability than other comparable methods
    in the graph similarity search over real and synthetic data sets.
\end{abstract}

\vspace{-0.4em}
\section{Introduction} \label{sec-intro}
\vspace{-0.4em}

Graph similarity search is a common and fundamental operation in 
graph databases, which has widespread applications in various fields 
including bio-informatics, sociology, and chemical analysis, over
the past few decades. For evaluating the similarity between graphs, \emph{Graph Edit Distance} (GED) \cite{sanfeliu1983distance} is one of the most prevalent
measures because of its wide applicability, that is, GED is capable
of dealing with various kinds of graphs including directed and
undirected graphs, labeled and unlabeled graphs, as well as simple
graphs and multi-graphs (which could have multiple edges between two
vertices). Furthermore, GED has high interpretability, since it
corresponds to some sequences of concrete graph edit operations (including insertion of vertices and edges, etc.) of
minimal lengths, rather than implicit graph embedding utilized in
spectral \cite{caelli2004eigenspace} or kernel
\cite{gartner2003graph} measures. Example
\ref{intro-ged-example} illustrates the basic idea of GED.

\begin{example}\label{intro-ged-example}
    Assume that we have two graphs $G_1$ and $G_2$ as shown in Figure \ref{fig-intro-ged-example}. The label sets of graph $G_1$'s vertices and edges are $\{A,B,C\}$ and $\{y,y,z\}$, respectively, and the label sets of graph $G_2$'s vertices and edges are $\{A,B,C\}$ and $\{x,y,z\}$, respectively. The Graph Edit Distance (GED) between $G_1$ and $G_2$ is the minimal number of graph edit operations to transform $G_1$ into $G_2$. It can be proved that the GED between $G_1$ and $G_2$ is 3, which can be achieved by \raisebox{.5pt}{\textcircled{\raisebox{-.9pt} {1}}} deleting the edge between $v_1$ and $v_3$ in $G_1$, and \raisebox{.5pt}{\textcircled{\raisebox{-.9pt} {2}}} inserting an isolated vertex $v_4$ with label $A$, and \raisebox{.5pt}{\textcircled{\raisebox{-.9pt} {3}}} inserting an edge between $v_3$ and $v_4$ with label $x$.
\end{example}

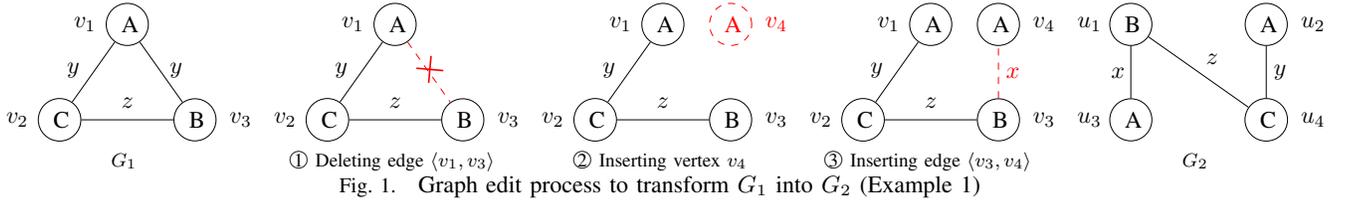
\begin{figure*}[!htbp]
    \vspace{-25pt}
    \centerfloat
    \captionsetup[subfigure]{labelformat=empty}
    \scalebox{.9}{
        \subfloat[$G_1$]{
            \begin{tikzpicture}[
            vertex/.style={draw,circle,text width=5pt,align=center},
            tag/.style={text width=5pt,align=center}
            ]
            \node[vertex] (v1) at (1,1.41) {A};
            \node[tag] (t1) at (0.3,1.41) {$v_1$};
            \node[vertex] (v2) at (0,0) {C};
            \node[tag] (t2) at (-0.7,0) {$v_2$};
            \node[vertex] (v3) at (2,0) {B};
            \node[tag] (t3) at (2.6,0) {$v_3$};
            \path (v1) edge[-] (v2);
            \path (v3) edge[-] (v2);
            \path (v1) edge[-] (v3);
            \node[tag] (te1) at (0.2,0.74) {$y$};
            \node[tag] (te2) at (1.72,0.74) {$y$};
            \node[tag] (te3) at (1,0.25) {$z$};
            \end{tikzpicture}
        }
        \hfill
        \subfloat[\raisebox{.5pt}{\textcircled{\raisebox{-.9pt} {1}}} Deleting edge $\langle v_1, v_3 \rangle$ ]{
            \begin{tikzpicture}[
            vertex/.style={draw,circle,text width=5pt,align=center},
            tag/.style={text width=5pt,align=center}
            ]
            \node[vertex] (v1) at (1,1.41) {A};
            \node[tag] (t1) at (0.3,1.41) {$v_1$};
            \node[vertex] (v2) at (0,0) {C};
            \node[tag] (t2) at (-0.7,0) {$v_2$};
            \node[vertex] (v3) at (2,0) {B};
            \node[tag] (t3) at (2.6,0) {$v_3$};
            \path (v1) edge[-] (v2);
            \path (v3) edge[-] (v2);
            \path (v1) edge[dashed,red] (v3);
            \node[tag] (te1) at (0.2,0.74) {$y$};
            \node[tag] (te3) at (1,0.25) {$z$};
            \node[tag,red,rotate=40] (te2) at (1.4,0.65) {\LARGE $\times$};
            \end{tikzpicture}
        }
        \hfill
        \subfloat[\raisebox{.5pt}{\textcircled{\raisebox{-.9pt} {2}}} Inserting vertex $v_4$]{
            \begin{tikzpicture}[
            vertex/.style={draw,circle,text width=5pt,align=center},
            tag/.style={text width=5pt,align=center}
            ]
            \node[vertex] (v1) at (1,1.41) {A};
            \node[tag] (t1) at (0.3,1.41) {$v_1$};
            \node[vertex] (v2) at (0,0) {C};
            \node[tag] (t2) at (-0.7,0) {$v_2$};
            \node[vertex] (v3) at (2,0) {B};
            \node[tag] (t3) at (2.6,0) {$v_3$};
            \node[vertex,red,dashed] (v4) at (2,1.41) {A};
            \node[tag,red] (t3) at (2.6,1.41) {$v_4$};
            \path (v1) edge[-] (v2);
            \path (v3) edge[-] (v2);
            \node[tag] (te1) at (0.2,0.74) {$y$};
            \node[tag] (te3) at (1,0.25) {$z$};
            \end{tikzpicture}
        }
        \hfill
        \subfloat[\raisebox{.5pt}{\textcircled{\raisebox{-.9pt} {3}}} Inserting edge $\langle v_3, v_4 \rangle$]{
            \begin{tikzpicture}[
            vertex/.style={draw,circle,text width=5pt,align=center},
            tag/.style={text width=5pt,align=center}
            ]
            \node[vertex] (v1) at (1,1.41) {A};
            \node[tag] (t1) at (0.3,1.41) {$v_1$};
            \node[vertex] (v2) at (0,0) {C};
            \node[tag] (t2) at (-0.7,0) {$v_2$};
            \node[vertex] (v3) at (2,0) {B};
            \node[tag] (t3) at (2.6,0) {$v_3$};
            \node[vertex] (v4) at (2,1.41) {A};
            \node[tag] (t3) at (2.6,1.41) {$v_4$};
            \path (v1) edge[-] (v2);
            \path (v3) edge[-] (v2);
            \path (v3) edge[-,red,dashed] (v4);
            \node[tag] (te1) at (0.2,0.74) {$y$};
            \node[tag] (te3) at (1,0.25) {$z$};
            \node[tag,red] (te3) at (2.2,0.7) {$x$};
            \end{tikzpicture}
        }
        \hfill
        \subfloat[$G_2$]{
            \begin{tikzpicture}[
            vertex/.style={draw,circle,text width=5pt,align=center},
            tag/.style={text width=5pt,align=center}
            ]
            \node[vertex] (v1) at (0,1.41) {B};
            \node[tag] (t2) at (-0.7,1.41) {$u_1$};
            \node[vertex] (v2) at (2,1.41) {A};
            \node[tag] (t3) at (2.6,1.41) {$u_2$};
            \node[vertex] (v3) at (0,0) {A};
            \node[tag] (t2) at (-0.7,0) {$u_3$};
            \node[vertex] (v4) at (2,0) {C};
            \node[tag] (t3) at (2.6,0) {$u_4$};
            \path (v1) edge[-] (v3);
            \path (v1) edge[-] (v4);
            \path (v2) edge[-] (v4);
            \node[tag] (te1) at (-0.2,0.7) {$x$};
            \node[tag] (te3) at (1.2,0.9) {$z$};
            \node[tag] (te3) at (2.2,0.7) {$y$};
            \end{tikzpicture}
    }}
    \vspace{-1ex}
    \caption{\small{Graph edit process to transform $G_1$ into $G_2$ (Example \ref{intro-ged-example})}}
    \label{fig-intro-ged-example}
    \vspace{-2em}
\end{figure*}

With GED as the graph similarity measure, the \emph{graph similarity search} problem is formally stated as follows.

\textbf{Problem Statement}: {\textbf{(Graph Similarity Search)}}
Given a graph database $D$, a query graph $Q$, 
and a similarity threshold $\hat{\tau}$,
the graph similarity search is to find a set of graphs
$D_0\subseteq D$, where the graph edit distance (GED) between $Q$ and 
each graph in $D_0$ is less than or equal to $\hat{\tau}$.

A straightforward solution to the problem above is to check exact GEDs for all pairs of $Q$ and graphs in database $D$. However, despite its prevalence,
GED is proved to be NP-hard for exact calculations \cite{zeng2009comparing}, 
which may lead to unsatisfactory computational efficiency when we conduct a similarity search over large graphs.
The most widely-applied approach for computing exact GED is the $A^*$ algorithm \cite{hart1968formal},
which aims to search out the optimal matching between the vertices of two graphs in a heuristic manner.
Specifically, given two graphs with $n$ and $m$ vertices, respectively,
the time complexity of $A^*$ algorithm is $O({n}^{m})$ in the worst case. 

Due to the hardness of computing exact GED
(NP-hard)\cite{zeng2009comparing}, most existing works involving exact
GED computation \cite{zeng2009comparing} \cite{hart1968formal} only conducted
experiments on graphs with less than 10 vertices. In addition, a recent study \cite{DBLP:conf/icde/GoudaH16} indicates that the $A^*$ algorithm is incapable of computing GED between graphs with more than 12 vertices, which can hardly satisfy the demand for searching real-world graphs. For instance, a common requirement in bio-informatics is to search and compare molecular structures of similar proteins \cite{ibragimov2013gedevo}. However, the structures of human proteins usually contain hundreds of vertices (i.e., atoms) \cite{cellbiology2015}, which obviously makes similarity search beyond the capability of the approaches mentioned above. Another observation is that many real-world applications do not always require an exact similarity
value, and an approximate one with some quality guarantee is also
acceptable especially in real-time applications where the
responsiveness is far more important than the accuracy. Taking the protein search as an example again, it is certainly more desirable for users to obtain an approximate solution within a second, rather than to wait for a couple of days to get the exact answer. 

To address the problems above, many approaches have been
proposed to achieve an approximate GED between the query graph $Q$ and
each graph in database $D$ in polynomial time \cite{gao2010survey}, which can be leveraged to
accelerate the graph similarity search by trading accuracy for efficiency.
Assuming that there are two graphs with $n$ and $m$ vertices, respectively, where $n \ge m$,
one well-studied method \cite{bougleux2016graph} \cite{riesen2009approximate}  
for estimating the GED between these two graphs is to solve a corresponding
\emph{linear sum assignment problem}, which requires at least 
$O(n^3)$ time for obtaining the global optimal value or $O(n^2\log{n^2})$ time for a local optimal value
by applying the greedy algorithm \cite{riesen2015approximate}.
An alternative method is \emph{spectral seriation}\cite{robles2005graph},
which first generates the vector representations of graphs by extracting their
leading eigenvalues of the adjacency matrix ($O(n^2)$ time) \cite{golub2012matrix},
and then exploits a probabilistic model based on these vectors to estimate GED
in $O(nm^2)$ time.

To further enhance the efficiency of GED estimation and better satisfy the demands 
for graph similarity search on large graphs, we propose a novel
probabilistic approach which aims at estimating the GED with less time cost $O(nd+\hat\tau ^3)$, where $n$ is the number of vertices, $d$ is the average degree of the graphs involved, and $\hat\tau$ is the similarity 
threshold in the stated graph similarity search problem. Note that the similarity threshold $\hat{\tau}$ is often 
set as a small value (i.e., $\hat{\tau} \le 10$) and does not increase with the number of vertices $n$ in previous studies \cite{zeng2009comparing}
\cite{Zheng:2013hh}, thus, we can assume that $\hat{\tau}$ is a constant with regard to  $n$ when the graph is sufficiently large.  Moreover, most real-world graphs studied in related works \cite{riesen2009approximate} \cite{riesen2015approximate} 
are \emph{scale-free graphs} \cite{wiki:scale-free-graphs}, and we prove that the average degree $d=O(\log{n})$ for scale-free graphs.
Therefore, under the assumptions above, the time complexity of our approach is $O(nd+\hat{\tau}^3) = O(n\log{n})$ for comparing two scale-free graphs, and $O(|D|n\log{n})$ for searching similar graphs in the graph database $D$, where $|D|$ is the number of graphs in database $D$.

Our method is mainly inspired by probabilistic modeling approaches which are broadly utilized in similarity searches on various types of data such as text and images \cite{hu2009latent}. The basic idea of these methods is to model the formation of an object as a stochastic process, and to establish probabilistic connections between objects. In this paper, we follow this idea and model the formation of graph branch distances (GBDs) as the results of random graph editing processes, where GBD is defined in Section \ref{sec-branch-distance-definition}. By doing so, we prove the probabilistic relationship between GED and GBD, which is finally utilized to estimate GED by graph branch distance (GBD). 

To summarize, we make the following contributions.

\begin{itemize}[leftmargin=*]
    \item We adopt \emph{branches} \cite{Zheng:2013hh} as elementary structures in graphs, and define a novel graph similarity measure by comparing branches 
    between graphs, i.e., \emph{Graph Branch Distance} (GBD), which can be efficiently calculated in $O(nd)$ time, where $n$ is the number of vertices, and $d$ is the average degree of the graphs involved.
    \item We build a probabilistic model which reveals the relationship between GED and GBD 
    by considering branch variations as the result ascribed to graph edit operations.
    By applying our model, the GED between any two graphs can be estimated by their GBD in $O(\hat\tau^3)$ time, where $\hat\tau$ is the similarity 
    threshold in the graph similarity search problem.
    \item We conduct extensive experiments to show our approach has
    better accuracy, efficiency and scalability compared with the related
    approximation methods over real and synthetic data.
\end{itemize}

The paper is organized as follows. In Section
\ref{sec-preliminaries}, we formally define the symbols and
concepts which are used in this paper. In Section
\ref{sec-branch-distance-definition}, we give definitions of branches and \emph{Graph Branch Distance} (GBD).
In Section \ref{sec-extended-graphs}, we introduce the \emph{extended graphs},
which are exploited to simplify our model.
In Section \ref{sec-prob-relation-ged-gbd}, we derive the probabilistic
relation between GBD and GED, which is leveraged in Section
\ref{sec-search-framework} to perform the graph
similarity search. In Section \ref{sec-experiments}, we demonstrate the
efficiency and effectiveness of our proposed approaches through
extensive experiments. We discuss related works in Section
\ref{sec-related-works}, and we conclude the paper in Section \ref{sec-conclusions}.

\vspace{-0.5em}
\section{Preliminaries}\label{sec-preliminaries}
\vspace{-0.5em}

The \emph{graphs} discussed in this paper are restricted to
\emph{simple labeled undirected graphs}. Specifically, the $i$-th
graph in database $D$ is denoted by: $G_i\triangleq\{V_i, E_i,
\mathcal{L}\}$, where $V_i\triangleq\{v_{i,1}, v_{i,2},...,\allowbreak 
v_{i,|V_i|}\}$ is the set of vertices, $E_i\triangleq\{e_{i,1}, e_{i,2},...,\allowbreak
e_{i,|E_i|}\}$ is the set of edges, while $\mathcal{L}$ is a 
general labelling function. For any vertex $v_{i,j}\in V_i$, its label is given by
$\mathcal{L}(v_{i,j})$. Similarly, for any edge $e_{i,j}\in E_i$, its
label is given by $\mathcal{L}(e_{i,j})$. In addition, $\mathcal{L}_V$ and $\mathcal{L}_E$
are defined as the sets of all possible labels for vertices and edges, respectively. 
We also define $\varepsilon$ as a \emph{virtual}
label, which will be used later in our approach. When the label of a
vertex (or edge) is $\varepsilon$, the vertex (or edge) is said to
be \emph{virtual} and does not actually exist.  Particularly, we have 
$\varepsilon \notin \mathcal{L}_V$ and $\varepsilon \notin \mathcal{L}_E$. 
Note that our method can also handle directed and weighted graphs by considering edge directions and weights as special labels.

In this paper, we take Graph Edit Distance (GED)\cite{sanfeliu1983distance} as the graph similarity measure, which is defined as follows.

\begin{definition}[{Graph Edit Distance}]\label{def-ged}
    The edit distance between graphs $G_1$ and $G_2$, denoted by
    $GED(G_1,G_2)$, is the minimal number of graph edit operations which
    are necessary to transform $G_1$ into $G_2$, where the graph edit
    operations (GEO) are restricted to the following six types:
    
    \begin{itemize}
        \item \textbf{AV}: \underline{A}dd one isolated \underline{v}ertex with a non-virtual label;
        \item \textbf{DV}: \underline{D}elete one isolated \underline{v}ertex;
        \item \textbf{RV}: \underline{R}elabel one \underline{v}ertex;
        \item \textbf{AE}: \underline{A}dd one \underline{e}dge with a non-virtual label;
        \item \textbf{DE}: \underline{D}elete one \underline{e}dge;
        \item \textbf{RE}: \underline{R}elabel one \underline{e}dge.
    \end{itemize}
\end{definition}

Assume that a particular graph edit operation sequence which
transforms graph $G_1$ into $G_2$ is $seq_i$, where $i$ is the unique ID of this sequence.
Then, according to Definition \ref{def-ged}, $GED(G_1, G_2)$ is the
minimal length for all possible operation sequences, that is,
$GED(G_1, G_2)=\min_i\{|seq_i|\} $, where $|seq_i|$ is the length of the sequence
$seq_i$. The set of all operation sequences from
$G_1$ to $G_2$ of the minimal length is defined as $SEQ=\{seq_i \vert \forall i, GED(G_1,
G_2)=|seq_i| \}$.

\vspace{-1.5ex}
\section{Branch Distance Between Graphs}\label{sec-branch-distance-definition}
\vspace{-0.5em}

To reduce the high cost of exact GED computations (NP-hard \cite{zeng2009comparing}) in
the graph similarity search, one widely-applied strategy for pruning search results
\cite{zeng2009comparing} \cite{wang2012efficiently}\cite{zhao2012efficient} \cite{Zheng:2013hh} is to exploit the
differences between graph sub-structures as the bounds of exact GED values. 
In this paper, we consider the branches \cite{Zheng:2013hh} as elementary graph units, which are defined as:

\begin{definition}[{Branches}]\label{def-branch}
    The branch rooted at vertex $v$ is defined as $B(v)=\{\mathcal{L}(v), N(v)\}$, where $\mathcal{L}(v)$ is the label of vertex $v$, and $N(v)$ is the sorted multiset containing all labels of edges adjacent to $v$. The sorted multiset of all branches in $G_i$ is denoted by $B_{G_i}=\{B(v_{i,j}), \forall v_{i,j} \in V_i\}$.
\end{definition}

In practice, each branch $B(v)$ is stored as a list of 
strings whose first element is $\mathcal{L}(v)$ and the following elements
are strings in the sorted multiset $N(v)$. In addition, $B_{G_i}$ for each
graph $G_i$ is stored in a sorted multiset, whose elements are essentially lists of strings (i.e., branches) and are always sorted ascendingly by the ordering algorithm in \cite{cpp:algorithm-sort-multiset}.
For a fair comparison of the computational efficiency,
we assume that all auxiliary data structures in different methods,
such as the cost matrix \cite{riesen2009approximate}, adjacency matrix \cite{robles2005graph},
and our branches, are pre-computed and stored with graphs.

To define the equality between two branches, we introduce the concept of \emph{branch isomorphism} as follows.

\begin{definition}[Branch Isomorphism] \label{def-branch-iso}
    Two branches $B(v)=\{\mathcal{L}(v), \allowbreak N(v)\}$ and $B(u)=\{\mathcal{L}(u), N(u)\}$ are isomorphic if and only if $\mathcal{L}(v)=\mathcal{L}(u)$ and $N(v)=N(u)$, which is denoted by $B(v) \simeq B(u)$.
\end{definition}

Suppose that we have two branches $B(v)$ and $B(u)$ where the degrees of
vertices $v$ and $u$ are $d_v$ and $d_u$, respectively. From previous discussions, the 
branch $B(v)$ and $B(u)$ are stored as lists of lengths $d_v+1$ and $d_u+1$, respectively. Therefore, checking whether $B(v)$ and $B(u)$ are isomorphic is essentially judging whether two lists of lengths $d_v+1$ and $d_u+1$ are identical, which can be done in $d_v$ time when $d_v = d_u$, and otherwise in one unit time since the length of a list can be obtained in one unit time.

Finally, we define the Graph Branch Distance (GBD).

\begin{definition}[{Graph Branch Distance}]\label{def-gbd}
    The branch distance between graphs $G_1$ and $G_2$, denoted by
    $GBD(G_1,G_2)$, is defined as:
    \begingroup
    \setlength{\abovedisplayskip}{2pt}
    \setlength{\belowdisplayskip}{2pt}
    \setlength{\abovedisplayshortskip}{2pt}
    \setlength{\belowdisplayshortskip}{2pt}
    \begin{align} \label{equal-def-gbd}
    GBD(G_1,G_2)&=\max\{|B_{G_1}|, |B_{G_2}|\}-|B_{G_1}\cap B_{G_2}| \nonumber \\
    &=\max\{|V_1|, |V_2|\}-|B_{G_1}\cap B_{G_2}|
    \end{align}
    \endgroup
    where $B_{G_1}$ and $B_{G_2}$ are the multisets of all branches in graphs $G_1$ and $G_2$, respectively.
\end{definition}

\begin{table}[!t]
    \vspace{-10pt}
    \centering
    \caption{Table of Notations}
    \vspace{-10pt}
    \begin{tabular}{p{0.7cm} p{0.1cm} p{6.6cm} } \hline
        $D$ & $\triangleq$ & $\{G_1, G_2, ... G_{|D|}\}$, the graph database\\
        $G_i$ & $\triangleq$ & $\{V_i, E_i, \mathcal{L}\}$, $i$-th graph in database\\
        $V_i$ & $\triangleq$ & $\{v_{i,1}, v_{i,2},..., v_{i,|V_i|}\}$, the vertices in $G_i$\\
        $E_i$ & $\triangleq$ &$\{e_{i,1}, e_{i,2},..., e_{i,|E_i|}\}$, the edges in $G_i$\\
        $Q$ & $\triangleq$ & $\{V_Q, E_Q, \mathcal{L}\}$, the query graph\\
        $\mathcal{L}$ & $\triangleq$ &labelling function for vertices and edges\\
        $\mathcal{L}_V$ & $\triangleq$ &the set of all possible vertex labels\\
        $\mathcal{L}_E$ & $\triangleq$ &the set of all possible edge labels\\
        $\varepsilon$ & $\triangleq$ &the \emph{virtual} label\\
        $\hat\tau$ & $\triangleq$ &the similarity threshold\\
        
        \multicolumn{3}{c}{}\\[-4mm]
        \hline
        \multicolumn{3}{c}{}\\[-3mm]
        $seq$ & $\triangleq$ & a graph edit operation (GEO) sequence\\
        $SEQ$ & $\triangleq$ & set of all GEO sequences of minimal lengths \\
        $GED$ & $\triangleq$ & Graph Edit Distance\\  
        
        \multicolumn{3}{c}{}\\[-4mm]
        \hline
        \multicolumn{3}{c}{}\\[-3mm]
        $B(v)$ & $\triangleq$ & the branch rooted at vertex $v$\\
        $\mathcal{L}(v)$ & $\triangleq$ & the label of vertex $v$ \\
        $N(v)$ & $\triangleq$ & sorted multiset of labels of edges adjacent to $v$ \\
        $B_{G_i}$ & $\triangleq$ &sorted multiset of all branches in graph $G_i$\\
        $GBD$ & $\triangleq$ & Graph Branch Distance\\
        
        \multicolumn{3}{c}{}\\[-4mm]
        \hline
        \multicolumn{3}{c}{}\\[-3mm]
        $G^{\{k\}}$ & $\triangleq$ &extended graph of $G$ with extension factor $k$\\
        $G_1',G_2'$ & $\triangleq$ &$G_1^{\{|V_2|-|V_1|\}}$, $G_2^{\{0\}}$ when $|V_1| \le |V_2|$\\
        \hline\end{tabular}
    \vspace{-25pt}
\end{table}

The intuition of introducing GBD is as follows. The state-of-the-art method \cite{Zheng:2013hh} for pruning graph similarity search results assumes that the difference between branches of two graphs has a close relation to their GED. Therefore, in this paper, we aim to use GBD to closely estimate the GED of two graphs.

Example \ref{example-gbd-computation} below illustrates the process of computing GBD.

\begin{example} \label{example-gbd-computation}
    Assume that we have two graphs $G_1$ and $G_2$ as shown in Figure \ref{fig-intro-ged-example}. The branches rooted at the vertices in graphs $G_1$ and $G_2$ are as follows.
    \begingroup
    \setlength{\abovedisplayskip}{2pt}
    \setlength{\belowdisplayskip}{2pt}
    \setlength{\abovedisplayshortskip}{2pt}
    \setlength{\belowdisplayshortskip}{2pt}
    \begin{align*}
    &B(v_1)=\{A; y,y\},  B(v_2)=\{C; y,z\},  B(v_3)=\{B; y,z\}; \\
    &B(u_1)=\{B; x,z\},  B(u_2)=\{A; y\};\\
    &B(u_3)=\{A; x\}, B(u_4)=\{C; y,z\}.
    \end{align*}
    \endgroup
    
    The sorted multisets of branches in $G_1$ and $G_2$ are:
    \begingroup
    \setlength{\abovedisplayskip}{2pt}
    \setlength{\belowdisplayskip}{2pt}
    \setlength{\abovedisplayshortskip}{2pt}
    \setlength{\belowdisplayshortskip}{2pt}
    \begin{align*}
    {B}_{G_1}&=\{{B}(v_1), {B}(v_3), {B}(v_2)\}; \\
    {B}_{G_2}&=\{{B}(u_3), {B}(u_2), {B}(u_1), {B}(u_4)\}.
    \end{align*}
    \endgroup
    
    Therefore, according to Definition \ref{def-gbd}, we can obtain the graph branch distance (GBD) between graphs $G_1$ and $G_2$ by applying the Equation (\ref{equal-def-gbd}), which is:
    \begingroup
    \setlength{\abovedisplayskip}{2pt}
    \setlength{\belowdisplayskip}{2pt}
    \setlength{\abovedisplayshortskip}{2pt}
    \setlength{\belowdisplayshortskip}{2pt}
    \begin{align*}
    GBD(G_1,G_2)=\max\{|B_{G_1}|, |B_{G_2}|\}-|{B}_{G_1}\cap {B}_{G_2}|=3,
    \end{align*}
    \endgroup
    where $|B_{G_1}|=3$ and $|B_{G_2}|=4$ are the sizes of multisets $B_{G_1}$ and $B_{G_2}$, respectively. In addition, according to Definition \ref{def-branch-iso}, the only pair of isomorphic branches between multisets $B_{G_1}$ and $B_{G_2}$ is $B(v_2) \simeq B(u_4)$. Therefore, the intersection set of multisets $B_{G_1}$ and $B_{G_2}$ is $\{B(v_2)\}$, whose size is $|{B}_{G_1}\cap {B}_{G_2}| = |\{B(v_2)\}| = 1$.
\end{example}

Note that the time cost of computing the \emph{size} of intersection of
two \emph{sorted} multisets is $\max\{m_1, m_2\}$ \cite{zeng2009comparing},
where $m_1$ and $m_2$ are the sizes of two multisets, respectively.
Therefore, the GBD between query graph $Q$ and any 
graph $G \in D$ can be computed in time:
\begingroup
\setlength{\abovedisplayskip}{2pt}
\setlength{\belowdisplayskip}{2pt}
\setlength{\abovedisplayshortskip}{2pt}
\setlength{\belowdisplayshortskip}{2pt}
\begin{align}
\textstyle \sum_i^n{d_i} = O(nd),
\end{align}
\endgroup
where $n=\max\{|V_{Q}|, |V_{G}|\}$, $d_i$ is the degree of $i$-th compared
vertex in $G$, and $d$ is the average degree of graph $G$.

The GBD defined in this section is utilized to model the graph edit process
and further leveraged for estimating the graph edit distance (GED) in Section \ref{sec-prob-relation-ged-gbd}. 

\vspace{-2ex}
\section{Extended Graphs}\label{sec-extended-graphs}
\vspace{-0.4em}

In this section, we reduce the number of graph edit operation types that need to be considered by extending graphs with \emph{virtual} vertices and edges, which helps to simplify our probabilistic model in Section \ref{sec-prob-relation-ged-gbd}. Moreover, we show that the GED and the GBD between the extended graphs stay the same as the GED and the GBD between the original ones, respectively.

\begin{figure}[!t]
    \centering
    \vspace{-25pt}
    \captionsetup[subfigure]{labelformat=empty}
    \scalebox{0.95}{
        \subfloat[$G_1^{\{1\}}$]{
            \begin{tikzpicture}[
            vertex/.style={draw,circle,text width=5pt,align=center},
            virtual/.style={draw,circle,densely dashed,text width=6pt,align=center},
            tag/.style={text width=5pt,align=center}
            ]
            \node[vertex] (v1) at (0,1.41) {A};
            \node[tag] (t1) at (-0.7,1.41) {$v_1$};
            \node[vertex] (v2) at (0,0) {C};
            \node[tag] (t2) at (-0.7,0) {$v_2$};
            \node[vertex] (v3) at (2,0) {B};
            \node[tag] (t3) at (2.6,0) {$v_3$};
            \node[virtual] (v4) at (2,1.41) {$\varepsilon$};
            \node[tag] (t3) at (2.57,1.41) {$v_4$};
            \path (v1) edge[-] (v2);
            \path (v3) edge[-] (v2);
            \path (v1) edge[-] (v3);
            \path (v1) edge[densely dashed] (v4);
            \path (v2) edge[densely dashed] (v4);
            \path (v3) edge[densely dashed] (v4);
            \node[tag] (te1) at (-0.2,0.7) {$y$};
            \node[tag] (te2) at (0.7,0.7) {$y$};
            \node[tag] (te3) at (1,0.15) {$z$};
            \end{tikzpicture}
        }
        \hspace{2em}
        \subfloat[$G_2^{\{0\}}$]{
            \begin{tikzpicture}[
            vertex/.style={draw,circle,text width=5pt,align=center},
            tag/.style={text width=5pt,align=center}
            ]
            \node[vertex] (v1) at (0,1.41) {B};
            \node[tag] (t2) at (-0.7,1.41) {$u_1$};
            \node[vertex] (v2) at (2,1.41) {A};
            \node[tag] (t3) at (2.6,1.41) {$u_2$};
            \node[vertex] (v3) at (0,0) {A};
            \node[tag] (t2) at (-0.7,0) {$u_3$};
            \node[vertex] (v4) at (2,0) {C};
            \node[tag] (t3) at (2.6,0) {$u_4$};
            \path (v1) edge[-] (v3);
            \path (v1) edge[-] (v4);
            \path (v2) edge[-] (v4);
            \path (v1) edge[densely dashed] (v2);
            \path (v3) edge[densely dashed] (v2);
            \path (v3) edge[densely dashed] (v4);
            \node[tag] (te1) at (-0.2,0.7) {$x$};
            \node[tag] (te3) at (0.7,0.7) {$z$};
            \node[tag] (te3) at (2.2,0.7) {$y$};
            \end{tikzpicture}
    }}
    \vspace{-5pt}
    \caption{Extended Graphs for Example \ref{example-extended-graph}}
    \label{fig-extended-graph-example}
    \vspace{-4ex}
\end{figure}
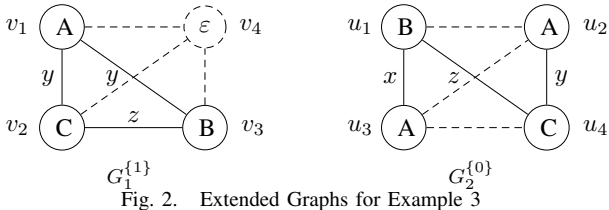

The definition of extended graphs is as follows.

\begin{definition}[Extended Graphs]
    For graph $G$, its extended graph, denoted by $G^{\{k\}}$,
    is generated by first inserting $k$ isolated virtual vertices
    into $G$, and then inserting a virtual edge between each pair
    of non-adjacent vertices in $G$ (including virtual vertices),
    where $k$ is the extension factor.
\end{definition}

\begin{example}\label{example-extended-graph}
    For graphs $G_1$ and $G_2$ in Figure \ref{fig-intro-ged-example},
    their extended graphs $G_1^{\{1\}}$ and $G_2^{\{0\}}$ are shown in Figure \ref{fig-extended-graph-example}.
    The virtual vertices are labelled by $\varepsilon$, while virtual edges are represented by dashed lines. 
    Note that when the extension factor is $0$, no virtual vertex will be inserted.
\end{example}

In particular, for any graph pair $(G_1,G_2)$ where $|V_1| \le |V_2|$,
we define $G_1'=G_1^{\{|V_2|-|V_1|\}}$ and $G_2'=G_2^{\{0\}}$ by
extending $G_1$ and $G_2$ with extension factor $|V_2|-|V_1|$ and $0$, respectively. Previous studies \cite{justice2006binary} \cite{serratosa2014fast} have shown that, for any graph edit operation sequence $seq$ which transforms the extended graph $G_1'$ into $G_2'$ and has the minimal length, every operation in $seq$
is equivalent to a relabelling operation on $G_1'$. Therefore, we only
need to consider graph edit operations of types \textbf{RV} and \textbf{RE} when modeling the graph edit
process of transforming the extended graph $G_1'$ into $G_2'$. 

Finally, given the graphs $G_1$ and $G_2$ (for $|V_1| \le |V_2|$), and their extended graphs $G_1'$and $G_2'$, we have the following Theorems \ref{theorem-preserve-ged} and \ref{theorem-preserve-gbd}, which are utilized in the Section \ref{sec-prob-relation-ged-gbd}.

\begin{theorem}\label{theorem-preserve-ged}
    $GED(G_1,G_2)= GED(G_1',G_2')$
\end{theorem}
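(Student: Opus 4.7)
The plan is to establish $GED(G_1,G_2)=GED(G_1',G_2')$ by exhibiting a length-preserving correspondence between edit sequences on the original pair $(G_1,G_2)$ and edit sequences on the extended pair $(G_1',G_2')$, then taking minima. I would prove each direction as a separate inequality.

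For $GED(G_1',G_2') \le GED(G_1,G_2)$, I would take any $seq \in SEQ$ witnessing $GED(G_1,G_2)$ and translate it operation by operation. Because $G_1'$ already carries $|V_2|-|V_1|$ isolated virtual vertices and a virtual edge between every originally non-adjacent pair, each operation of $seq$ has a natural relabeling counterpart on the extended graphs: \textbf{AV} becomes an \textbf{RV} that changes some virtual label $\varepsilon$ to the desired non-virtual label; \textbf{DV} becomes the reverse \textbf{RV}; \textbf{AE} and \textbf{DE} become \textbf{RE} operations that toggle an edge label between $\varepsilon$ and a non-virtual label; \textbf{RV} and \textbf{RE} stay as themselves. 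The translated sequence lands on $G_2'$ and has the same length as $seq$, giving the desired inequality.

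For the reverse inequality $GED(G_1,G_2) \le GED(G_1',G_2')$, I would invoke the cited fact \cite{justice2006binary,serratosa2014fast} that every operation in a minimum-length edit sequence from $G_1'$ to $G_2'$ is equivalent to a relabeling, and then invert the correspondence above. A relabeling that exchanges $\varepsilon$ with a non-virtual vertex label maps back to \textbf{AV} or \textbf{DV}; a relabeling doing the same for an edge label maps back to \textbf{AE} or \textbf{DE}; and relabelings whose two endpoints are both non-virtual stay as \textbf{RV} or \textbf{RE} on the original graphs. Each extended-graph relabeling contributes exactly one operation on the original side, so the resulting sequence has the same length and witnesses the claimed bound.

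The main obstacle is validating this reverse translation, because Definition~\ref{def-ged} insists that \textbf{DV} act on an isolated vertex, whereas a raw \textbf{RV} on an extended graph has no such restriction. I would resolve this by reordering the minimum-length extended sequence so that every \textbf{RV} sending a non-virtual vertex label to $\varepsilon$ is preceded by the \textbf{RE} operations that virtualize all of its incident edges. This reordering preserves both the endpoint $G_2'$ and the length, because relabelings acting on distinct elements commute, and because the target $G_2'$ cannot contain a non-virtual edge incident to an $\varepsilon$-labeled vertex (by construction, virtual and non-virtual edges in an extended graph are determined by the adjacency pattern of the original graph). Once the sequence has been reordered in this manner, each deletion in the translated original sequence acts on an isolated vertex as required, and combining the two inequalities yields the stated equality.
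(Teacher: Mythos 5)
Your proof is correct in substance but takes a genuinely different route from the paper's. The paper dispatches Theorem \ref{theorem-preserve-ged} in three lines: it applies the triangle inequality to the chains $G_1 \to G_1' \to G_2' \to G_2$ and $G_1' \to G_1 \to G_2 \to G_2'$, and then observes that $GED(G_1,G_1') = GED(G_2,G_2') = 0$ because inserting a vertex or edge carrying the virtual label $\varepsilon$ is not counted as an edit operation (virtual elements ``do not actually exist''), so the two inequalities collapse to equality. Your argument instead builds the explicit length-preserving, operation-by-operation translation between edit sequences on $(G_1,G_2)$ and relabelling sequences on $(G_1',G_2')$, which is more self-contained: it makes concrete the correspondence (\textbf{AV}/\textbf{DV} $\leftrightarrow$ \textbf{RV} toggling $\varepsilon$, \textbf{AE}/\textbf{DE} $\leftrightarrow$ \textbf{RE} toggling $\varepsilon$) that the paper only gestures at via the citations to \cite{justice2006binary} and \cite{serratosa2014fast}, and your reordering argument for the isolated-vertex precondition of \textbf{DV} is exactly the kind of detail those citations are hiding. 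The trade-off is that your route has two loose ends the paper's avoids: in the forward direction you implicitly need the minimal sequence to never hold more than $|V_2|-|V_1|$ net added vertices at any prefix (otherwise $G_1'$ runs out of virtual vertices to relabel), which requires a short argument that no minimal sequence contains both \textbf{AV} and \textbf{DV}; and in the reverse direction the same reordering you apply for \textbf{DV} is also needed symmetrically for \textbf{AV}, since an \textbf{RV} turning $\varepsilon$ into a real label must precede the \textbf{RE} operations realizing edges incident to that vertex, or the translated original-graph sequence would add an edge at a nonexistent endpoint. Both are fixable by the same commutation argument you already give, so I would call these presentational gaps rather than errors; the paper's metric-space argument simply buys immunity to them at the cost of leaning entirely on the convention that $\varepsilon$-labelled elements are free.
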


\begin{proof}
    Please refer to Appendix \ref{Appendix-Proof-Theorem-1}.
\end{proof}

\begin{theorem}\label{theorem-preserve-gbd}
    $GBD(G_1,G_2)= GBD(G_1',G_2')$
\end{theorem}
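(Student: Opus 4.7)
My plan is to unpack both sides of the claimed equality using the definition of GBD and then establish a multiplicity-preserving correspondence between the isomorphic branch pairs in the original graphs and those in the extended graphs.

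First I would handle the easy outer term. Since $|V_1| \le |V_2|$, we have $|V_1'| = |V_1| + (|V_2| - |V_1|) = |V_2|$ and $|V_2'| = |V_2|$, so $\max\{|V_1'|, |V_2'|\} = |V_2| = \max\{|V_1|, |V_2|\}$. Therefore it suffices to prove that $|B_{G_1} \cap B_{G_2}| = |B_{G_1'} \cap B_{G_2'}|$ as multiset intersection sizes.

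Next I would analyze how extension transforms a single branch. For a non-virtual vertex $v$ in $G_i$ with original branch $B(v) = \{\mathcal{L}(v), N(v)\}$, the extension process inserts a virtual edge from $v$ to every currently non-adjacent vertex, so its new branch $B'(v)$ has the same root label $\mathcal{L}(v)$ and an edge multiset equal to $N(v)$ augmented with exactly $(|V_2| - 1 - |N(v)|)$ copies of the virtual label $\varepsilon$. For any freshly inserted virtual vertex in $G_1'$, the branch has root label $\varepsilon$ and $(|V_2|-1)$ copies of $\varepsilon$ in its neighbour multiset. The key observation I would then prove is: given $v \in V_1$ and $u \in V_2$, we have $B'(v) \simeq B'(u)$ if and only if $B(v) \simeq B(u)$. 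The ``only if'' direction uses that $\varepsilon \notin \mathcal{L}_V \cup \mathcal{L}_E$, so the non-virtual parts of the edge multisets must coincide independently, forcing $N(v) = N(u)$ and $\mathcal{L}(v) = \mathcal{L}(u)$. The ``if'' direction uses that $B(v) \simeq B(u)$ implies equal degrees, hence equal numbers of padding $\varepsilon$'s on each side, so the augmented multisets still match.

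Then I would rule out spurious matches involving virtual vertices: any branch rooted at a virtual vertex of $G_1'$ has root label $\varepsilon$, but every branch in $B_{G_2'}$ (which comes from $G_2^{\{0\}}$, with no added virtual vertices) has a non-virtual root label; hence such branches can never participate in the intersection. Consequently, the natural bijection $v \mapsto B'(v)$ between $V_i$ and the original-vertex branches of $G_i'$ restricts to a bijection between matching pairs of $B_{G_1} \cap B_{G_2}$ and matching pairs of $B_{G_1'} \cap B_{G_2'}$, and this bijection preserves multiplicities because it is pointwise on vertices. Combining this with the equality of the max terms yields the theorem. The only delicate point I expect is being careful about the multiset (rather than set) semantics of the intersection, so I would phrase the final step in terms of a maximum matching of isomorphic branches, which by the bijection above has the same size in both settings.
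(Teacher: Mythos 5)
Your proposal is correct and follows essentially the same route as the paper's proof: both reduce the claim to the facts that the $\max$ terms coincide and that branches rooted at virtual vertices can never participate in the multiset intersection. You are in fact slightly more careful than the paper, which silently identifies the intersection of the padded non-virtual branches of $G_1'$ and $G_2'$ with $|B_{G_1}\cap B_{G_2}|$, whereas you explicitly verify that padding with $\varepsilon$-labelled edges preserves and reflects branch isomorphism because isomorphic branches have equal degree and hence receive the same number of $\varepsilon$ paddings.
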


\begin{proof}
    Please refer to Appendix \ref{Appendix-Proof-Theorem-2}.
\end{proof}

Note that the extended graph is only a conceptual model for reducing the number of graph edit operation types that need to be considered. According to Theorems \ref{theorem-preserve-ged} and \ref{theorem-preserve-gbd}, whenever the values of $GED(G_1',G_2')$ and $GBD(G_1',G_2')$ are required, we can instead calculate $GED(G_1,G_2)$ and $GBD(G_1,G_2)$. Therefore, in practice, we do not actually convert graphs into their extended versions, and the calculations of GEDs and GBDs are still conducted on original graphs rather than the extended ones, which means that there is no overhead for creating and maintaining extended graphs.

\vspace{-0.4em}
\section{Our Probabilistic Model} \label{sec-prob-relation-ged-gbd}
\vspace{-0.4em}

In this section, we aim to solve the stated graph similarity search problem
by estimating GED from GBD in a 
probabilistic manner. According to Theorems \ref{theorem-preserve-ged} and \ref{theorem-preserve-gbd}, the relation between original graphs' $GED$ and $GBD$ must be the same with the relation between extended graphs' $GED$ and $GBD$. Therefore, as discussed in Section \ref{sec-extended-graphs}, we only
need to consider graph edit operations of types \textbf{RV} and \textbf{RE} when building our probabilistic model.

To be more specific, we consider two given graphs $G_1$ and $G_2$ where $|V_1| \le |V_2|$ and their extended graphs $G'_1$ and $G'_2$. For simplicity, we denote $GED(G_1',G_2')$ and $GBD(G_1',G_2')$ by $GED$ and $GBD$, respectively.  As mentioned in \mbox{Section \ref{sec-intro}}, we model the formation of graph branch distances (GBDs) as the results of random graph editing processes, and thus we establish a probabilistic connection between GED and GBD. 
The detailed steps to construct our model is as follows.\vspace{-1ex}

\begin{figure}[!t]
    \vspace{-30pt}
    \centering
    \scalebox{0.8}{
        \begin{tikzpicture}[
        bnnode/.style={draw,circle,align=center},
        gednode/.style={draw,ellipse,text width=1cm,align=center,fill=gray!40},
        gbdnode/.style={draw,ellipse,text width=1cm,align=center}
        ]
        \node[gednode] (ged) at (-1.7,0) {$GED$};
        \node[bnnode] (s) at (0,0) {$S$};
        \node[bnnode] (x) at (0.7,1) {$X$};
        \node[bnnode] (y) at (0.7,-1) {$Y$};
        \node[bnnode] (z) at (2,-1) {$Z$};
        \node[bnnode] (r) at (2.7,0) {$R$};
        \node[gbdnode] (gbd) at (4.3,0) {$GBD$};
        \path (s) edge[-latex] (x)
        (s) edge[-latex] (y)
        (x) edge[-latex] (y)
        (x) edge[-latex] (r)
        (y) edge[-latex] (z)
        (z) edge[-latex] (r)
        (ged) edge[-latex] (s)
        (r) edge[-latex] (gbd);
        \end{tikzpicture}}
    \vspace{-6pt}
    \caption{\small{Bayesian Network of Random Variables}}
    \label{fig-bayes-net}
    \vspace{-3ex}
\end{figure}
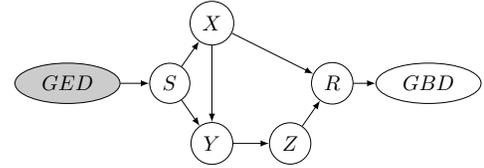

\begin{description}[leftmargin=0pt]
    \item[Step 1:] We consider $GED$ as an observed random variable.
    
    \item[Step 2:] We randomly choose one graph edit operation (GEO) sequence from all possible GEO sequences whose lengths are equal to $GED$. We define a random variable $X(\omega)$, where $\omega$ is a particular choice of operation sequence from $SEQ$, and $S(\omega)=s$ iff $\omega$ choose the sequence with ID $s$, that is, $seq_s$.
    
    \item[Step 3:] We model the numbers of \textbf{RV} and \textbf{RE} operations in the sequence chosen in Step 2 as random variables $X(s)$ and $Y(s)$, respectively, where $X(s)=x$ iff the number of operations with type \textbf{RV} in $seq_s$ is $x$, and $Y(s)=y$ iff the number of operations with type \textbf{RE} in $seq_s$ is $y$. Note that, when given $GED=\tau$ and $X=x$, we always have $Y=\tau-x$.
    
    \item[Step 4:] We define random variables $Z(y)$ as the random variable where $y$ is a particular value of $Y$, and $Z(y)=m$ iff $Y(s)=y$ and the number of vertices covered by relabeled edges is $m$ when conducting $seq_s$. In addition, we define $R(x,m)$ as the random variable where $x$ and $m$ are particular values of $X$ and $Z$, respectively. That is, $R(x,m)=r$ iff $X=x$, $Z=m$ and the number of vertices either relabelled or covered by relabelled edges is $r$.
    
    \hspace*{1em} The reason we conduct Step 4 is because we want to model branch variations by the number of branches rooted at the vertices either relabelled or covered by relabelled edges, i.e., the random variable $R$.
    
    \item[Step 5:] We consider $GBD$ as the random variable dependent on $R$, where their relation is proved in Appendix \ref{proof-lemma-cal-omega3}.
\end{description}

The random variables defined in the five-step model above are listed in Table \ref{table-notation-random}, and their relations among are represented by a Bayesian Network, as shown in Figure \ref{fig-bayes-net}.
We use Example \ref{example-prob-model-idea} below to better illustrate the key idea of our model.

\begin{table}[!t]
    \caption{Notation of Random Variables}
    \label{table-notation-random}
    \vspace{-10pt}
    \centering
    \begin{tabular}{|p{0.3em}p{0.3em}p{0.85\linewidth}|}
        \hline
        $S$ & $\triangleq$ & Choice of operation sequence \\
        $X$ & $\triangleq$ & Number of relabeled vertices \\
        $Y$ & $\triangleq$ & Number of relabeled edges \\
        $Z$ & $\triangleq$ & Number of vertices covered by relabeled edges \\
        $R$ & $\triangleq$ & Number of vertices either relabeled or covered by relabeled edges \\
        \hline
    \end{tabular}
    \vspace{-20pt}
\end{table}

\begin{example} \label{example-prob-model-idea}
    Given two extended graphs $G_1'$ and $G_2'$, as shown in Figure \ref{fig-prob-model-idea}, where the virtual edges are represented by dashed lines. 
    The minimal number of graph edit operations 
    to transform $G_1'$ into $G_2'$ is $2$, and the set of all possible graph edit operation sequences with the minimal length $2$ is:
    \begingroup
    \setlength{\abovedisplayskip}{2pt}
    \setlength{\belowdisplayskip}{2pt}
    \setlength{\abovedisplayshortskip}{2pt}
    \setlength{\belowdisplayshortskip}{2pt}
    $$SEQ=\{\{ op_1,op_2 \}_1, \{ op_2,op_1 \}_2, \{ op_3,op_4 \}_3, \{ op_4,op_3 \}_4\}$$
    \endgroup
    
    where the subscript of each sequence is its ID, and

    \vspace{-13pt}
    \begingroup
    \setlength{\abovedisplayskip}{2pt}
    \setlength{\belowdisplayskip}{2pt}
    \setlength{\abovedisplayshortskip}{2pt}
    \setlength{\belowdisplayshortskip}{2pt}
    \begin{align*}
    op_1&=\text{Relabelling the edge $\langle v_1, v_2\rangle$ to label $y$;} \\
    op_2&=\text{Relabelling the edge $\langle v_1, v_3\rangle$ to label $x$;} \\
    op_3&=\text{Relabelling the vertex $v_2$ to label $C$.} \\
    op_4&=\text{Relabelling the vertex $v_3$ to label $B$.}
    \end{align*}
    \endgroup
    Then, the values of random variables in our model could be:
    
    \begin{enumerate}[leftmargin=*]
        \item First, we consider $GED$ as a random variable, which has the value $2$ in this example.  ($GED=2$)
        \item Second, we randomly choose one sequence from $SEQ$, which is $seq_2=\{op_2,op_1\}$. Therefore, in this case the random variable $S=2$.
        \item Third, the numbers of \textbf{RV} and \textbf{RE} operations in $seq_2$ are $0$ and $2$, respectively. Therefore, the random variables $X=0$ and $Y=2$ in this example. 
        \item Then, after conducting operations in $seq_2$, the number of vertices covered by relabelled edges is $3$, and the number of vertices either relabelled or covered by relabelled edges is $3$. Therefore,  the random variables $Z=3$ and $R=3$.
        \item Finally, $GBD$ is considered to be the random variable, where $GBD=2$ in this example.
    \end{enumerate}
\end{example}

\begin{figure}[!t]
    \centerfloat
    \captionsetup[subfigure]{labelformat=empty}
    \scalebox{0.8}{
        \subfloat[$G_1'$]{
            \begin{tikzpicture}[
            vertex/.style={draw,circle,text width=5pt,align=center},
            tag/.style={text width=5pt,align=center}
            ]
            \node[vertex] (v1) at (1,1.41) {A};
            \node[tag] (t1) at (0.3,1.41) {$v_1$};
            \node[vertex] (v2) at (0,0) {B};
            \node[tag] (t2) at (-0.7,0) {$v_2$};
            \node[vertex] (v3) at (2,0) {C};
            \node[tag] (t3) at (2.6,0) {$v_3$};
            \path (v1) edge[-] (v2);
            \path (v1) edge[-] (v3);
            \path (v2) edge[densely dashed] (v3);
            \node[tag] (te1) at (0.2,0.74) {$x$};
            \node[tag] (te2) at (1.72,0.74) {$y$};
            \end{tikzpicture}
        }
        \hspace{2em}
        \subfloat[$G_2'$]{
            \begin{tikzpicture}[
            vertex/.style={draw,circle,text width=5pt,align=center},
            tag/.style={text width=5pt,align=center}
            ]
            \node[vertex] (v1) at (1,1.41) {A};
            \node[tag] (t1) at (0.3,1.41) {$u_1$};
            \node[vertex] (v2) at (0,0) {B};
            \node[tag] (t2) at (-0.7,0) {$u_2$};
            \node[vertex] (v3) at (2,0) {C};
            \node[tag] (t3) at (2.6,0) {$u_3$};
            \path (v1) edge[-] (v2);
            \path (v1) edge[-] (v3);
            \path (v2) edge[densely dashed] (v3);
            \node[tag] (te1) at (0.2,0.74) {$y$};
            \node[tag] (te2) at (1.72,0.74) {$x$};
            \end{tikzpicture}
    }}
    \vspace{-1ex}
    \caption{\small{Extended Graphs for Example \ref{example-prob-model-idea}}}
    \label{fig-prob-model-idea}
\end{figure}
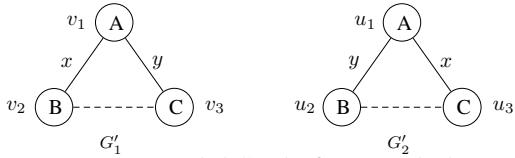

In this paper, we aim to infer the probability distribution of $GED$ when given $GBD$, which is essentially to calculate the following probability for given constants $\hat{\tau}$ and $\varphi$:
\begingroup
\setlength{\abovedisplayskip}{2pt}
\setlength{\belowdisplayskip}{2pt}
\setlength{\abovedisplayshortskip}{2pt}
\setlength{\belowdisplayshortskip}{2pt}
\begin{align}\label{equal-ged-condon-gbd-origin}
&Pr[GED \le \hat{\tau} \mid GBD=\varphi] \nonumber \\
=&\textstyle \sum_{\tau=0}^{\hat{\tau}}Pr[GED = {\tau} \mid GBD=\varphi]
\end{align}
\endgroup

By applying Bayes' Rule, we have:
\begingroup
\setlength{\abovedisplayskip}{2pt}
\setlength{\belowdisplayskip}{2pt}
\setlength{\abovedisplayshortskip}{2pt}
\setlength{\belowdisplayshortskip}{2pt}
\begin{align}\label{equal-ged-condon-gbd-bayes-formula}
Pr[GED \le \hat{\tau} \mid GBD=\varphi]
=\textstyle \sum_{\tau=0}^{\hat{\tau}} \Lambda_1 \cdot \frac{\Lambda_3}{\Lambda_2},
\end{align}
\endgroup
where
\begingroup
\setlength{\abovedisplayskip}{2pt}
\setlength{\belowdisplayskip}{2pt}
\setlength{\abovedisplayshortskip}{2pt}
\setlength{\belowdisplayshortskip}{2pt}
\begin{align}\label{equal-lambdas-origin}
&\Lambda_1(G_1',G_2';\tau, \varphi) = Pr[GBD=\varphi \mid GED = \tau] \\
&\Lambda_2(G_1',G_2';\varphi) = Pr[GBD = \varphi] \\
&\Lambda_3(G_1',G_2';\tau) = Pr[GED = \tau]
\end{align}
\endgroup

Therefore, the problem to solve becomes calculating the values of  $\Lambda_1$, $\Lambda_2$ and $\Lambda_3$ when given extended graphs $G_1'$ and $G_2'$, and constants $\tau$ and $\varphi$.
In this following three subsections \ref{section-lambda-1}, \ref{section-lambda-2} and \ref{section-lambda-3}, we illustrate how to utilize our model to calculate $\Lambda_1$, $\Lambda_2$ and $\Lambda_3$, respectively. 

\vspace{-0.3em}
\subsection{Calculating $\mathbf{\Lambda_1}$} \label{section-lambda-1}
\vspace{-0.3em}

In this subsection, we aim to calculate $\Lambda_1=Pr[GBD=\varphi \mid GED = \tau]$. The key idea to calculate $\Lambda_1$ is to expand its formula by applying Chain Rule \cite{wiki:chain-rule} according to the dependency relationship between random variables in our model, where the expanded formula is:
\begingroup
\setlength{\abovedisplayskip}{2pt}
\setlength{\belowdisplayskip}{2pt}
\setlength{\abovedisplayshortskip}{2pt}
\setlength{\belowdisplayshortskip}{2pt}
\begin{align} \label{equal-lambda-1-expansion}
\textstyle \Lambda_1 = \sum_{x}\Omega_1 \sum_{m}\Omega_2 \sum_{r} \Omega_3 \cdot \Omega_4
\end{align}
\endgroup

where
\begingroup
\setlength{\abovedisplayskip}{2pt}
\setlength{\belowdisplayskip}{2pt}
\setlength{\abovedisplayshortskip}{2pt}
\setlength{\belowdisplayshortskip}{2pt}
\begin{align} \label{equal-lambda-1-expansion-omegas}
\Omega_1 & \textstyle =\frac{1}{N} \sum_{s}Pr[X=x, Y=\tau-x \mid S=s] \\
\Omega_2 & \textstyle =Pr\left[Z=m \mid Y=\tau-x\right] \\
\Omega_3 & \textstyle = Pr\left[GBD=\varphi \mid R=r\right] \\
\Omega_4 &\textstyle=Pr\left[R=r \mid X=x, Z=m\right] 
\end{align}
\endgroup

Please refer to Appendix \ref{Appendix-Closed-Forms-of-Equations} for the closed forms of $\Omega_1$, $\Omega_2$, $\Omega_3$, and $\Omega_4$. 
Due to the space limitation, please refer to Appendix \ref{proof-theo-all-trans} for the proof of Equation (\ref{equal-lambda-1-expansion}).

\vspace{-2ex}
\subsection{Calculating $\mathbf{\Lambda_2}$} \label{section-lambda-2}
\vspace{-0.4em}

In this subsection, we aim to calculate $\Lambda_2$,
which is essentially to infer the \emph{prior distributions} of GBDs. In practice, we pre-compute the prior distribution of GBDs without knowing the query graph $Q$ in the graph similarity search problem.
This is because in most real-world scenarios \cite{zeng2009comparing} \cite{Zheng:2013hh}, 
the query graph $Q$ ofter comes from the same population as graphs in database $D$. As a counter example, it is 
unusual to use a query graph of protein structure to search for similar graphs
in social networks, and vice versa.
Therefore, we assume that GBDs between $Q$ and
each graph $G$ in database $D$, follow the same prior distributions as those among all graph pairs in $D$.

\begin{figure}[!t]
    \centerfloat
    \minipage{0.24\textwidth}
    \includegraphics[width=\linewidth]{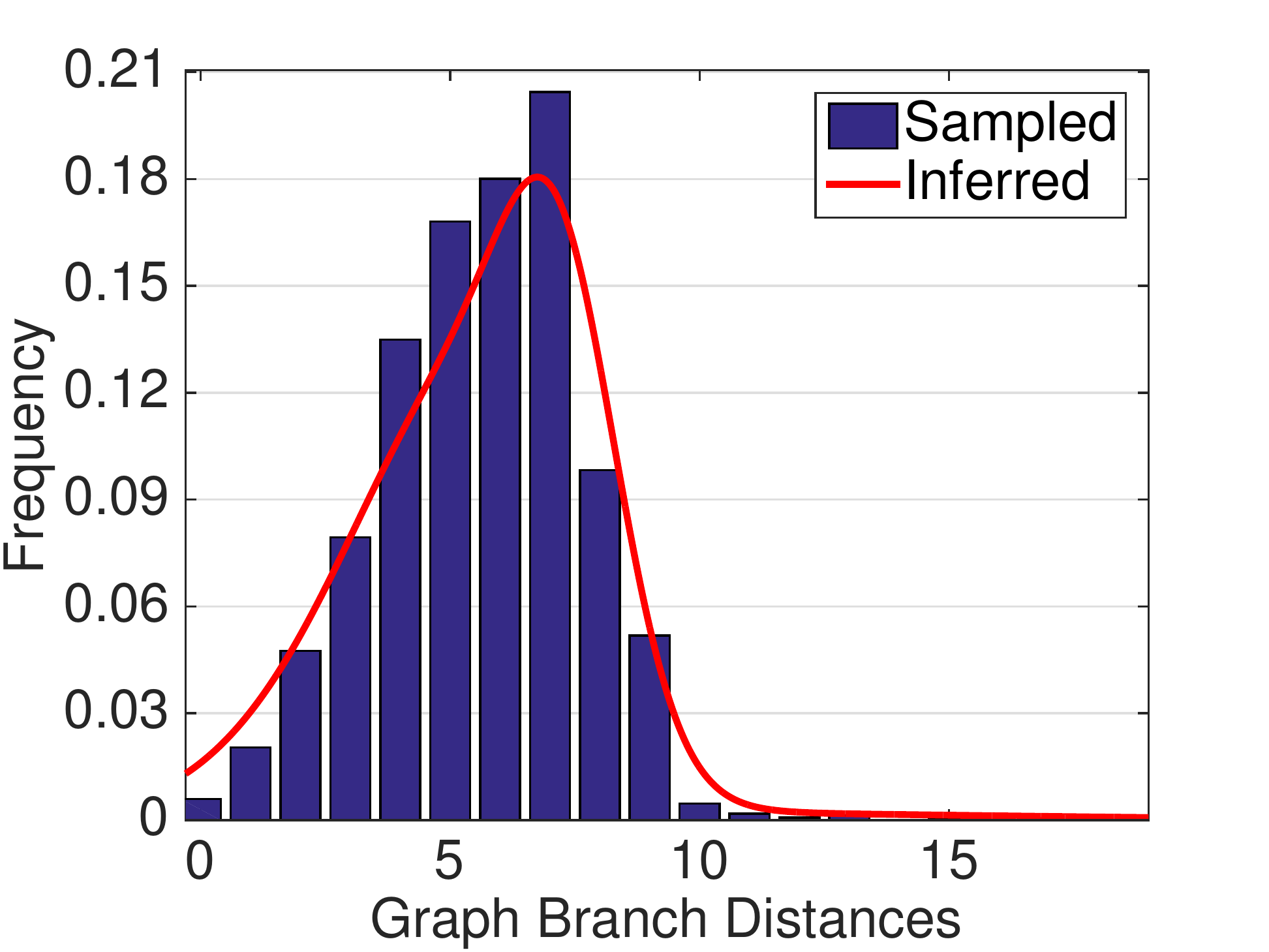}
    \vspace{-20pt}
    \caption{\small{Inferred prior distribution of GBDs on Fingerprint data set}}\label{fig-prior-gbd}
    \endminipage
    \hspace{10pt}
    \minipage{0.24\textwidth}
    \includegraphics[width=\linewidth]{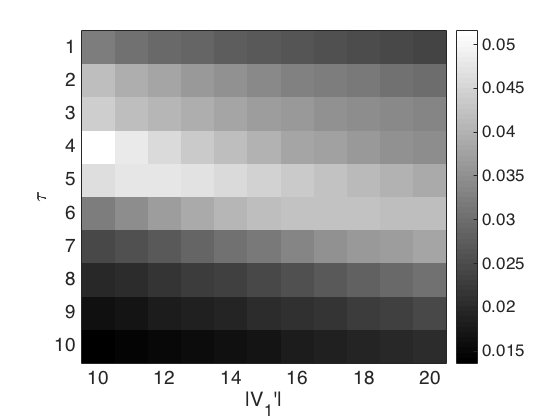}
    \vspace{-20pt}
    \caption{\small{Inferred prior distribution of GEDs on Fingerprint data set}}\label{fig-prior-ged}
    \endminipage\hfill  
    \vspace{-20pt}
\end{figure}

To calculate the prior distribution of GBDs, we first randomly sample $\alpha\%$ of graph pairs from the database $D$, and calculate the GBD between each pair of sampled graphs. Then, we utilize the \emph{Gaussian Mixture Model} (GMM) \cite{day1969estimating} to approximate the distribution of GBDs between all pairs of sampled graphs, whose probability density function is:
\begingroup
\setlength{\abovedisplayskip}{2pt}
\setlength{\belowdisplayskip}{2pt}
\setlength{\abovedisplayshortskip}{2pt}
\setlength{\belowdisplayshortskip}{2pt}
\begin{align}\label{equal-prior-gbd-pdf}
\textstyle f(\phi)=\sum_{i=1}^{K}\pi_{i}\cdot \mathcal{N}(\phi; \mu_i, \sigma_i)
\end{align}
\endgroup
where $K$ is the number of components in the mixture model defined by user, and
$\mathcal{N}$ is the probability density function of the normal distribution. Here, $\pi_{i}$, $\mu_i$ and $\sigma_i$ are parameters of the $i$-th component in the mixture model, which can be inferred from the GBDs over sampled graph pairs. Please refer to \cite{day1969estimating} for the process of inferring parameters in GMM.

Finally, we can compute the prior probability $Pr[{GBD}=\varphi]$ by integrating the probability density function $f(\phi)$ on the adjacent interval of $\varphi$, i.e., $[\varphi-0.5,\varphi+0.5]$.
\begingroup
\setlength{\abovedisplayskip}{2pt}
\setlength{\belowdisplayskip}{2pt}
\setlength{\abovedisplayshortskip}{2pt}
\setlength{\belowdisplayshortskip}{2pt}
\begin{align}\label{equal-prior-gbd}
\textstyle Pr[{GBD}=\varphi]=        \int_{\varphi-0.5}^{\varphi+0.5}\sum_{i=1}^{K}\pi_{i}\cdot \mathcal{N}(\phi; \mu_i, \sigma_i) \text{ } d\phi
\end{align}
\endgroup

Note that this integration technique is commonly used in the field called continuity correction \cite{wiki:contiunecorrection}, which aims to approximate discrete distributions by continuous distributions. In addition, the choice of integral interval $[\varphi-0.5,\varphi+0.5]$ is a common practice in the continuity correction field \cite{degroot2013probability}. 

\begin{example}\label{example-prior-gbd}
    We randomly sample 60,000 pairs of graphs from Fingerprint dataset of IAM Graph Database\cite{riesen2008iam}, where the distribution of GBDs between all pairs of sampled graphs is represented by the blue histogram in Figure \ref{fig-prior-gbd}. We infer the GBD prior distribution of sampled graph pairs, which is represented by the red line in Figure \ref{fig-prior-gbd}. This way, we can
    compute and store the probability $Pr[GBD=\varphi]$ for each possible value of variable $\varphi$ by Equation (\ref{equal-prior-gbd}), where the range of variable $\varphi$ is analyzed in Section \ref{sec-search-framework}.\vspace{-2ex}
\end{example}

\subsection{Calculating $\mathbf{\Lambda_3}$} \label{section-lambda-3}
\vspace{-0.4em}

In this subsection, we focus on calculating $\Lambda_3$, i.e., the prior distribution of GEDs.
Recall that the GED computation is NP-hard \cite{zeng2009comparing}. Therefore, sampling some graph pairs and calculating the GED between each pair is infeasible especially for large graphs, which means that we cannot simply infer the prior distribution of GEDs from sampled graph pairs.

To address this problem, we utilize the Jeffreys prior \cite{jeffreys1946invariant} as the prior distribution of GEDs. The Jeffreys prior is well-known for its non-informative property \cite{wiki:jeffreys}, which means that it provides very little additional information to the probabilistic model.
Therefore, Jeffreys prior is a common choice in Bayesian methods when we know little about the actual prior distribution. Then, according to the definition of Jeffreys prior \cite{jeffreys1946invariant}, the value of probability $Pr[GED=\tau]$ is calculated by:

\begingroup
\setlength{\abovedisplayskip}{-10pt}
\setlength{\belowdisplayskip}{2pt}
\setlength{\abovedisplayshortskip}{2pt}
\setlength{\belowdisplayshortskip}{2pt}
\begin{align}
& \hspace*{-1em} Pr[GED=\tau] \nonumber \\
\label{equal-prior-ged-jeff-def}&\hspace*{-2.3em} \textstyle = \frac{1}{C}  \sqrt{\sum_{\varphi=0}^{2\tau} Pr[GBD=\varphi|GED=\tau] \cdot \mathcal{Z}^2(G_1', G_2'; \tau, \varphi)} \\
\label{equal-prior-ged-jeff-final}&\hspace*{-2.3em} \textstyle = \frac{1}{C}  \sqrt{\sum_{\varphi=0}^{2\tau} \Lambda_1(G_1', G_2'; \tau, \varphi) \cdot \mathcal{Z}^2(G_1', G_2'; \tau, \varphi)},
\end{align}
\endgroup
where Equation (\ref{equal-prior-ged-jeff-def}) comes from the definition of the Jeffreys prior \cite{jeffreys1946invariant}, which is the expected value of function $\mathcal{Z}^2$ with respect to the conditional distribution of variable $GBD$ when given $GED=\tau$. Here, $C$ is a constant for normalization, and the function $\mathcal{Z}$ is defined in the following Equation (\ref{equal-prior-ged-zfunc}).

\begingroup
\setlength{\abovedisplayskip}{-6pt}
\setlength{\belowdisplayskip}{2pt}
\setlength{\abovedisplayshortskip}{2pt}
\setlength{\belowdisplayshortskip}{2pt}

\begin{align}\label{equal-prior-ged-zfunc}
\hspace*{-2.6em} \mathcal{Z}(G_1', G_2'; \tau, \varphi) = \left. \frac{\partial \{\log{Pr[GBD|GED]}\}}{\partial\{GED\}} \right|_{GED=\tau, GBD=\varphi}
\end{align} 
\endgroup
where the symbol $\partial$ represents the partial derivative of a function, and the symbol $F(x)|_{x=k}$ means to substitute value $k$ for variable $x$ in the function $F(x)$. Please refer to Appendix \ref{Appendix-Closed-Forms-of-Equations} for the closed forms of Equation (\ref{equal-prior-ged-jeff-final}).

According to the closed form of Equation (\ref{equal-prior-ged-jeff-final}), we find that the value of probability $Pr[{GED}=\tau]$ only depends on the constant $\tau$ and the size of the extended graph $G'_1$, i.e., ($|V_1'|$).
Therefore, for each data set, we pre-compute the value of function 
$Pr[{GED}=\tau]$ for each possible value of $\tau$ and $|V_1'|$, and store these pre-computed values in a matrix for quick looking up when searching similar graphs. In Section \ref{sec-search-framework}, we discuss the ranges of $\tau$ and $|V_1'|$ in detail. Note that, if there are $k_1$ possible values of $\tau$ and $k_2$ possible values of $|V_1'|$, then the normalization constant $C$ in Equations (\ref{equal-prior-ged-jeff-def}) and (\ref{equal-prior-ged-jeff-final}) will be $C=1/(k_1\cdot k_2)$.

\begin{example}\label{example-prior-ged}
    We show part of the Jeffreys prior distribution of GEDs on Fingerprint data set \cite{riesen2008iam} in Figure \ref{fig-prior-ged} as an example, where the x-axis represents values of $|V_1'|$,
    and the y-axis represents values of $\tau$. The gray scale of each $1\times 1$ block in Figure \ref{fig-prior-ged} represents the corresponding value of $Pr[{GED}=\tau]$.
\end{example}

\newlength{\textfloatsepsave} 
\setlength{\textfloatsepsave}{\textfloatsep} 
\setlength{\textfloatsep}{0pt}
\begin{algorithm}[!t]
    \caption{Graph Similarity Search with Graph Branch Distance Approximation (GBDA)}\label{algo-similarity-search}
    \begin{algorithmic}
        \Require a query graph $Q$, a graph database $D$,\\ 
        a similarity threshold $\hat{\tau}$, and a probability threshold $\gamma$
        \Ensure the search result $D_0$
        \For{\textbf{each} graph $G \in D$}
        \State{\textbf{Step 1}*: Pre-compute $\Lambda_2$ and $\Lambda_3$ for all possible inputs}
        \State{\textbf{Step 2}: Calculate $GBD(Q',G')$ by Definition \ref{def-gbd}.}
        \State{\textbf{Step 3}: Given $GBD(Q',G') = \varphi$, we calculate}
        \begingroup
        \setlength{\abovedisplayskip}{1pt}
        \setlength{\belowdisplayskip}{1pt}
        \setlength{\abovedisplayshortskip}{1pt}
        \setlength{\belowdisplayshortskip}{1pt}
        \begin{align*}
        \Phi=&Pr[GED(Q, G) \le \hat{\tau} \mid GBD(Q, G)=\varphi] \nonumber \\
        =&\sum_{\tau=0}^{\hat{\tau}}\Lambda_1(Q',G';\tau, \varphi) \cdot \frac{ \Lambda_3(Q',G';\tau)}{\Lambda_2(Q',G';\varphi)}
        \end{align*}
        \endgroup
        \State{where $\Lambda_1(Q',G';\tau, \varphi)$ is calculated by Equation (\ref{equal-lambda-1-expansion}).}
        \State{\textbf{Step 4}: Insert $G$ into $D_0$ if $\Phi \ge \gamma$}
        \EndFor
    \end{algorithmic}
\end{algorithm}

\vspace{-0.4em}
\section{Graph Similarity Search with the Probabilistic Model} \label{sec-search-framework}
\vspace{-0.4em}

In this section, we first elaborate on our graph similarity search algorithm (i.e., GBDA)
based on the model derived in the previous section, which consists of 
two stages: offline pre-processing and online querying. Then, we study the
time and space complexity of these two stages in detail.

\vspace{-0.8em}
\subsection{Graph Similarity Search Algorithm}
\vspace{-0.4em}

Given a query graph $Q$, a graph database $D$, a similarity threshold $\hat{\tau}$, and a probability threshold $\gamma$, the search results $D_0$
is achieved by Algorithm \ref{algo-similarity-search}, where $Q'$ and $G'$ are the extended graphs of 
$Q$ and $G$, respectively. Step 1 tagged with symbol * is
pre-computed in the offline stage, as discussed in Sections \ref{section-lambda-2} and \ref{section-lambda-3}. We give the Example \ref{example-algo-1} below to better illustrate the process of Algorithm 1.

\begin{example}\label{example-algo-1}
    Assume that the graph $G_1$ in Figure \ref{fig-intro-ged-example} is the query graph $Q$, and $G_2$ in Figure \ref{fig-intro-ged-example} is a graph in database $D$. Given the similarity threshold $\hat{\tau}=3$ and the probability threshold $\gamma=0.8$, the process of determining whether $G_2$ should be in the search result $D_0$ is as follows.
    
    \begin{enumerate}[leftmargin=*]
        \item First, we pre-compute $\Lambda_2(Q',G_2'; \varphi)$ and $\Lambda_3(Q',G_2'; \tau)$ by inferring the prior distributions of GBDs and GEDs on database $D$, respectively. Since this is a simulated example and there is no concrete database $D$, we assume that $\Lambda_3(Q',G_2'; \tau) / \Lambda_2(Q',G_2';\varphi) \equiv 0.8$ for all possible values of $\tau$ and $\varphi$. 
        \item Second, from Example \ref{example-gbd-computation}, we know $GBD(Q, G_2) = 3$.
        \item Then, according to Equation (\ref{equal-lambda-1-expansion}), we can calculate
        \begingroup
        \setlength{\abovedisplayskip}{1pt}
        \setlength{\belowdisplayskip}{1pt}
        \setlength{\abovedisplayshortskip}{1pt}
        \setlength{\belowdisplayshortskip}{1pt}
        \begin{align*}
        \Phi &\textstyle = \sum_{\tau=0}^{\hat{\tau}}\Lambda_1(Q',G_2';\tau, \varphi) \cdot \frac{ \Lambda_3(Q',G_2';\tau)}{\Lambda_2(Q',G_2';\varphi)} \\
        & = (0 + 0 + 0.5113 + 0.5631) \times 0.8 = 0.8595 > \gamma = 0.8
        \end{align*}
        \endgroup
        \item Therefore, $G_2$ is inserted into the search result $D_0$.
    \end{enumerate}
\end{example}

\vspace{-0.8em}
\subsection{Complexity Analysis of Online Stage} \label{sec-complexity-online}
\vspace{-0.4em}

The online querying stage in our approach includes Steps 2, 3 and 4 in Algorithm \ref{algo-similarity-search}, where Step 4 clearly costs $O(1)$ time for each graph $G$. In addition, from the discussions in Section \ref{sec-branch-distance-definition}, Step 2 costs $O(nd)$ time, where $n=\max\{|V_G|,|V_Q|\}$, and
$d$ is the average degree of graph $G$.

In Step 3, since $\Lambda_2$ and $\Lambda_3$ have already been pre-computed in the offline stage (i.e., Step 1), their values can be obtained in $O(1)$ time for each $\tau \in [0,\hat{\tau}]$ and graph $G \in D$.

Now we focus on analyzing the time complexity of computing $\Lambda_1$ in Step 3 of Algorithm 1.
Let the time for computing $\Omega_1$, $\Omega_2$, $\Omega_3$ and $\Omega_4$ in Equation (\ref{equal-lambda-1-expansion})
be $C_1, C_2, C_3$ and $C_4$, respectively. 
According to Equation (\ref{equal-lambda-1-expansion}), the time for computing $\Lambda_1$ for each $\tau \in [0,\hat{\tau}]$ and graph $G \in D$ is:
\begingroup
\setlength{\abovedisplayskip}{0pt}
\setlength{\belowdisplayskip}{0pt}
\setlength{\abovedisplayshortskip}{0pt}
\setlength{\belowdisplayshortskip}{0pt}
\begin{align} \label{equal-online-complexity-lambda-1-orgin}
\overbrace{xC_1}^{\sum \Omega_1} + \overbrace{xmC_2}^{\sum \Omega_2} + \overbrace{xmr(C_3\cdot C_4)}^{\sum \Omega_3 \cdot \Omega_4}
\end{align}
\endgroup
where $x,m$ and $r$ are the summation subscripts in Equation (\ref{equal-lambda-1-expansion}), and
the ranges of $x,m$ and $r$ are:

\begin{itemize}[leftmargin=*]
    \item $x\in [0, \tau]$. Since $x$ is the number of \textbf{RV} operations, $x$ must not be larger than the number of graph edit operations $\tau$.
    \item $m\in [0, 2\tau]$. Since $m$ is the number of vertices covered by relabelled edges given the relabelled edge number $Y = \tau - x$,
    and each edge can cover at most two vertices, we have $0 \le m \le 2(\tau - x) \le 2\tau$.
    \item $r\in [0, 3\tau]$. Note that $r$ is the number of vertices either relabelled or covered by relabelled edges when
    the relabelled vertex number is $X=x$ and the number of vertices covered by relabelled edges is $Z=m$. Therefore, we have
    $0 \le r \le x + m \le \tau + 2\tau = 3\tau$. 
\end{itemize}

In addition, according to the closed form of Equation (\ref{equal-lambda-1-expansion}) in Appendix \ref{Appendix-Closed-Forms-of-Equations}, 
it is clear that $C_1 = C_3 = C_4 = O(1)$ and $C_2 = O(m) = O(\tau)$. According to Equation (\ref{equal-online-complexity-lambda-1-orgin}), the time of computing $\Lambda_1$ for each $\tau \in [0,\hat{\tau}]$ and graph $G \in D$ is:
\begingroup
\setlength{\abovedisplayskip}{0pt}
\setlength{\belowdisplayskip}{0pt}
\setlength{\abovedisplayshortskip}{0pt}
\setlength{\belowdisplayshortskip}{0pt}
\begin{align}
\overbrace{O(\tau)\vphantom{x^3}}^{\sum \Omega_1} + \overbrace{O(\tau^3)}^{\sum \Omega_2} + \overbrace{O(\tau^3)}^{\sum\Omega_3\cdot \Omega_4} = O(\tau^3)
\end{align}
\endgroup

Moreover, from the above discussions about the ranges of summation subscripts, for any $\tau \in [0, \hat{\tau})$, we have:

\begingroup
\setlength{\abovedisplayskip}{-10pt}
\setlength{\belowdisplayskip}{0pt}
\setlength{\abovedisplayshortskip}{0pt}
\setlength{\belowdisplayshortskip}{0pt}
\begin{align}
&\textstyle\sum_{x=0}^{\hat{\tau}}\sum_{m=0}^{2\hat{\tau}}\Omega_2(m,x,\hat{\tau}) \nonumber \\=&\textstyle\sum_{x=0}^{\hat{\tau}}\sum_{m=0}^{2\hat{\tau}}Pr[Z=m\mid Y =\hat{\tau}-x] \label{equal-omega2-before-4terms} \\
=&\textstyle \sum_{x=0}^{\tau}\sum_{m=0}^{2\tau}Pr[Z=m\mid Y =\tau-x] \nonumber \\
&\textstyle +\sum_{x=0}^{\hat{\tau}-\tau-1}\sum_{m=0}^{2\tau}Pr[Z=m\mid Y =\hat\tau-x] \nonumber \\
&\textstyle +\sum_{x=0}^{\tau}\sum_{m=2\tau+1}^{2\hat{\tau}}Pr[Z=m\mid Y =\hat\tau-x] \nonumber \\
&\textstyle +\sum_{x=0}^{\hat{\tau}-\tau-1}\sum_{m=2\tau+1}^{2\hat{\tau}}Pr[Z=m\mid Y =\hat\tau-x] \label{equal-omega2-4terms} \\
=&\textstyle f(m,x,\hat{\tau}) + \sum_{x=0}^{{\tau}}\sum_{m=0}^{2{\tau}}\Omega_2(m,x,{\tau})
\label{equal-omega2-reduce}
\end{align}
\endgroup
where $f(m,x,\hat{\tau})$ is sum of last three terms in Equation (\ref{equal-omega2-4terms}). Note that Equation (\ref{equal-omega2-4terms}) is a sum on four disjoint two-dimensional intervals whose combination is the sum interval of Equation (\ref{equal-omega2-before-4terms}).

Equation (\ref{equal-omega2-reduce}) means, the value of $\sum_{x,m}\Omega_2(m,x,\tau)$ where $\tau < \hat{\tau}$ have already
been calculated in the process of computing $\sum_{x,m}\Omega_2(m,x,\hat{\tau})$.
Therefore, we can reduce redundant computations by 
only computing $\sum_{x,m}\Omega_2(m,x,\hat{\tau})$ once to obtain
values of $\sum_{x,m}\Omega_2(m,x,\tau)$ for all $\tau < \hat{\tau}$.
Similar conclusions can also be derived for $\sum_{x,m,r}\Omega_3(r,\varphi)\cdot \Omega_4(x,r,m)$, where the detailed proofs are omitted here.

Therefore, the time cost of Step 3 in Algorithm \ref{algo-similarity-search} is:
\begingroup
\setlength{\abovedisplayskip}{0pt}
\setlength{\belowdisplayskip}{0pt}
\setlength{\abovedisplayshortskip}{0pt}
\setlength{\belowdisplayshortskip}{0pt}
\begin{align}
\textstyle \overbrace{O(\hat{\tau}^3)}^{\sum\Omega_2\sum\Omega_3\cdot\Omega_4} + \sum_{\tau=0}^{\hat{\tau}}\{\overbrace{O(\tau)\vphantom{\hat{\tau}^2}}^{\sum\Omega_1}
+\overbrace{O(1)\vphantom{\hat{\tau}^2}}^{\Lambda_2}\}=O(\hat{\tau}^3)
\end{align}
\endgroup
for each graph $G$ in database $D$.

Finally, we can obtain Theorem \ref{theorem-online-time-complexity}.

\begin{theorem} \label{theorem-online-time-complexity}
    The time of the whole online stage is:
    \begingroup
    \setlength{\abovedisplayskip}{0pt}
    \setlength{\belowdisplayskip}{0pt}
    \setlength{\abovedisplayshortskip}{0pt}
    \setlength{\belowdisplayshortskip}{0pt}
    \begin{align}
    \overbrace{O(nd\vphantom{d^2})}^{\text{Step 2}} + \overbrace{O(\hat{\tau}^3)}^{\text{Step 3}} + \overbrace{O(1\vphantom{d^2})}^{\text{Step 4}} = O(nd + \hat{\tau}^3)
    \end{align}
    \endgroup
    for each graph $G$ in database $D$, where $n=\max\{|V_G|,|V_Q|\}$,
    $d$ is the average degree of graph $G$, and $\hat\tau$ is the similarity threshold in the graph similarity search problem.
\end{theorem}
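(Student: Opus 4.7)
The plan is to decompose the online stage into the three algorithmic components (Steps 2, 3, and 4 of Algorithm \ref{algo-similarity-search}), bound each separately, and then sum. Steps 2 and 4 are straightforward: Step 2 computes $GBD(Q',G')$ via the sorted-multiset intersection already analyzed in Section \ref{sec-branch-distance-definition}, which I would cite directly to obtain $O(nd)$; Step 4 is a single comparison of $\Phi$ against $\gamma$ and costs $O(1)$. The bulk of the argument concerns Step 3.

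For Step 3, I would first observe that $\Lambda_2$ and $\Lambda_3$ are tabulated in the offline stage, so each call to them costs $O(1)$ per pair $(\tau,\varphi)$, giving $\hat\tau\cdot O(1)$ total across the outer loop over $\tau\in[0,\hat\tau]$. The nontrivial piece is evaluating $\Lambda_1(Q',G';\tau,\varphi)$ for every $\tau$. Here I would plug in the closed forms from Appendix \ref{Appendix-Closed-Forms-of-Equations} to establish that in Equation (\ref{equal-lambda-1-expansion}) one has $C_1=C_3=C_4=O(1)$ and $C_2=O(\tau)$, and then record the index ranges $x\in[0,\tau]$, $m\in[0,2\tau]$, $r\in[0,3\tau]$ (with the $m$ and $r$ bounds justified from the fact that each relabelled edge touches at most two vertices and the vertex set hit by relabellings has size at most $x+m$). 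A naive per-$\tau$ bound is $O(\tau^3)$, which if summed independently over $\tau$ would yield $O(\hat\tau^4)$, losing a factor of $\hat\tau$.

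The main obstacle is therefore avoiding that extra factor, and this is where I would exploit the reuse observation in Equations (\ref{equal-omega2-before-4terms})--(\ref{equal-omega2-reduce}). The key point is that the partial sum $\sum_{x,m}\Omega_2(m,x,\tau)$ for any $\tau<\hat\tau$ appears as one of the four disjoint two-dimensional pieces of $\sum_{x,m}\Omega_2(m,x,\hat\tau)$, so all required inner sums can be harvested from a single pass that computes the $\tau=\hat\tau$ case. I would then argue analogously (which the paper leaves to the reader) that the same disjoint-strip slicing applies to $\sum_{x,m,r}\Omega_3(r,\varphi)\cdot\Omega_4(x,r,m)$ by partitioning the rectangle $[0,\hat\tau]\times[0,2\hat\tau]\times[0,3\hat\tau]$ along the $\tau$-dependent boundaries; the work is essentially bookkeeping to verify the strips are disjoint and that their union covers the target index set. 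This collapses the aggregate cost of the $\Omega_2$ and $\Omega_3\cdot\Omega_4$ double- and triple-sums to $O(\hat\tau^3)$ once, with only $O(\tau)$ residual work for $\sum_x\Omega_1$ at each $\tau$, contributing at most $\sum_{\tau=0}^{\hat\tau}O(\tau)=O(\hat\tau^2)$.

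Summing the three step costs then gives $O(nd)+\bigl(O(\hat\tau^3)+O(\hat\tau^2)+\hat\tau\cdot O(1)\bigr)+O(1)=O(nd+\hat\tau^3)$ per graph $G\in D$, matching Theorem \ref{theorem-online-time-complexity}. No new probabilistic content is required; the proof is essentially an accounting argument layered on top of the combinatorial bounds on the index ranges and the reuse lemma for $\Omega_2,\Omega_3,\Omega_4$.
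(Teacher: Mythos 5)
Your proposal is correct and follows essentially the same route as the paper's own analysis in Section \ref{sec-complexity-online}: the same per-step decomposition, the same index-range bounds $x\in[0,\tau]$, $m\in[0,2\tau]$, $r\in[0,3\tau]$ with $C_1=C_3=C_4=O(1)$ and $C_2=O(\tau)$, and the same reuse argument via Equations (\ref{equal-omega2-before-4terms})--(\ref{equal-omega2-reduce}) to avoid the extra factor of $\hat\tau$. The only addition is that you make explicit the $O(\hat\tau^4)$ pitfall that the reuse lemma averts, which the paper leaves implicit.
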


\begin{proof}
    Please refer to the discussions above.
\end{proof}

Note that the similarity threshold $\hat{\tau}$ is often 
set as a small value (i.e., $\hat{\tau} \le 10$) and does not increase with the number of vertices $n$ in previous studies \cite{zeng2009comparing}
\cite{Zheng:2013hh}, thus, we can assume that $\hat{\tau}$ is a constant with regard to  $n$ when the graph is sufficiently large.
Moreover, most real-world graphs studied in related works \cite{riesen2009approximate} \cite{riesen2015approximate} 
are \emph{scale-free graphs} \cite{wiki:scale-free-graphs}, whose average degrees $d=O(\log{n})$ as proved in Appendix \ref{append-theorem-scale-free-degree}.

\vspace{-0.6em}
\subsection{Complexity Analysis of Offline Stage} \label{sec-complexity-offline}
\vspace{-0.3em}
The offline pre-processing stage in our approach is Step 1 in Algorithm \ref{algo-similarity-search}, which is essentially to pre-compute the prior distributions of GEDs and GBDs respectively among all pairs of graphs involved in the graph similarity search.

\subsubsection{Complexity Analysis of Computing the Prior Distribution of GBDs}

As discussed in Section 5.1, the prior distributions of GBDs can be pre-computed by the following four steps:
\begin{description}[leftmargin=0pt]
    \item[Step 1.1:] Sample $N$ graph pairs from the database $D$.
    \item[Step 1.2:] Calculate GBD between each sampled graph pairs. 
    \item[Step 1.3:] Learn the Gaussian Mixture Model (GMM) of the GBDs between sampled graph pairs.
    \item[Step 1.4:] Calculate $Pr[GBD=\varphi]$ for each possible value of $\varphi$ by using Equation (\ref{equal-prior-gbd}).
\end{description}

It is clear that Step 1.1 costs $O(N)$ time. From the discussions in Section \ref{sec-branch-distance-definition}, Step 1.2 costs $O(N\cdot nd)$ time, where
$n$ is the maximal number of vertices among the sampled graphs, and $d$ is the average degree of the sampled graphs. The learning process of GMM in Step 1.3 costs $O(N\cdot K\epsilon)$ time\cite{day1969estimating}, where $K$ is the number of components in GMM, and $\epsilon$ is the maximal learning iterations for learning GMM.

As for Step 1.4, since $\varphi$ is the value of GBD, from the definition of GBD, the possible values of $\varphi$ are essentially $\{0,1,2...,n\}$, where $n$ is the maximal number of vertices among the sampled graphs. According to Equation (\ref{equal-prior-gbd}), computing $Pr[GBD=\varphi]$ for each $\varphi$ costs $O(K)$ time, where $K$ is the number of components in GMM derived in Step 1.3. Thus, Step 1.4 costs $O(nK)$ time.

Note that in the Gaussian Mixture Model, the component number $K$ and the maximal learning iterations $\epsilon$ are fixed constants. Therefore, the prior distributions of GBDs can be calculated in time
\begingroup
\setlength{\abovedisplayskip}{0pt}
\setlength{\belowdisplayskip}{0pt}
\setlength{\abovedisplayshortskip}{0pt}
\setlength{\belowdisplayshortskip}{0pt}
\begin{align}
\overbrace{O(N)}^{\text{Step 1.1}} + \overbrace{O(Nnd)}^{\text{Step 1.2}} + \overbrace{O(NK\epsilon)}^{\text{Step 1.3}} + \overbrace{O(nK)}^{\text{Step 1.4}} = O(Nnd)
\end{align}
\endgroup
where $N$ is the number of graph pairs sampled in Step 1.1,
$n$ is the maximal number of vertices among sampled graphs, and $d$ is the average degree of sampled graphs.

Based on the discussions above, the number of pre-computed values of $Pr[GBD=\varphi]$ is at most $n$. Thus, the space cost of storing the prior distribution of GBDs is $O(n)$.

\subsubsection{Complexity Analysis of Computing the Prior Distribution of GEDs}

According to the discussions in Section \ref{section-lambda-3}, computing the prior distribution of GEDs is essentially calculating Equation (\ref{equal-prior-ged-jeff-final}) for each possible values of $\tau$ and $|V_1'|$.

First, from the closed form of Equation (\ref{equal-prior-ged-jeff-final}) in Appendix \ref{Appendix-Closed-Forms-of-Equations}, when $\tau$ and $|V_1'|$ are fixed values, it is clear that computing $\frac{d}{d\tau} \log{\Lambda_1}$ costs the same time as computing ${\Lambda_1}$, which is $O({\hat{\tau}^3})$, where $\hat{\tau}$ is the user-defined similarity threshold. $n$ and $d$ are the maximal number of vertices and the average degree among all graphs, respectively. 

Then, since one graph edit operation can at most change two branches,  when the GED between two graphs is $\tau$, 
the possible GBD values (i.e., $\varphi$ in Equation (\ref{equal-prior-ged-jeff-final})) between these two graphs are $\{0,1,2...,2{\tau}\}$. Therefore, according to Equation (\ref{equal-prior-ged-jeff-final}), the prior probability value of $Pr[GED=\tau]$ can be calculated in time complexity $O(2{\tau}\cdot {{\tau}^3}) = O({\tau}^4)$ when $\tau$ and $|V_1'|$ are fixed values.

Finally, recall that computing the GED prior distribution is essentially calculating Equation (\ref{equal-prior-ged-jeff-final}) for all possible values of $\tau$ and $|V_1'|$, and it is clear that the possible values of $\tau$ are $\{0,1,2...,\hat{\tau}\}$, where $\hat{\tau}$ is the user-defined similarity threshold.
In addition, the possible values of $|V_1'|$ are essentially $\{1,2,...n\}$,  where $n$ is the maximal number of vertices among all graphs involved the graph similarity search. Therefore, the time of calculating the GED prior distribution is:
\begingroup
\setlength{\abovedisplayskip}{2pt}
\setlength{\belowdisplayskip}{2pt}
\setlength{\abovedisplayshortskip}{2pt}
\setlength{\belowdisplayshortskip}{2pt}
\begin{align*}
O(\hat{\tau}\cdot n \cdot {\hat{\tau}^4}) =O(n\hat{\tau}^5)
\end{align*}
\endgroup

According to Section \ref{section-lambda-3}, 
we need to store a matrix whose rows represent possible values of $\tau$,
and columns represent possible values of $|V_1'|$. Therefore, the space cost of storing the prior distribution of GEDs is $ O(\hat{\tau}\cdot n )$.

Finally, we have Theorem \ref{theorem-offline-complexity}.

\begin{theorem} \label{theorem-offline-complexity}
    The time complexity of the offline stage is:
    \begingroup
    \setlength{\abovedisplayskip}{2pt}
    \setlength{\belowdisplayskip}{2pt}
    \setlength{\abovedisplayshortskip}{2pt}
    \setlength{\belowdisplayshortskip}{2pt}
    \begin{align*}
    \overbrace{O(Nnd)\vphantom{\hat{\tau}^5}}^{\text{GBD Prior}}+\overbrace{O(n\hat{\tau}^5)}^{\text{GED Prior}} = O(Nnd+n\hat{\tau}^5)
    \end{align*}
    \endgroup
    and the space complexity of the offline stage is:
    \begingroup
    \setlength{\abovedisplayskip}{2pt}
    \setlength{\belowdisplayskip}{2pt}
    \setlength{\abovedisplayshortskip}{2pt}
    \setlength{\belowdisplayshortskip}{2pt}
    \begin{align*}
    \overbrace{O(n)\vphantom{\hat{\tau}}}^{\text{GBD Prior}} + \overbrace{O(\hat{\tau}\cdot n )}^{\text{GED Prior}} = O(n(1+\hat{\tau}))
    \end{align*}
    \endgroup
    where $N$ is the number of graph pairs sampled in Step 1.1,
    $n$ and $d$ are the maximal number of vertices and the average degree among all the graphs involved in the graph simialrity search, respetively. $\hat{\tau}$ is the user-defined similarity threshold.
\end{theorem}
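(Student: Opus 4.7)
The plan is to treat the two quantities precomputed in Step 1 of Algorithm \ref{algo-similarity-search}, namely the prior distribution of GBDs and the prior distribution of GEDs, as independent sub-tasks and combine their costs additively. For each sub-task, I would walk through the sub-steps already enumerated in Section \ref{sec-complexity-offline} and verify the claimed time and space bounds.

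For the GBD prior I would decompose the work into the four sub-steps listed (sampling, GBD computation on sampled pairs, GMM fitting, tabulation via Equation (\ref{equal-prior-gbd})). Step 1.1 contributes $O(N)$, Step 1.2 contributes $O(N\cdot nd)$ by invoking the $O(nd)$ branch-distance bound from Section \ref{sec-branch-distance-definition}, Step 1.3 contributes $O(NK\epsilon)$ from the cost of EM on a $K$-component GMM, and Step 1.4 contributes $O(nK)$ since there are at most $n+1$ possible values of $\varphi$ and each requires summing $K$ Gaussian densities. Treating $K$ and $\epsilon$ as absolute constants, the dominant term is $O(Nnd)$, and only the $n$ tabulated entries need to be stored, giving space $O(n)$.

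For the GED prior I would invoke the closed form in Appendix \ref{Appendix-Closed-Forms-of-Equations} of Equation (\ref{equal-prior-ged-jeff-final}) together with the observation, already established in Section \ref{sec-complexity-online}, that evaluating $\Lambda_1$ for a fixed pair $(\tau,\varphi)$ takes $O(\hat{\tau}^3)$ time. Since computing the partial derivative $\mathcal{Z}$ in Equation (\ref{equal-prior-ged-zfunc}) has the same asymptotic cost as $\Lambda_1$, each summand in Equation (\ref{equal-prior-ged-jeff-final}) costs $O(\hat{\tau}^3)$. The range of the summation variable $\varphi$ is $\{0,1,\dots,2\tau\}$ because every single graph edit operation alters at most two branches, so at most $2\tau$ branches can differ when the GED is $\tau$. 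Hence computing $Pr[GED=\tau]$ for one pair $(\tau,|V_1'|)$ costs $O(\tau\cdot\hat{\tau}^3)=O(\hat{\tau}^4)$; ranging $\tau$ over $\{0,\dots,\hat{\tau}\}$ and $|V_1'|$ over $\{1,\dots,n\}$ yields the total $O(n\hat{\tau}^5)$. The lookup matrix has $O(n\hat{\tau})$ cells, giving the stated space bound.

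Adding the two contributions yields time $O(Nnd)+O(n\hat{\tau}^5)=O(Nnd+n\hat{\tau}^5)$ and space $O(n)+O(n\hat{\tau})=O(n(1+\hat{\tau}))$, which matches the theorem. The only mildly subtle step is justifying the $\varphi\le 2\tau$ bound used to tighten the inner sum from a naive $O(n)$ to $O(\tau)$; everything else is straightforward bookkeeping that cites the analyses already carried out in the preceding subsections, so I do not expect any real obstacle.
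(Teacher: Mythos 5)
Your proposal is correct and follows essentially the same route as the paper: the paper's ``proof'' simply points back to the step-by-step cost accounting in Section \ref{sec-complexity-offline}, which decomposes the offline stage into the four GBD-prior sub-steps (dominated by the $O(Nnd)$ sampling-and-distance computation) and the GED-prior tabulation over $\tau\in\{0,\dots,\hat{\tau}\}$ and $|V_1'|\in\{1,\dots,n\}$ with the $\varphi\le 2\tau$ bound yielding $O(n\hat{\tau}^5)$, exactly as you argue. The space accounting ($O(n)$ tabulated GBD values plus an $O(\hat{\tau}\cdot n)$ lookup matrix) also matches the paper's.
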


\begin{proof}
    Please refer to the discussions above.
\end{proof}

\begin{table}[!t]
    \vspace{-25pt}
    \caption{Statistics of Data Sets}
    \vspace{-15pt}
    \label{table-stat-dataset}
    \begin{center}
        \begin{tabular}{|c|c|c|c|c|c|c|c|c|c|}
            \hline
            {Data Set} & $|D|$ & $|\mathcal{Q}|$ & $\mathcal{V}_m$ &$\mathcal{E}_m$& $d$ & Scale-free\\
            \hline
            {AIDS} & 1896 & 100 & 95 & 103 & 2.1  & Yes\\
            \hline
            {Finger} & 2159 & 114 & 26 & 26 & 1.7 & Yes\\
            \hline
            {GREC} & 1045 & 55 & 24 & 29 & 2.1 & Yes\\
            \hline
            {AASD} & 37995 & 100 & 93 & 99 & 2.1 & Yes\\
            \hline
            {Syn-1} & 3430 & 70 & 100K & 1M & 9.6 & Yes\\
            \hline
            {Syn-2} & 3430 & 70 & 100K & 1M & 9.4 & No\\
            \hline
        \end{tabular}
    \end{center}
    \small 
    \begin{minipage}{\linewidth}
        Note: $|D|$ is the number of graphs in database $D$. $|\mathcal{Q}|$ is the number of query graphs.
        $\mathcal{V}_m$ and $\mathcal{E}_m$ are the maximal numbers of vertices and edges, respectively, while $d$ is
        the average degree. $K$ means thousand and $M$ means million.
    \end{minipage}
    \vspace{2pt}
\end{table}

\vspace{-0.3em}
\section{Experiments} \label{sec-experiments}
\vspace{-0.3em}

\subsection{Data Sets and Settings}
\vspace{-0.3em}
We first present the experimental data sets for evaluating our approaches. There are 4 real-world data sets (i.e., AIDS,
Fingerprint and GREC from the IAM Graph
Database\cite{riesen2008iam}, and AIDS Antiviral Screen Data (AASD) \cite{data:aads}), and 2 synthetic data sets (i.e., Syn-1 and Syn-2).
The details about the data sets are listed in Table
\ref{table-stat-dataset}. 

The 4 real-world data sets are widely-used for evaluating the performance of GED estimation methods in previous works \cite{riesen2009approximate} \cite{riesen2015approximate} \cite{robles2005graph}. In order to evaluate how well the GED is approximated, we must know the exact value of GED, which is NP-hard to compute \cite{zeng2009comparing}. Specifically, even the state-of-the-art method \cite{DBLP:conf/icde/GoudaH16} cannot compute an exact GED for graphs with 100 vertices within 48 hours on our machine (with 32 Intel E5 2-core, 2.40 GHz CPUs  and 128GB DDR3 RAMs). 

\begin{table}[!t]
    \vspace{-25pt}
    \caption{Costs of computing GBD prior distribution}
    \vspace{-10pt}
    \label{table-offline-gbd-cost}
    \begin{tabular}{|c|c|c|c|c|c|c|c|c|c|}
        \hline
        {Data Set} & AIDS  & Finger &GREC& AASD& Syn-1 & Syn-2\\
        \hline
        Time Costs & 11.1s & 7.5s & 20.6s & 232.4s & 3.8h & 3.2h\\
        \hline
        Space Costs & 0.06kb & 0.04kb & 0.10kb & 1.21kb & 13.3gb & 0.3gb\\
        \hline
    \end{tabular}
\end{table}

\begin{table}[!t]
    \vspace{-10pt}
    \caption{Costs of computing GED prior distribution}
    \vspace{-10pt}
    \label{table-offline-ged-cost}
    \begin{tabular}{|c|c|c|c|c|c|c|c|c|c|}
        \hline
        {Data Set} & AIDS & Finger &GREC& AASD& Syn-1 & Syn-2\\
        \hline
        Time Costs & 70.32h & 16.91h & 15.40h &69.16h& 6.31h  & 6.31h\\
        \hline
        Space Costs & 1.5kb & 0.4kb & 0.4kb & 1.4kb & 0.1kb & 0.1kb\\
        \hline
    \end{tabular}
    \small 
    \begin{minipage}{0.5\textwidth}
        Note: $h$ means hours and $s$ means seconds. $kb$ means KBytes and $gb$ means GBytes.
    \end{minipage}
\end{table}

However, we still manage to evaluate our proposed method on large graphs. To address the problem above, we generate 2 sets of large random graphs (i.e., Syn-1  and Syn-2), where the GED between each pair of graphs is known. Both data sets Syn-1 and Syn-2 contain 7 subsets of graphs, where each subset contains 500 graphs whose numbers of vertices are 1K, 2K, 5K, 10K, 20K, 50K, and 100K, respectively. 
The difference between data sets Syn-1 and Syn-2 is that the graphs in \mbox{Syn-1} satisfy the \emph{scale-free} property \cite{wiki:scale-free-graphs} while graphs in \mbox{Syn-2} are not. The algorithm of generating synthetic graphs with known GEDs is described in Appendix \ref{append-graph-gen-algo}.

Note that, the scale-free property of real data sets is testified by checking whether the degree distributions of the vertices in real data sets follow the power-law distribution, while the scale-free property of Syn-1 data set and the non-scale-free property of Syn-2 data sets are ensured by our algorithm of generating synthetic graphs in Appendix \ref{append-graph-gen-algo}.

For each real data set, we randomly select 5\% graphs as query graphs, while the remaining 95\% graphs constitute the graph database $D$. For each synthetic data set, we randomly select 10 graphs from each of its subset as query graphs.

On real data sets, we evaluate our method with the similarity thresholds $\hat\tau = \{1,2,\allowbreak...,10\}$, which are commonly-used values of the similarity thresholds in previous studies \cite{zeng2009comparing} \cite{Zheng:2013hh}. On the synthetic data sets, we test our method with larger similarity thresholds $\hat\tau = \{10,11,12,\allowbreak...,30\}$ to show that, when the similarity threshold is larger than the commonly-used values (i.e., $\{1,2,\allowbreak...,10\}$), our GBDA method is still more efficient than the competitors on large graphs.

\vspace{-0.6em}
\subsection{Evaluating Offline Stage}
\vspace{-0.3em}

In this subsection, we evaluate the time and space costs of the offline stage in our GBDA approach, which is essentially to pre-compute the prior distributions of GEDs and GBDs, on both real and synthetic data sets.
Tables \ref{table-offline-gbd-cost} and \ref{table-offline-ged-cost} present the time and space costs of estimating GBD and GED prior distributions on different data sets, respectively, where the number of graph pairs sampled to estimate the GBD prior distribution is set to $N=100,000$.

The experimental results generally confirm the complexity analysis in Section \ref{sec-complexity-offline}. Specifically, since the number of sampled graph pairs $N$ and the similarity threshold $\hat{\tau}$ are fixed values in our experiments, the cost of inferring the GBD prior distribution grows with $n$ and $d$, while the cost of estimating the GED prior distribution depends on $n$, where $n$ and $d$ are the maximal number of vertices and the average degree among all the graphs involved in the graph simialrity search, respetively.

\begin{figure*}[!t]
    \vspace{-25pt}
    \minipage{0.24\textwidth}
    \includegraphics[width=\linewidth]{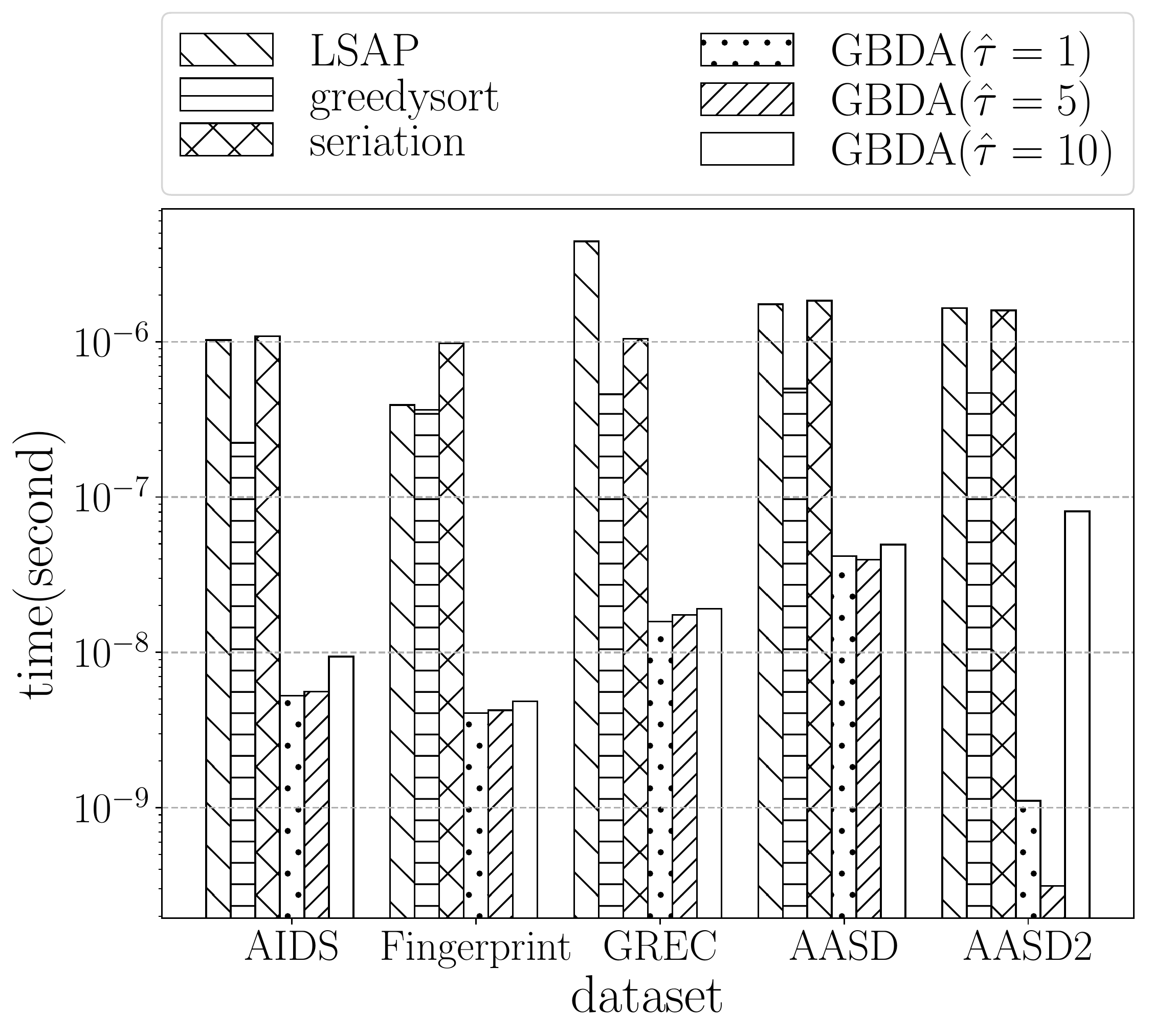}
    \vspace{-25pt}
    \caption{\footnotesize{Time Costs on Real Graphs}}\label{fig-online-time-real}
    \endminipage
    \hfill
    \minipage{0.24\textwidth}
    \includegraphics[width=\linewidth]{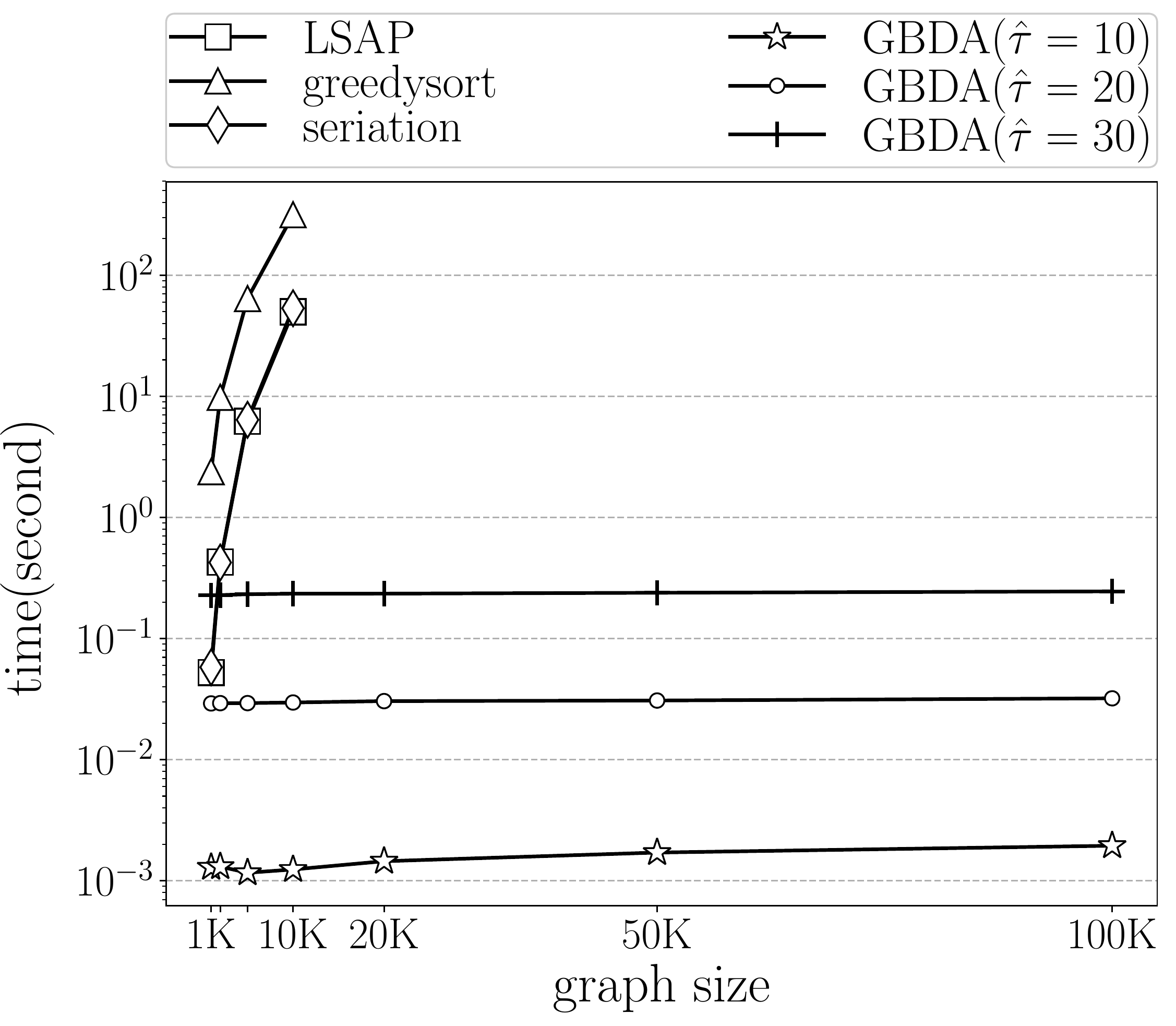}
    \vspace{-25pt}
    \caption{\footnotesize{Time Costs vs. $n$ on \mbox{Syn-1}}}\label{fig-online-time-syn1}
    \endminipage
    \hfill
    \minipage{0.24\textwidth}
    \includegraphics[width=\linewidth]{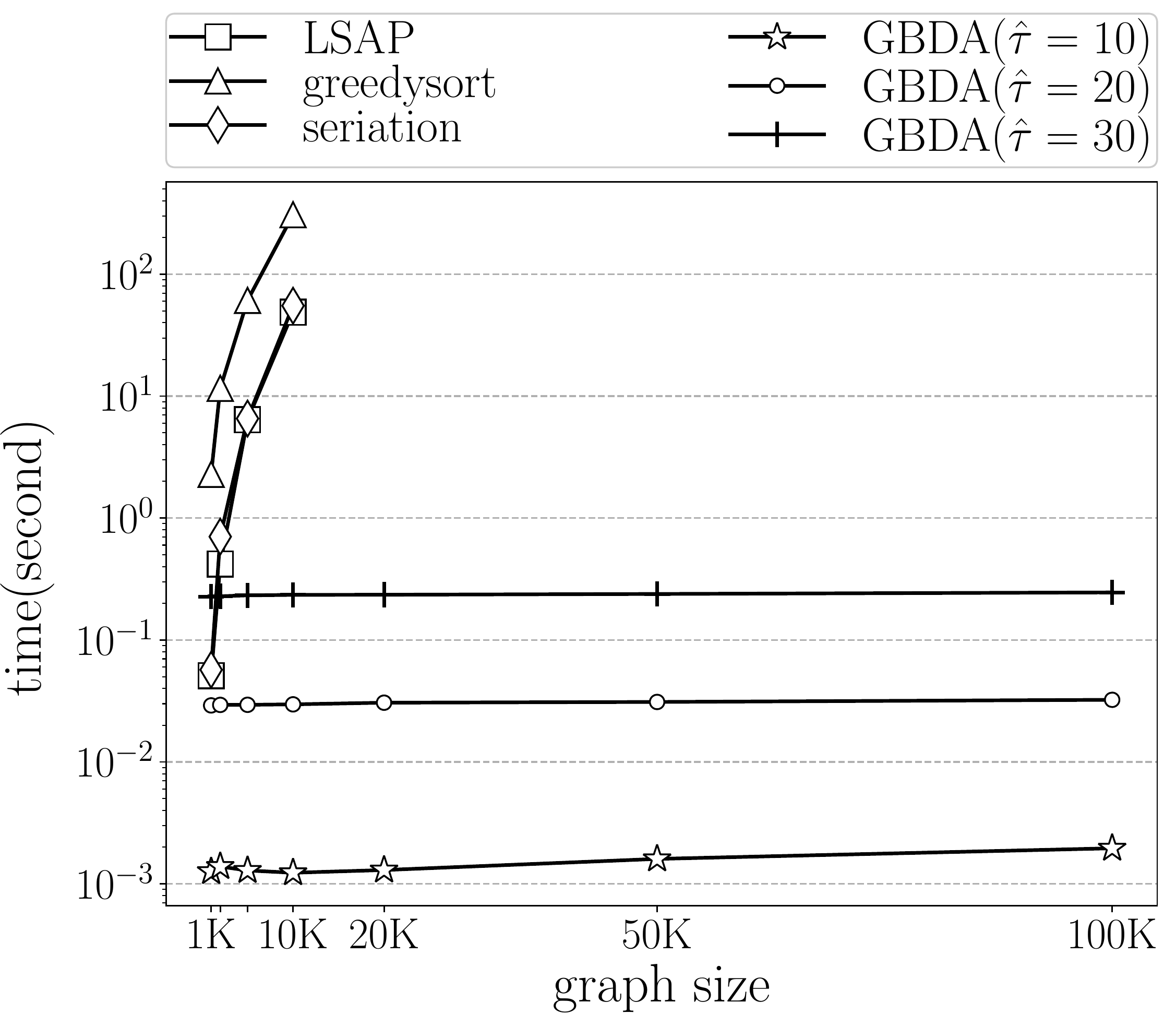}
    \vspace{-25pt}
    \caption{\footnotesize{Time Costs vs. $n$ on \mbox{Syn-2}}}\label{fig-online-time-syn2}
    \endminipage\hspace{10pt}
\end{figure*}

\begin{figure*}[!t]
    \vspace{-10pt}
    \centerfloat
    \minipage{0.24\textwidth}
    \includegraphics[width=\linewidth]{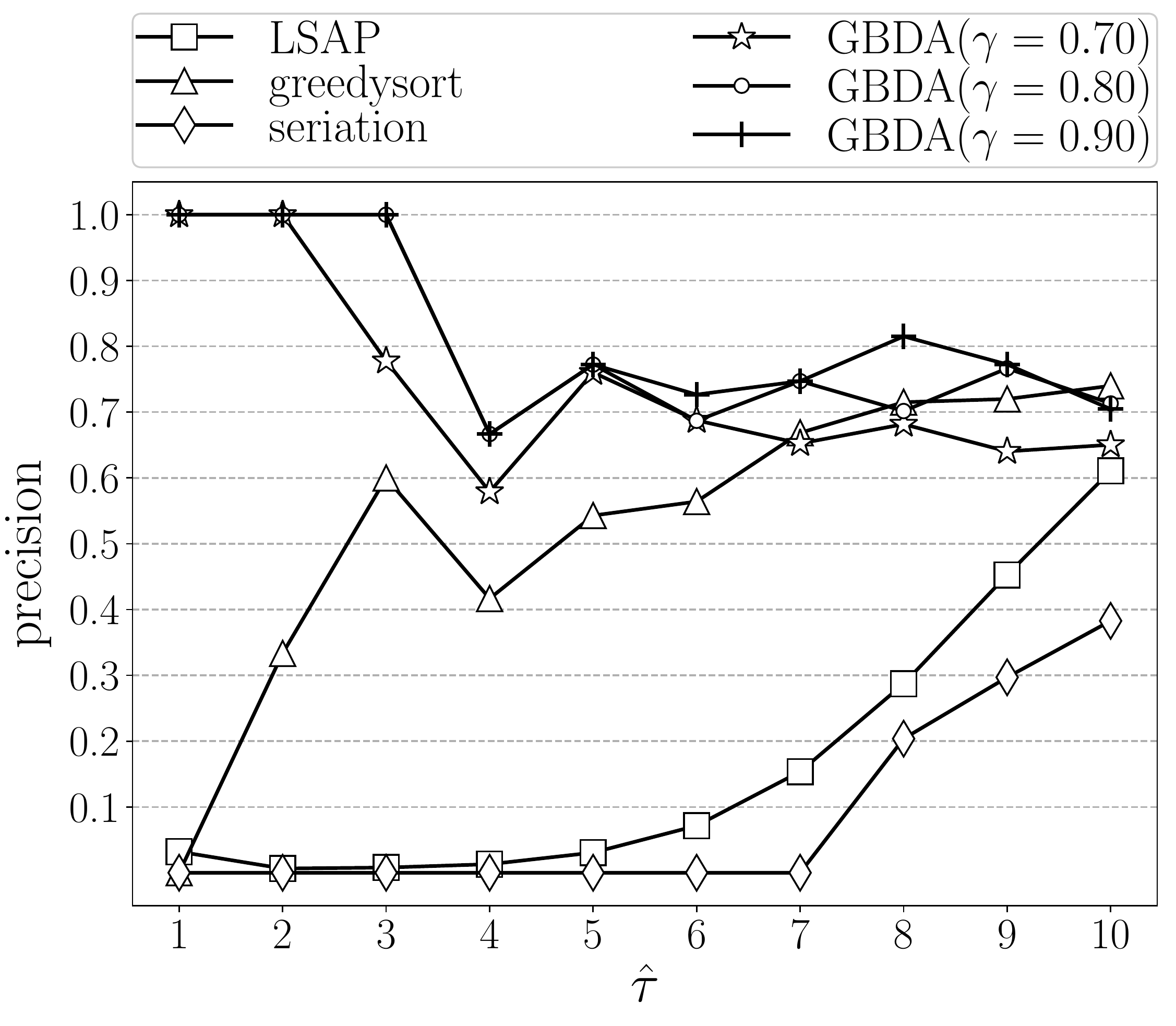}
    \vspace{-25pt}
    \caption{\footnotesize{Precision vs. $\hat\tau$ on AIDS}}\label{fig-online-acc-aids}
    \endminipage
    \hspace{10pt}
    \minipage{0.24\textwidth}
    \includegraphics[width=\linewidth]{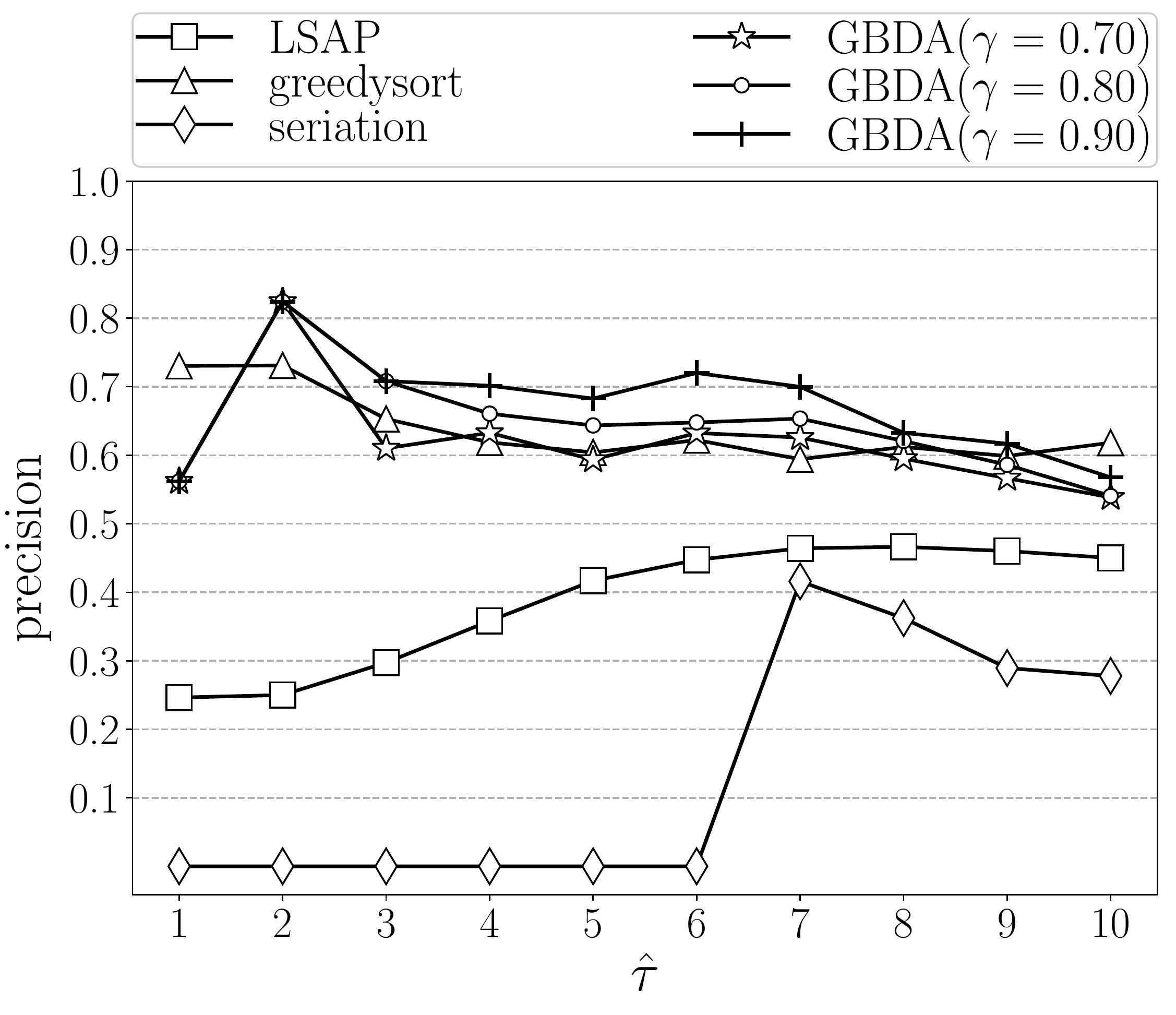}
    \vspace{-25pt}
    \caption{\hspace*{-0.8em}\footnotesize{Precision vs. $\hat\tau$ on Fingerprint}}\label{fig-online-acc-finger}
    \endminipage
    \hspace{10pt}
    \minipage{0.24\textwidth}
    \includegraphics[width=\linewidth]{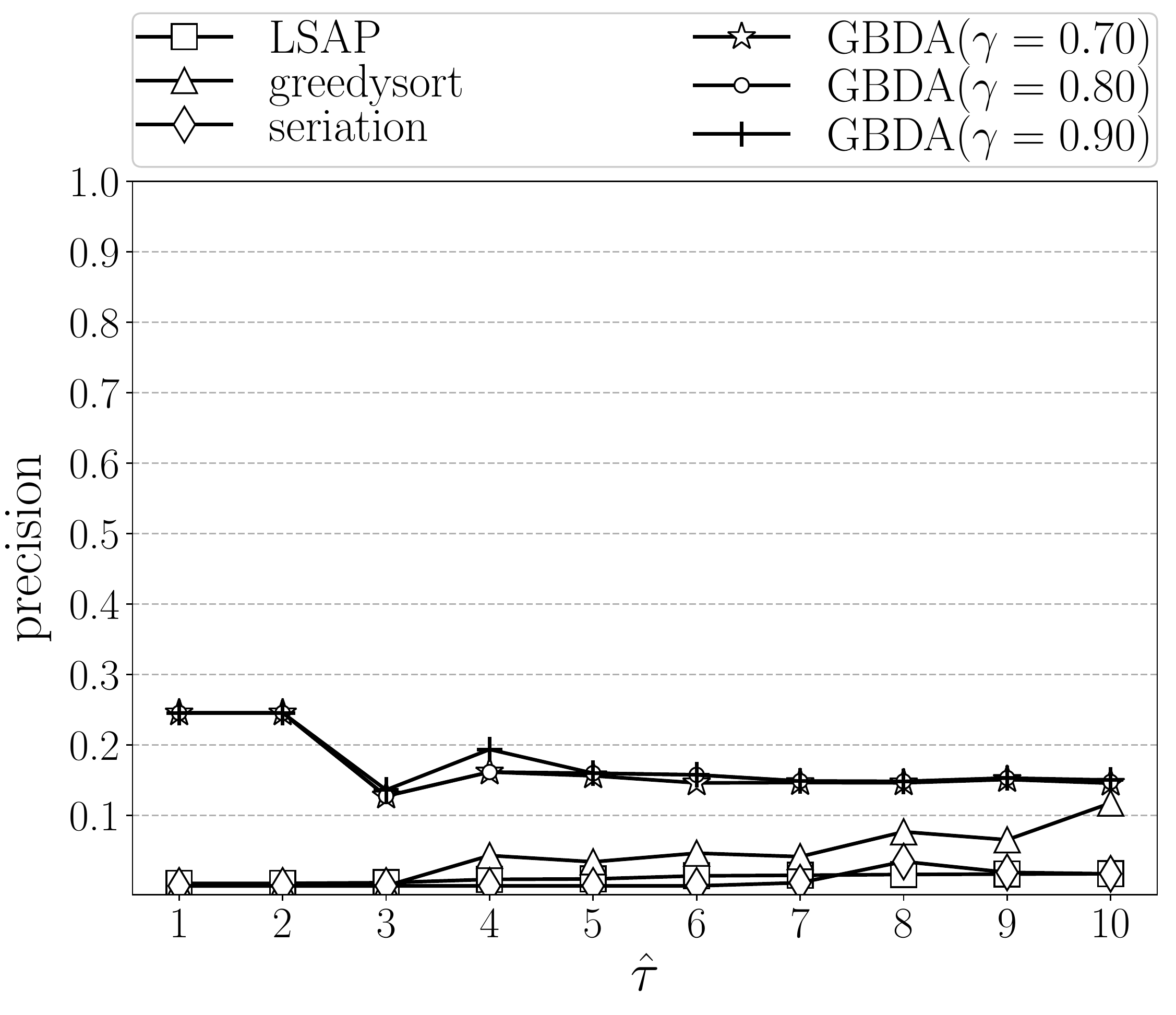}
    \vspace{-25pt}
    \caption{\footnotesize{Precision vs. $\hat\tau$ on GREC}}\label{fig-online-acc-grec}
    \endminipage
    \hspace{10pt}
    \minipage{0.24\textwidth}
    \includegraphics[width=\linewidth]{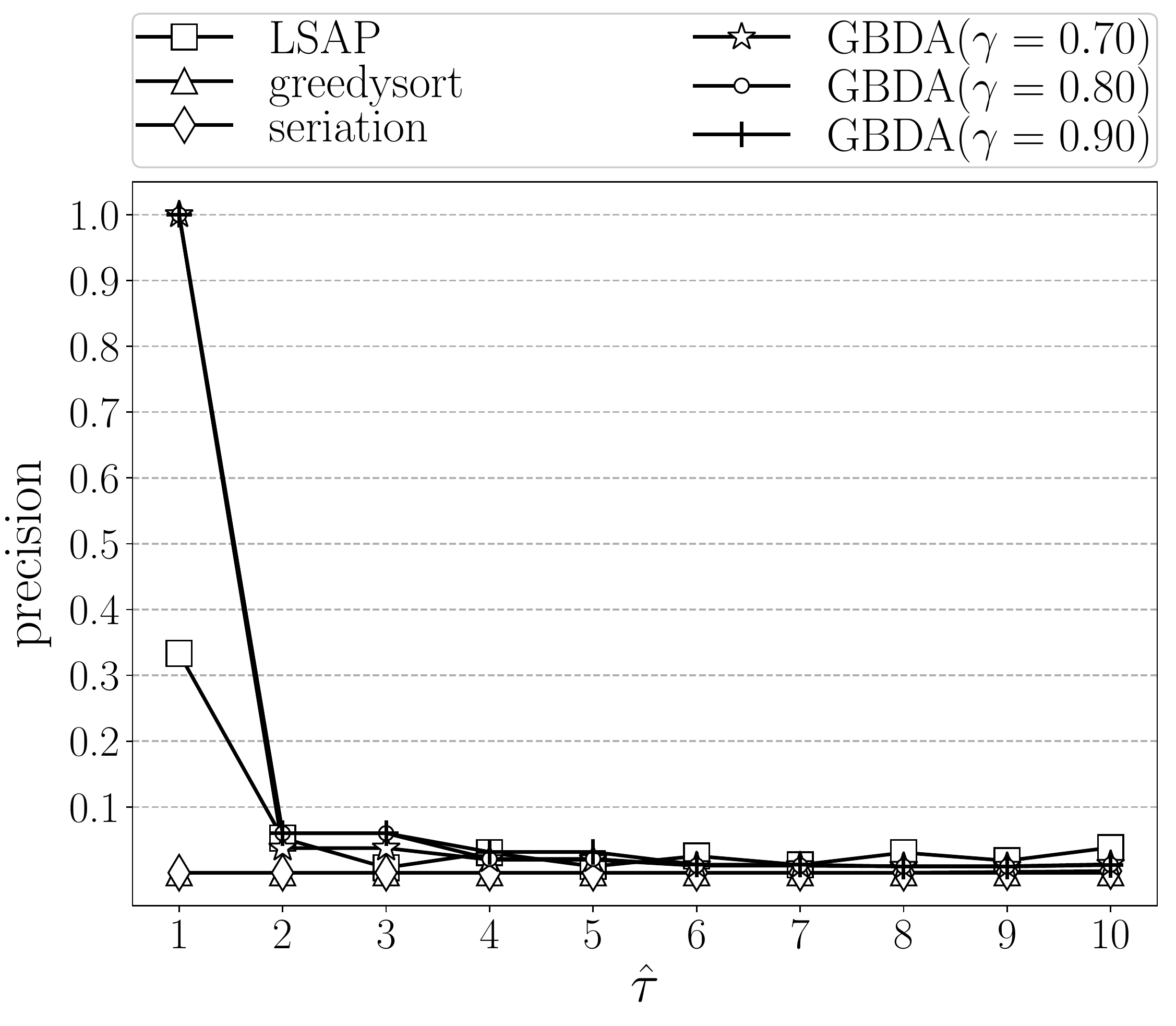}
    \vspace{-25pt}
    \caption{\footnotesize{Precision vs. $\hat\tau$ on AASD}}\label{fig-online-acc-aasd}
    \endminipage
\end{figure*}

\begin{figure*}[!t]    
    \vspace{-10pt}
    \centerfloat    
    \minipage{0.24\textwidth}
    \includegraphics[width=\linewidth]{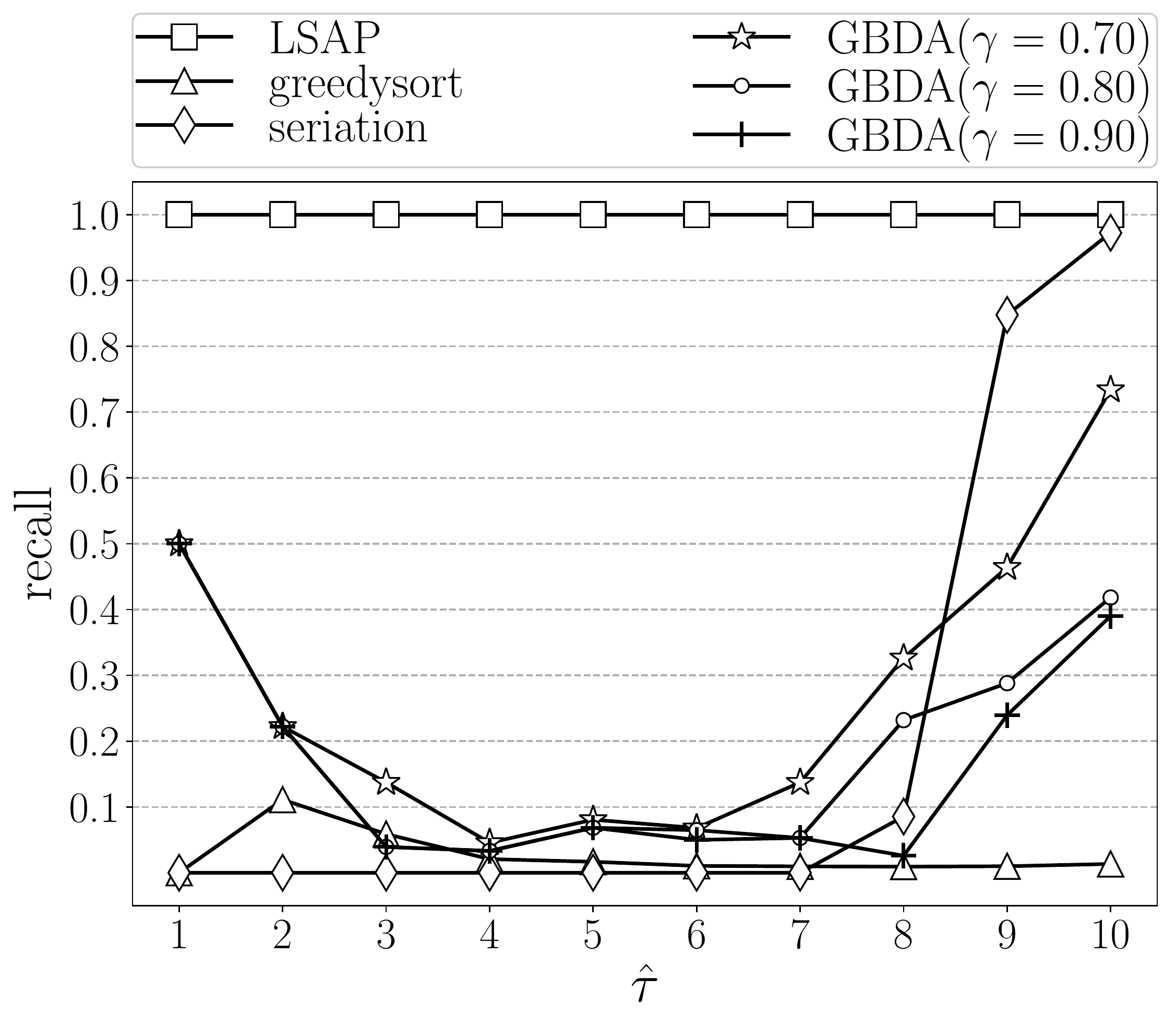}
    \vspace{-25pt}
    \caption{\footnotesize{Recall vs. $\hat\tau$ on AIDS}}\label{fig-online-recall-aids}
    \endminipage
    \hspace{10pt}
    \minipage{0.24\textwidth}
    \includegraphics[width=\linewidth]{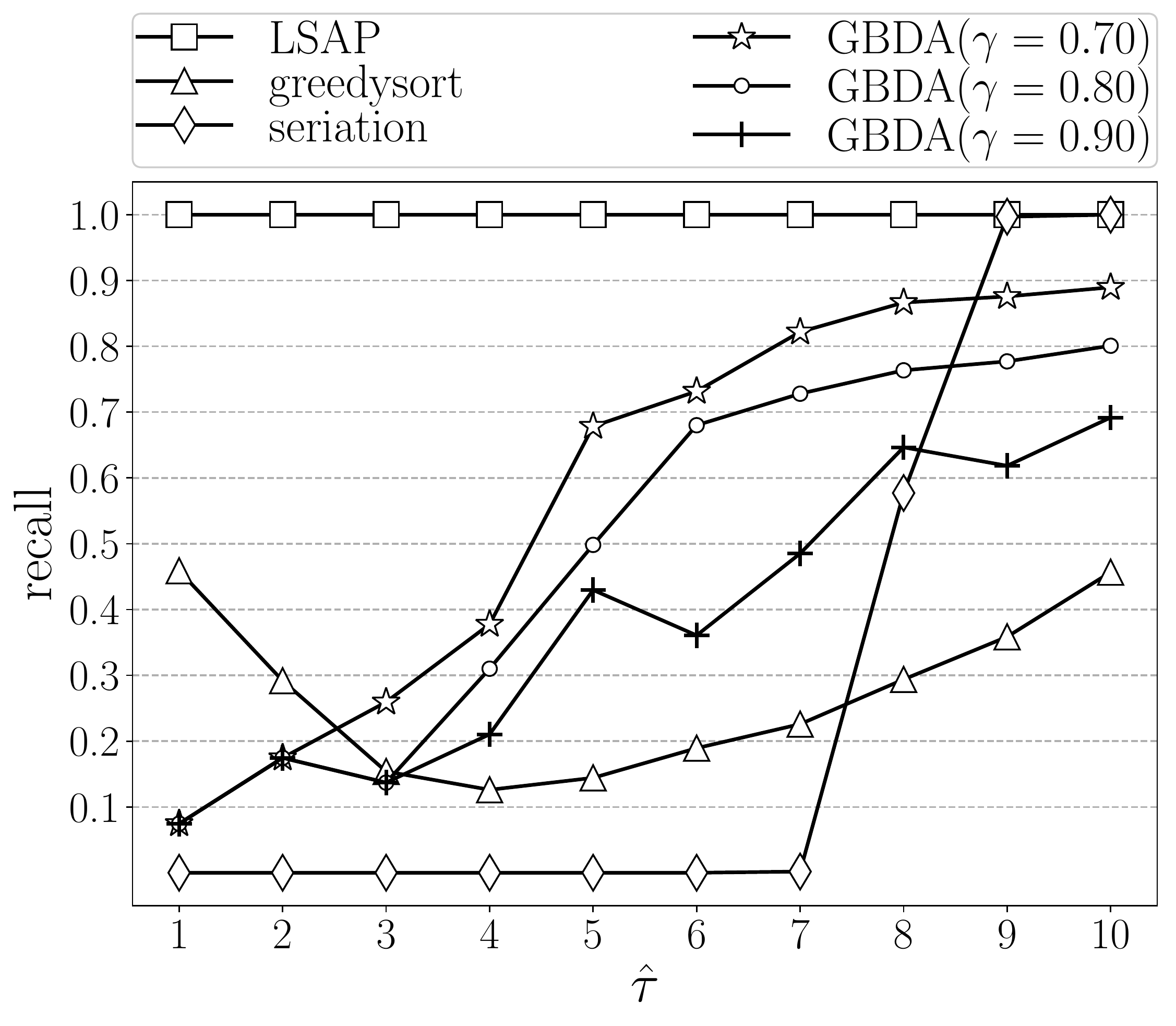}
    \vspace{-25pt}
    \caption{\footnotesize{Recall vs. $\hat\tau$ on Fingerprint}}\label{fig-online-recall-finger}
    \endminipage
    \hspace{10pt}
    \minipage{0.24\textwidth}
    \includegraphics[width=\linewidth]{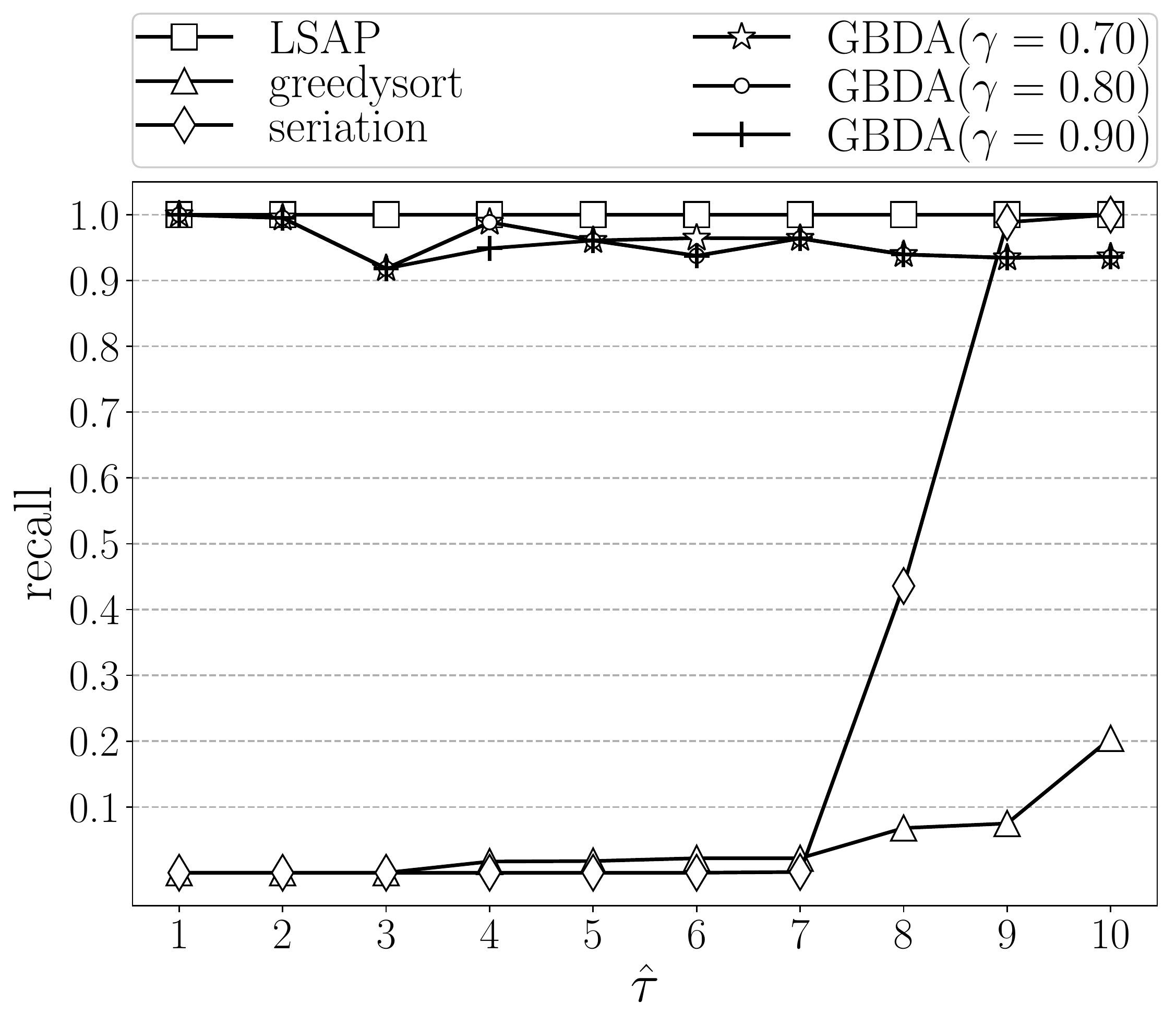}
    \vspace{-25pt}
    \caption{\footnotesize{Recall vs. $\hat\tau$ on GREC}}\label{fig-online-recall-grec}
    \endminipage
    \hspace{10pt}
    \minipage{0.24\textwidth}
    \includegraphics[width=\linewidth]{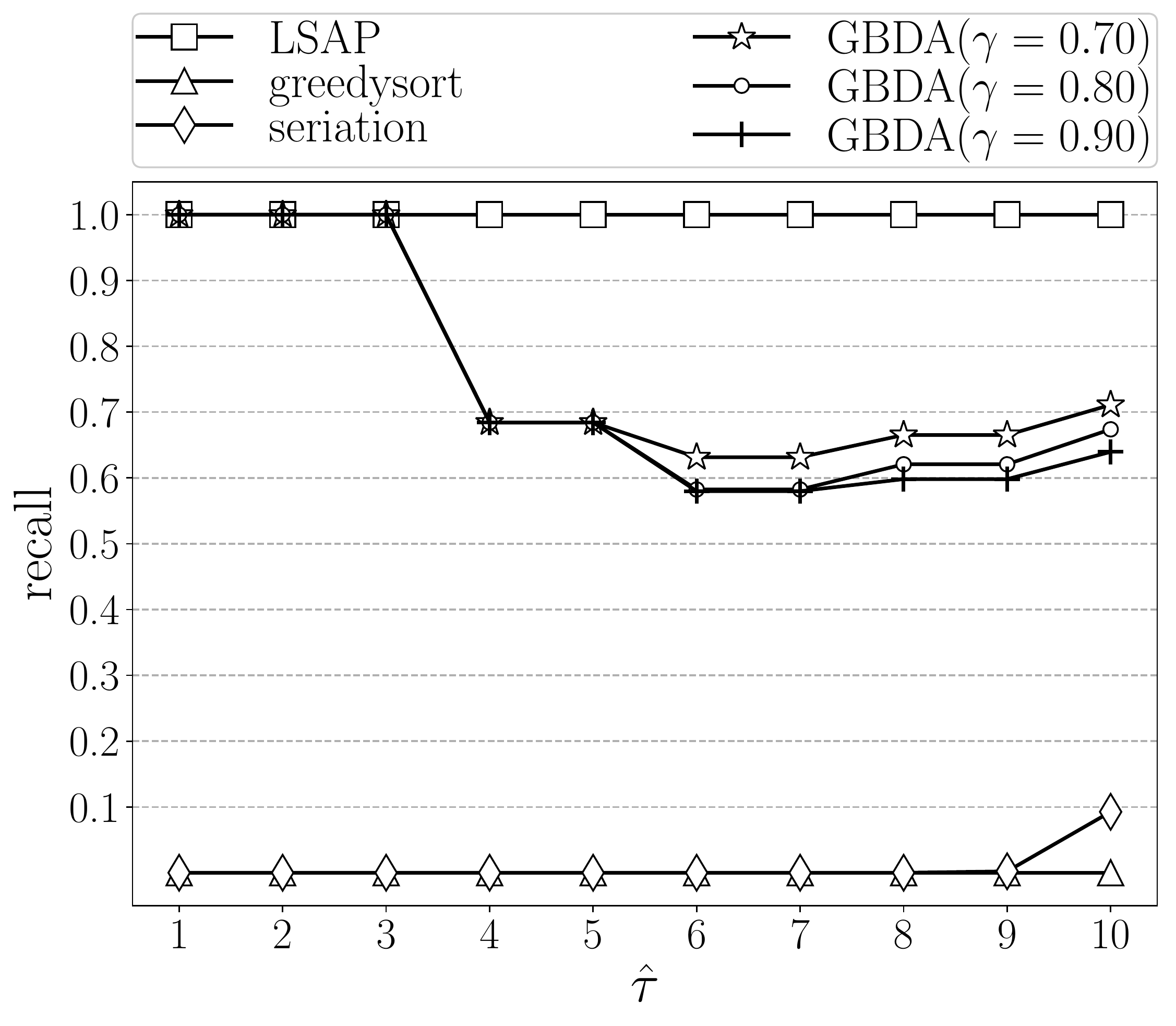}
    \vspace{-25pt}
    \caption{\footnotesize{Recall vs. $\hat\tau$ on AASD}}\label{fig-online-recall-aasd}
    \endminipage
\end{figure*}

\begin{figure*}[!t]
    \vspace{-10pt}
    \centerfloat
    \minipage{0.24\textwidth}
    \includegraphics[width=\linewidth]{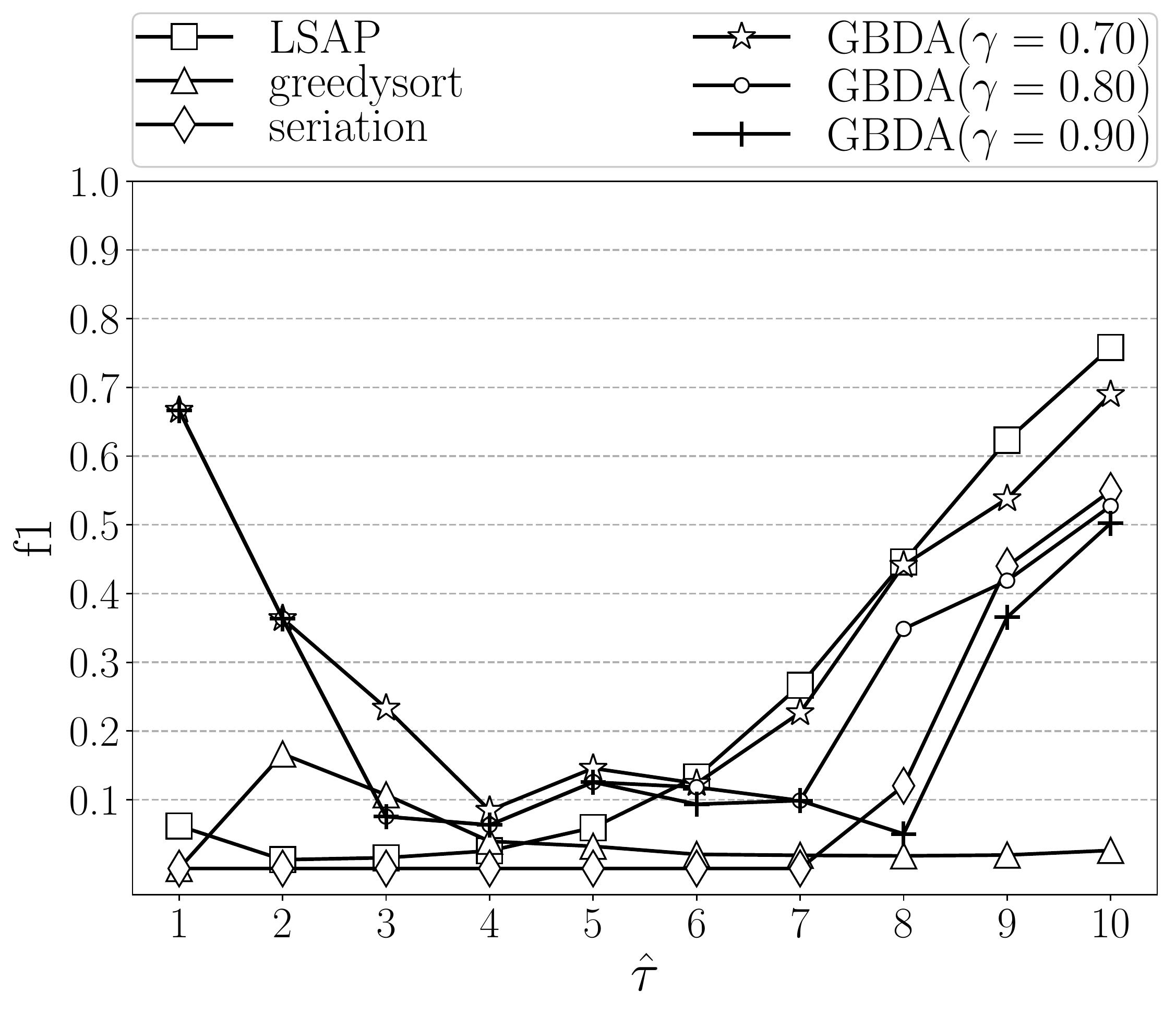}
    \vspace{-25pt}
    \caption{\footnotesize{F1-Score vs. $\hat\tau$ on AIDS}}\label{fig-online-f1-aids}
    \endminipage
    \hspace{10pt}
    \minipage{0.24\textwidth}
    \includegraphics[width=\linewidth]{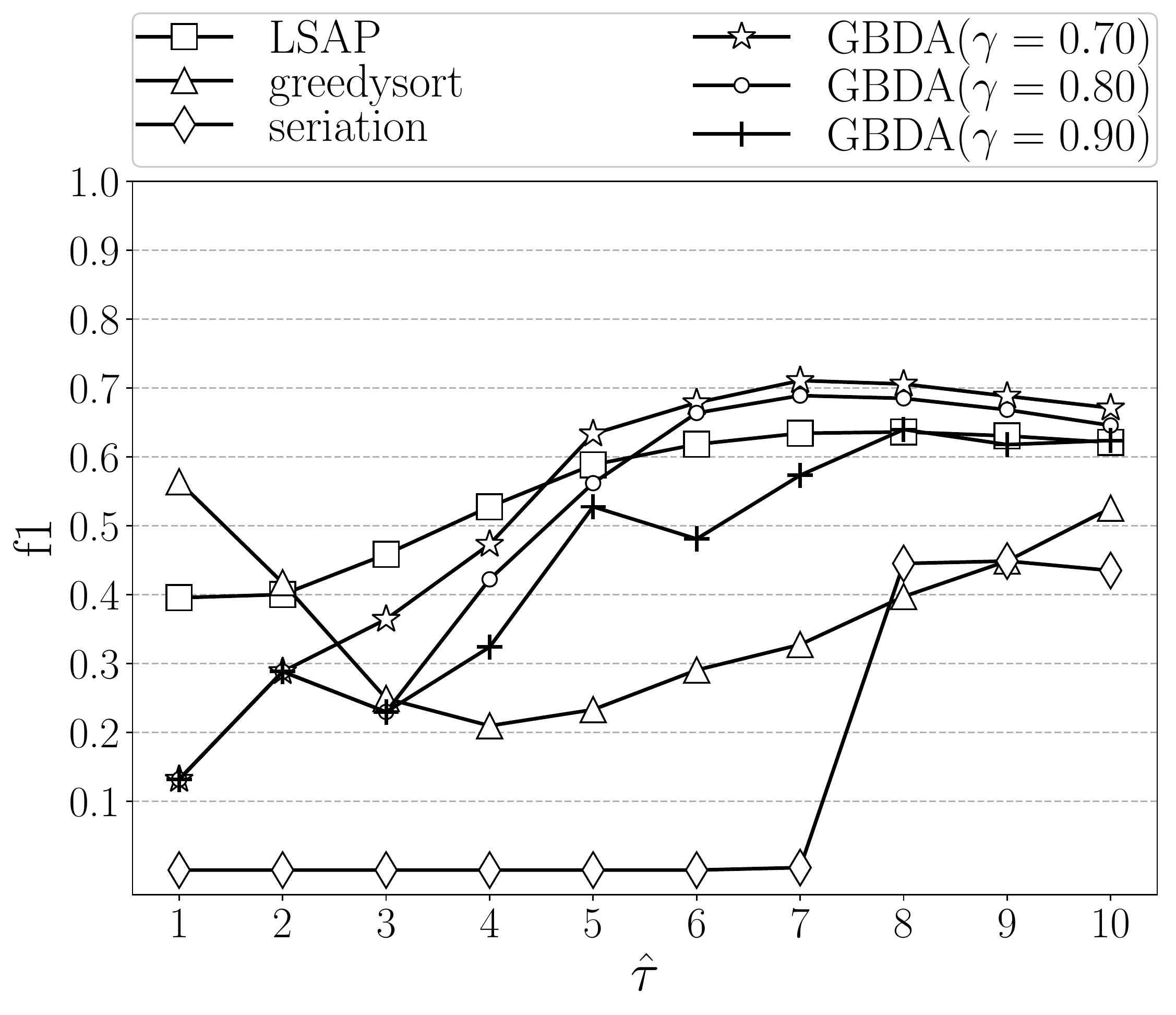}
    \vspace{-25pt}
    \caption{\hspace{-0.8em}\footnotesize{F1-Score vs. $\hat\tau$ on Fingerprint}}\label{fig-online-f1-finger}
    \endminipage
    \hspace{10pt}
    \minipage{0.24\textwidth}
    \includegraphics[width=\linewidth]{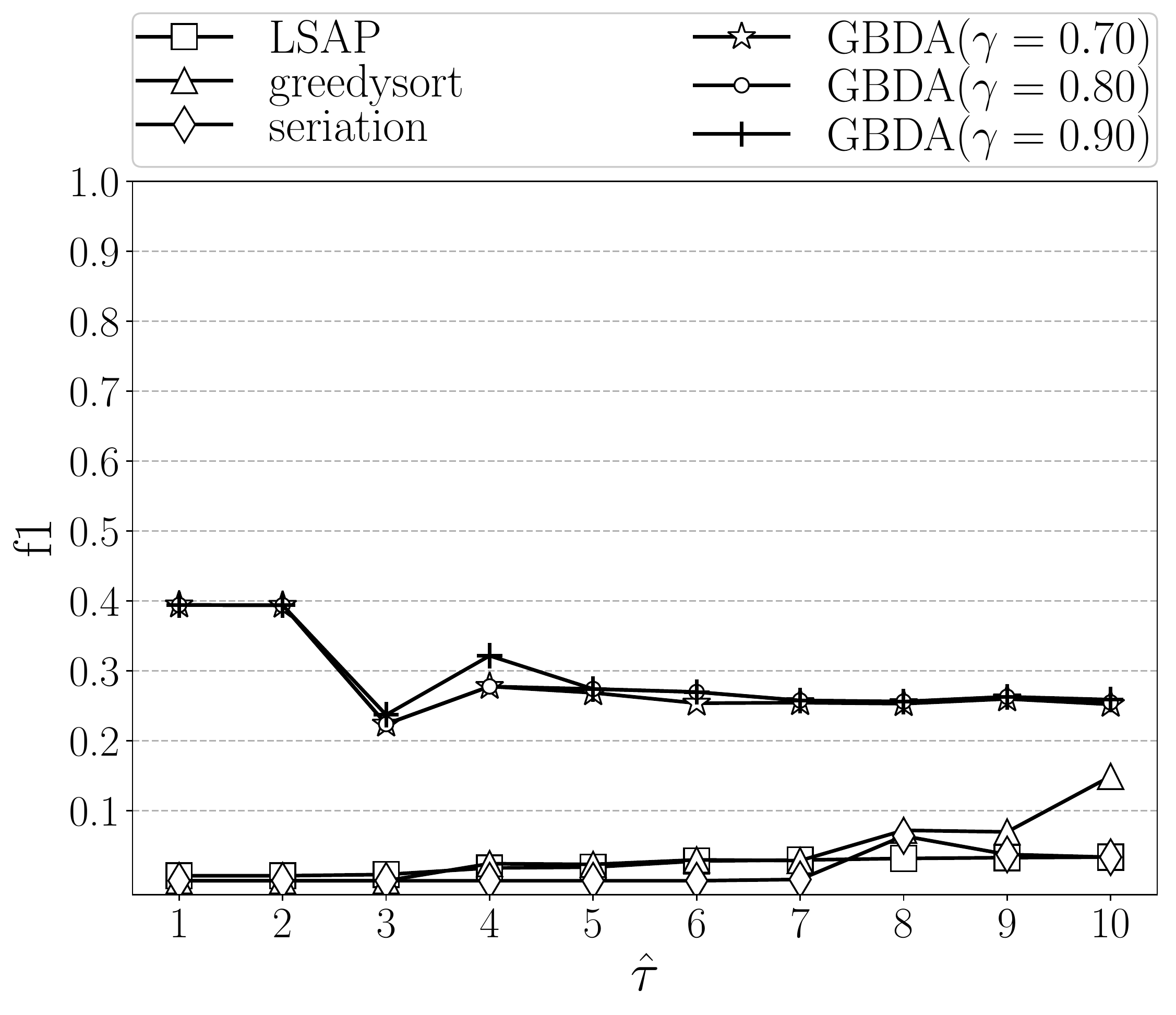}
    \vspace{-25pt}
    \caption{\footnotesize{F1-Score vs. $\hat\tau$ on GREC}}\label{fig-online-f1-grec}
    \endminipage
    \hspace{10pt}
    \minipage{0.24\textwidth}                                                          
    \includegraphics[width=\linewidth]{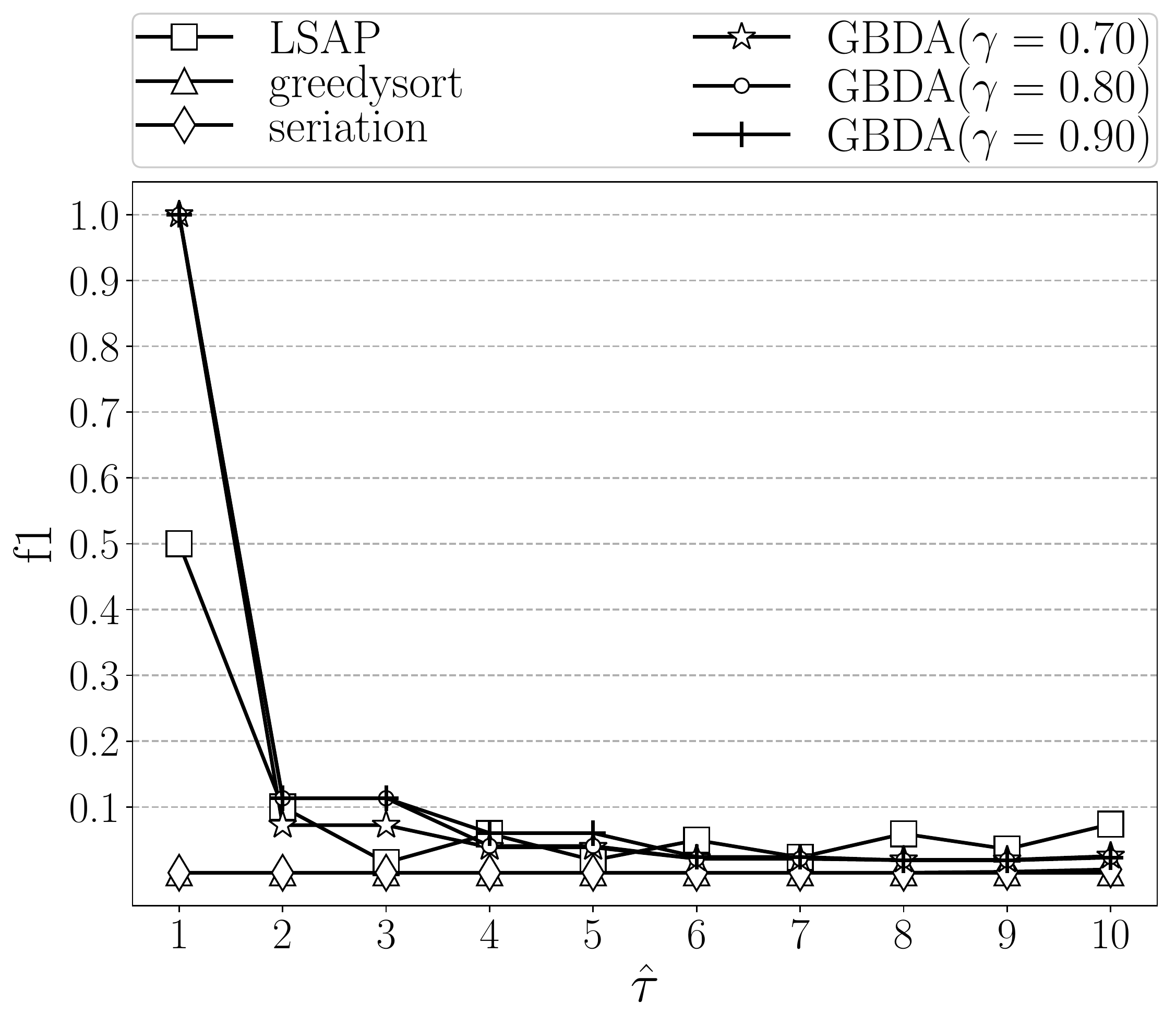}
    \vspace{-25pt}
    \caption{\footnotesize{F1-Score vs. $\hat\tau$ on AASD}}\label{fig-online-f1-aasd}
    \endminipage
    \vspace{-2em}
\end{figure*}

Note that, the costs of computing the GED prior distribution on \emph{synthetic} graphs do not exactly follow the theoretical analysis. This is because the numbers of vertices in synthetic graphs have only 7 possible values (i.e., 1K, 2K, 5K, 10K, 20K, 50K, and 100K), instead of the worst-case range $1\sim 100K$ as discussed in Section \ref{sec-complexity-offline}. In addition, although the synthetic graphs are larger than the real ones, the number of possible values of $n$ on synthetic graphs are smaller than that on real graphs, where $n$ is the number of vertices in graphs.
Therefore, the costs of computing the GED prior distribution on synthetic graphs are smaller than the costs on real data sets.

\vspace{-0.6em}
\subsection{Evaluating Online Stage}
\vspace{-0.3em}

In this subsection, we compare the efficiency and accuracy of the online stage of our GBDA approach with three
competitors (i.e., LSAP \cite{riesen2009approximate},
Greedy-Sort-GED \cite{riesen2015approximate} and Graph Seriation
\cite{robles2005graph}), by conducting graph similarity search tasks
over both real and synthetic data sets. In addition, we analyze how the efficiency and effectiveness (i.e., accuracy, recall and F1-score) of our method are influenced by the parameters, such as the similarity threshold $\hat{\tau}$, the probability threshold $\gamma$, and the number, $n$, of vertices.

\subsubsection{The Efficiency Evaluation}

In this subsection, we evaluate the query efficiency of our GBDA
approach and three competitors on real and synthetic data sets. Note that the time cost of our GBDA methods depends on both $n$ and $\hat{\tau}$, while the competitors' time costs only depend on $n$, where $n$ is the number of vertices in graphs and $\hat{\tau}$ is the similarity threshold. Therefore, the experiments in this subsection are conducted under a fixed probability threshold $\gamma=0.9$, since $\gamma$ does not affect the time costs of all the methods.

Specifically, given a specific method and its parameters (e.g., $\hat{\tau}$ and $\gamma$), for each query graph $Q$, we utilize this method to obtain a set of graphs similar to graph $Q$ from each data set, and we record the average query response time for each data set, which are presented in Figures \ref{fig-online-time-real}$\sim$\ref{fig-online-time-syn2}. Particularly, for a specific method under one specific parameter set (e.g., $\hat{\tau}$ and $\gamma$), each query's response time is recorded and counted only once for each data set, and we present the average of response times for all queries in the experimental figures.

The result in Figure \ref{fig-online-time-real} shows that our GBDA approach is more efficient than the three competitors on all real data sets where $\hat{\tau}$ is set to 1, 5 and 10, respectively. 
In addition, we studied how the number, $n$, of vertices in graphs influences the efficiency of our GBDA approach by comparing the query response time on synthetic data sets with various similarity thresholds $\hat{\tau}$, where the results are shown in Figures \ref{fig-online-time-syn1} and \ref{fig-online-time-syn2}. The results show that our GBDA approach is more efficient than the competitors on both scale-free and non-scale-free graphs, where the similarity threshold $\hat{\tau}\le 20$. Particularly, when the similarity threshold $\hat{\tau}= 30$, although our GBDA approach costs more time than the other methods on graphs with 1,000 vertices, our approach is faster than the competitors on larger graphs with more than 2,000 vertices. Therefore, when the similarity threshold is larger than the commonly-used values (i.e., $\{1,2,\allowbreak...,10\}$), the time cost of our algorithm is still smaller than the compared methods on large graphs (i.e., graphs with more than 2,000 vertices).

Note that the competitors (i.e., LSAP, Greedy-Sort-GED and Graph Seriation) can handle graphs with at most 20K vertices on our machine. Specifically, when the graphs have more than 20K vertices, all the competitors consume more than 128 GB memory on our machine, which exceeds the capacity of the physical memory. However, our GBDA method can handle graphs with 100K vertices efficiently, which confirms that our method has better scalability (with respect to the number, $n$, of vertices) than the competitors.

\subsubsection{The Effectiveness Evaluation}

We evaluate the effectiveness of our GBDA approach and three competitors by comparing the precision, recall, and F1-score \cite{wiki:f1-score} of the query results on each real data set with probability thresholds $\gamma=0.7,0.8$ and $0.9$, respectively.

The results in Figures \ref{fig-online-acc-aids}$\sim$\ref{fig-online-acc-aasd} show that our approach always outperforms the other three competitors in precision on AIDS, Fingerprint and GREC data sets, and achieves the highest precisions among all methods where $\hat{\tau}=1,2,3,4,5$ and $7$ on AASD data set.
The results in Figures \ref{fig-online-recall-aids}$\sim$\ref{fig-online-recall-aasd} show that our method has the second highest recalls under most parameter settings. Note that, LSAP method returns a lower bound of GED \cite{riesen2009approximate}, and therefore the recall of its search result is always 100\%. However, if we evaluate the methods by the F1-score, our method always outperforms the other three competitors on AIDS, Fingerprint and GREC data sets, and achieves the highest F1-scores among all methods when $\hat{\tau}=1,2,3,4,5$ and $7$ on AASD data set. Therefore, the experimental results confirm that the effectiveness of our method outperforms the competitors' under most parameter settings on the real data sets in our experiments.

In addition, we study how the number, $n$, of vertices in graphs influences the effectiveness (i.e., precision, recall and F1-score) of our approach on the Syn-1 data set with various probability thresholds $\gamma$ and similarity thresholds $\hat{\tau}$ in Appendix \ref{append-exp-online-syn}.

The results in Appendix \ref{append-exp-online-syn} show that, the precision of our method outperforms the competitors on \mbox{Syn-1} data set, where the similarity threshold $\hat{\tau}=15,20,25$ and $30$. Moreover, there is no significant difference between the precision of our method under various settings of the probability threshold $\gamma$, which demonstrates the robustness of our method under different settings of parameters. The recalls of our method are slightly lower than the LSAP method, but much higher than the Greedy-Sort-GED and Graph Seriation methods, where the similarity threshold $\hat{\tau}=20,25$ and $30$. Finally, the F1-Scores of our method are mostly higher than the competitors. Therefore, the experimental results on synthetic graphs demonstrate that our method is more effective than the competitors on large graphs.

\vspace{-1.0em}
\subsection{Comparing with Alternatives}
\vspace{-0.5em}

In this subsection, we compare the effectiveness of our GBDA approach and two variants of our method by comparing the F1-score \cite{wiki:f1-score} of the query results on each real data set with probability thresholds $\gamma=0.9$. The variants of our GBDA method, i.e., GBDA-V1 and GBDA-V2, are considered as alternatives of GBDA method, which are illustrated as follows.

\begin{description}[leftmargin=0pt,topsep=0pt]
    \item[GBDA-V1:] The method GBDA-V1 utilizes the average number of vertices among a sample of graphs from the graph database as the parameter $|V_1'|$ when computing $\Lambda_1$ and $\Lambda_3$ in Algorithm \ref{algo-similarity-search}, instead of using the number of vertices in the extended query graph $Q'$ as the parameter $|V_1'|$. 
    
    \item[GBDA-V2:] The method GBDA-V2 exploits the variant GBD (VGBD) instead of the original GBD value when computing $\Lambda_1$ and $\Lambda_2$ in Algorithm \ref{algo-similarity-search}, where VGBD is defined as:
    \begingroup
    \setlength{\abovedisplayskip}{2pt}
    \setlength{\belowdisplayskip}{2pt}
    \setlength{\abovedisplayshortskip}{2pt}
    \setlength{\belowdisplayshortskip}{2pt}
    \begin{align} \label{equal-def-vgbd}
    \hspace*{-0.5em}VGBD(G_1,G_2)&=\max\{|V_1|, |V_2|\}-w\cdot |B_{G_1}\cap B_{G_2}|
    \end{align}
    \endgroup
    where $B_{G_1}$ and $B_{G_2}$ are the multisets of all branches in graphs $G_1$ and $G_2$, respectively, $|V_1|$ and $|V_2|$ are numbers of vertices in graphs $G_1$ and $G_2$, respectively, and $w$ is a user-defined constant.
\end{description}

\begin{figure*}[!t]  
    \vspace{-30pt}
    \centerfloat
    \minipage{0.24\textwidth}
    \includegraphics[width=\linewidth]{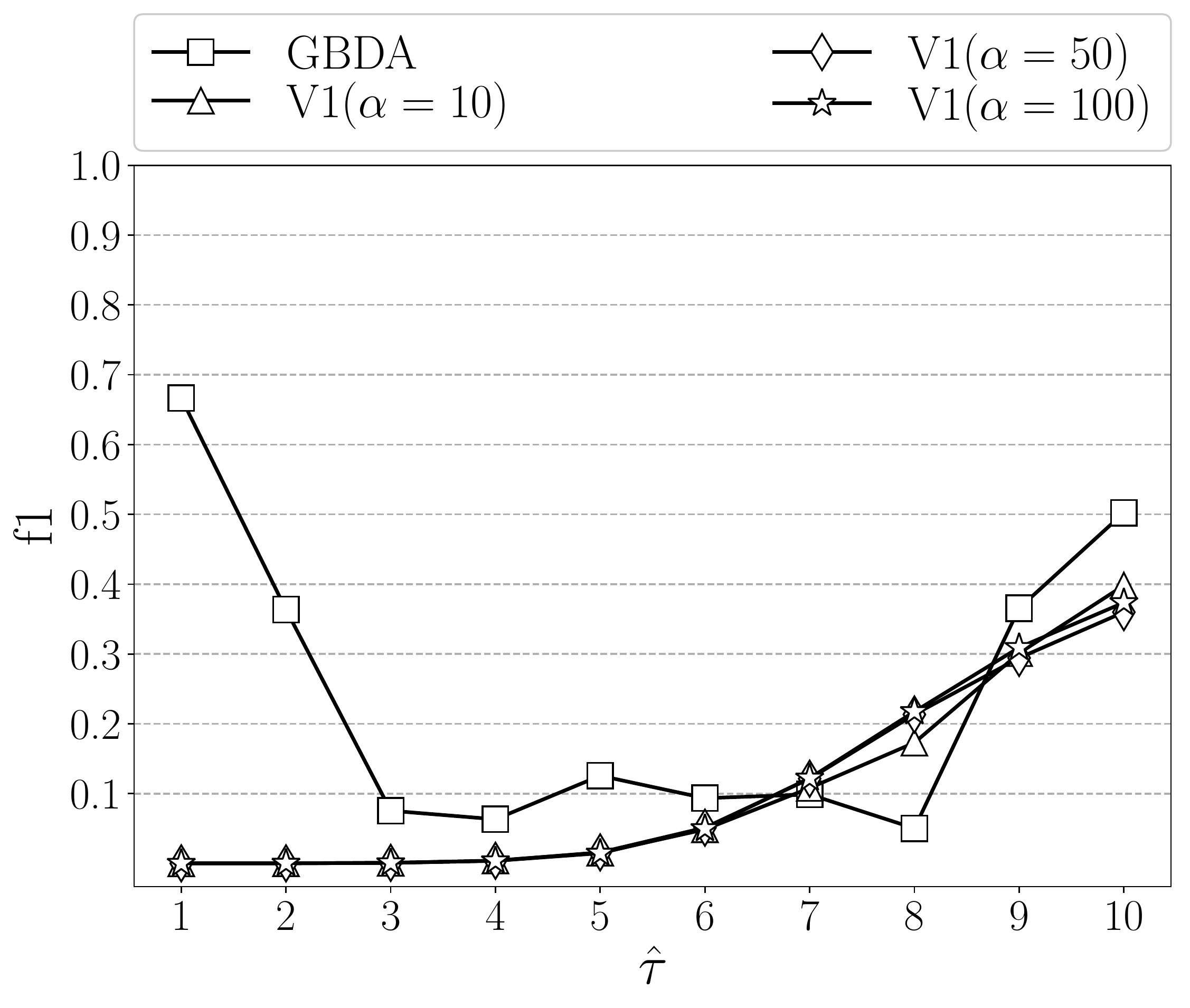}
    \vspace{-25pt}
    \caption{\footnotesize{F1-Score vs. $\hat\tau$ on AIDS\newline (Compared with GBDA-V1)}}\label{fig-v1-f1-aids}
    \endminipage
    \hspace{10pt}
    \minipage{0.24\textwidth}
    \includegraphics[width=\linewidth]{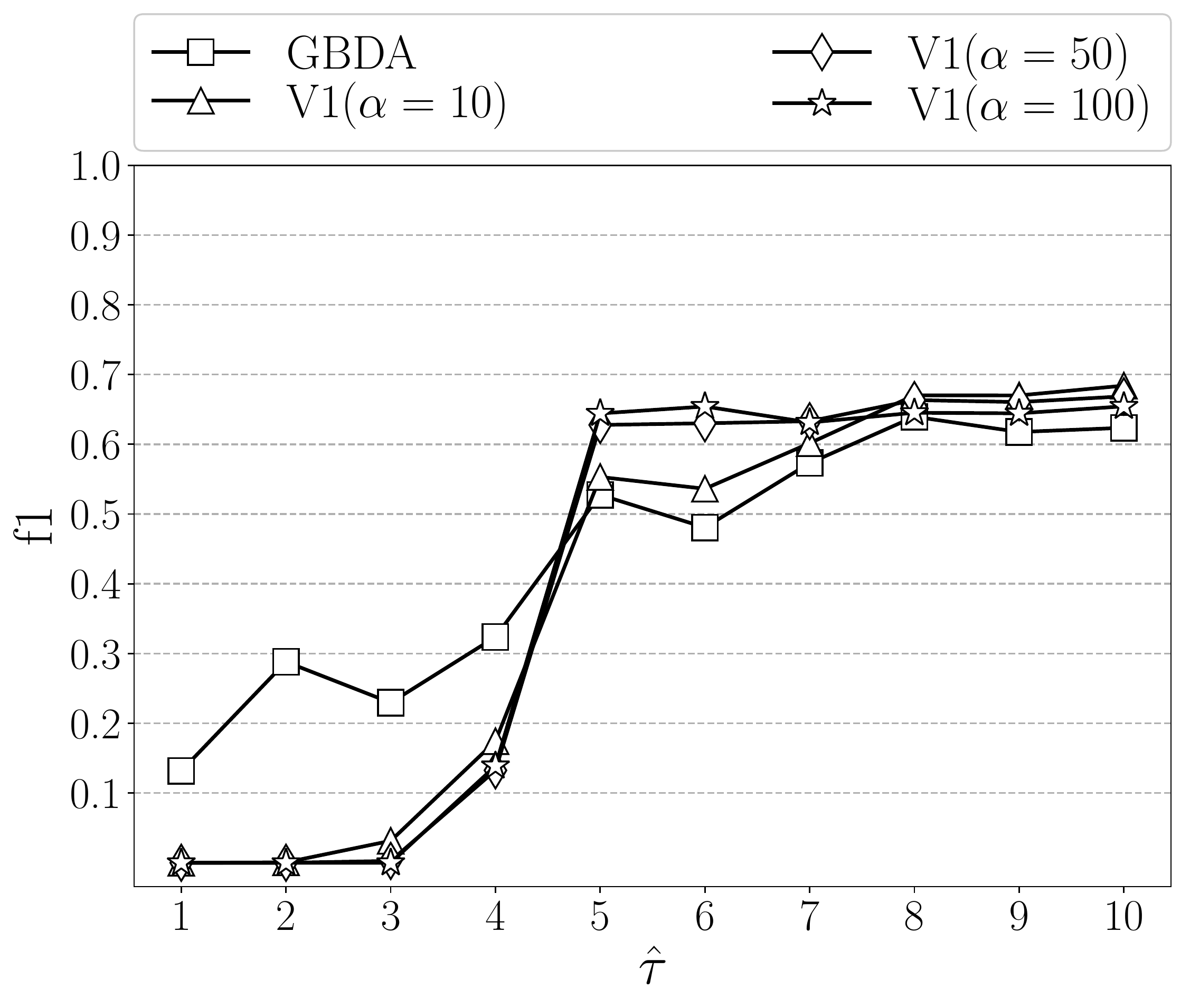}
    \vspace{-25pt}
    \caption{\hspace{-0.8em}\footnotesize{F1-Score vs. $\hat\tau$ on Fingerprint \newline (Compared with GBDA-V1)}}\label{fig-v1-f1-finger}
    \endminipage
    \hspace{10pt}
    \minipage{0.24\textwidth}
    \includegraphics[width=\linewidth]{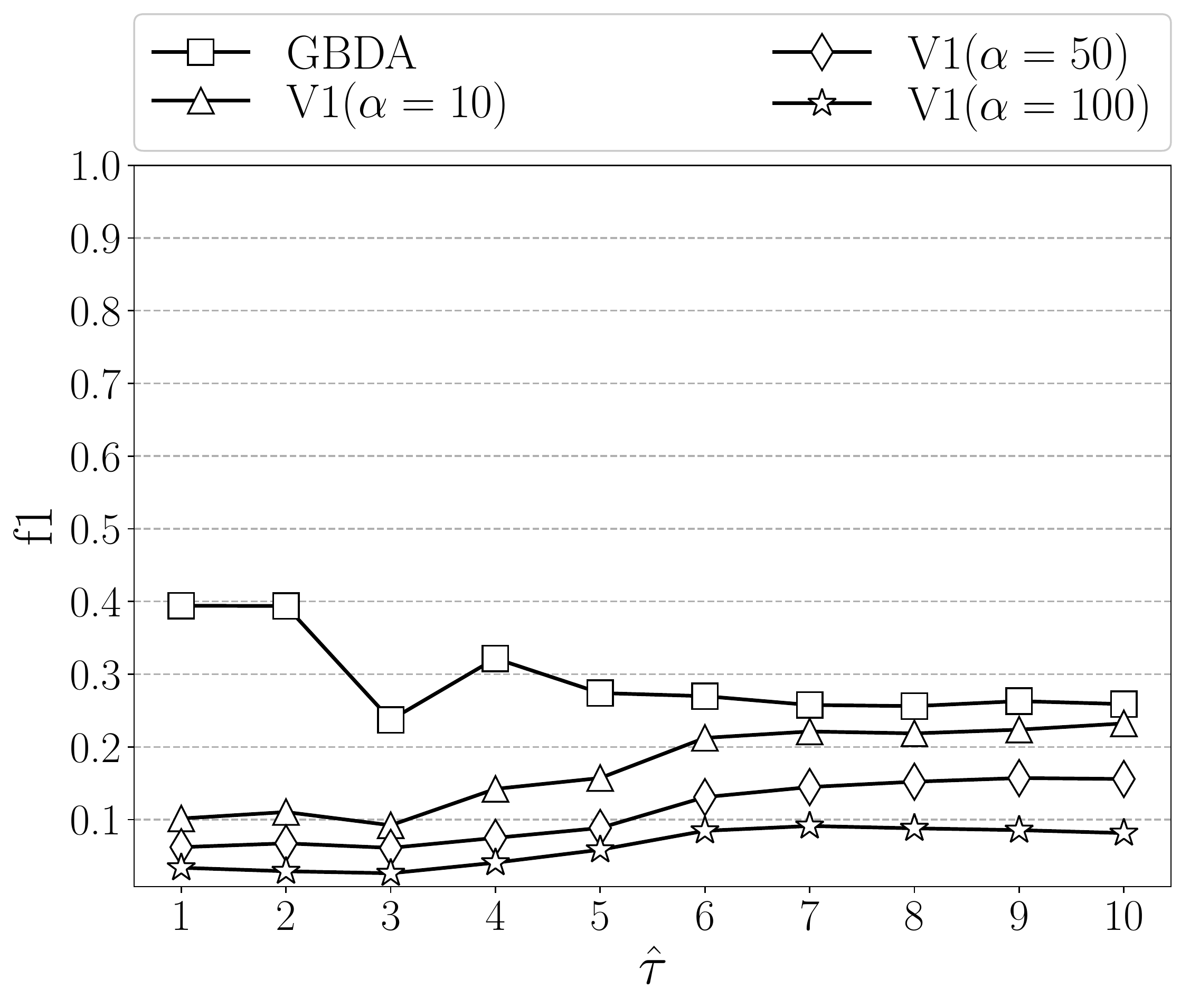}
    \vspace{-25pt}
    \caption{\footnotesize{F1-Score vs. $\hat\tau$ on GREC \newline (Compared with GBDA-V1)}}\label{fig-v1-f1-grec}
    \endminipage
    \hspace{10pt}
    \minipage{0.24\textwidth}                                                          
    \includegraphics[width=\linewidth]{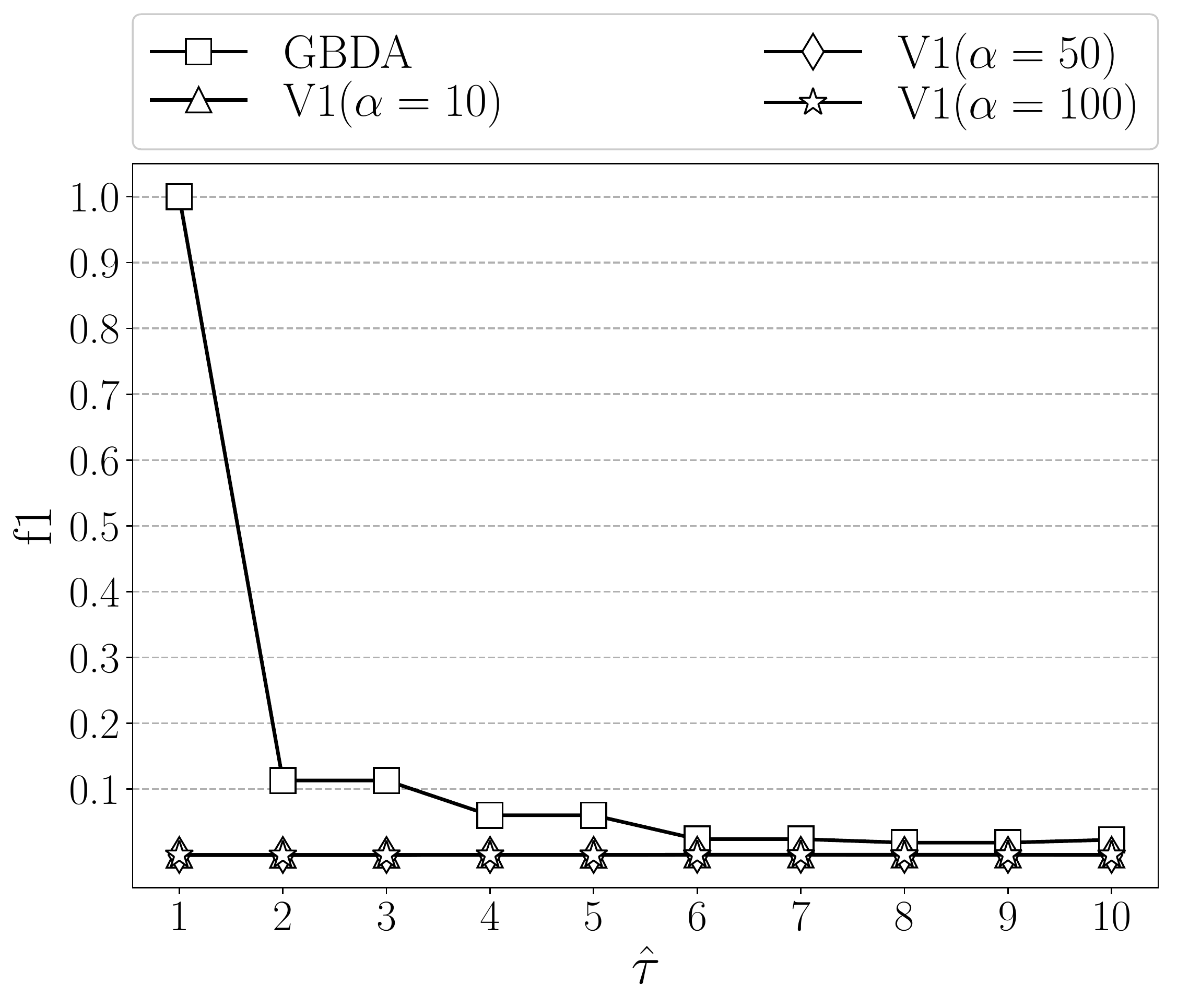}
    \vspace{-25pt}
    \caption{\footnotesize{F1-Score vs. $\hat\tau$ on AASD \newline (Compared with GBDA-V1)}}\label{fig-v1-f1-aasd}
    \endminipage
    \vspace{-10pt}
\end{figure*}

\begin{figure*}[!t]  
    \centerfloat
    \minipage{0.24\textwidth}
    \includegraphics[width=\linewidth]{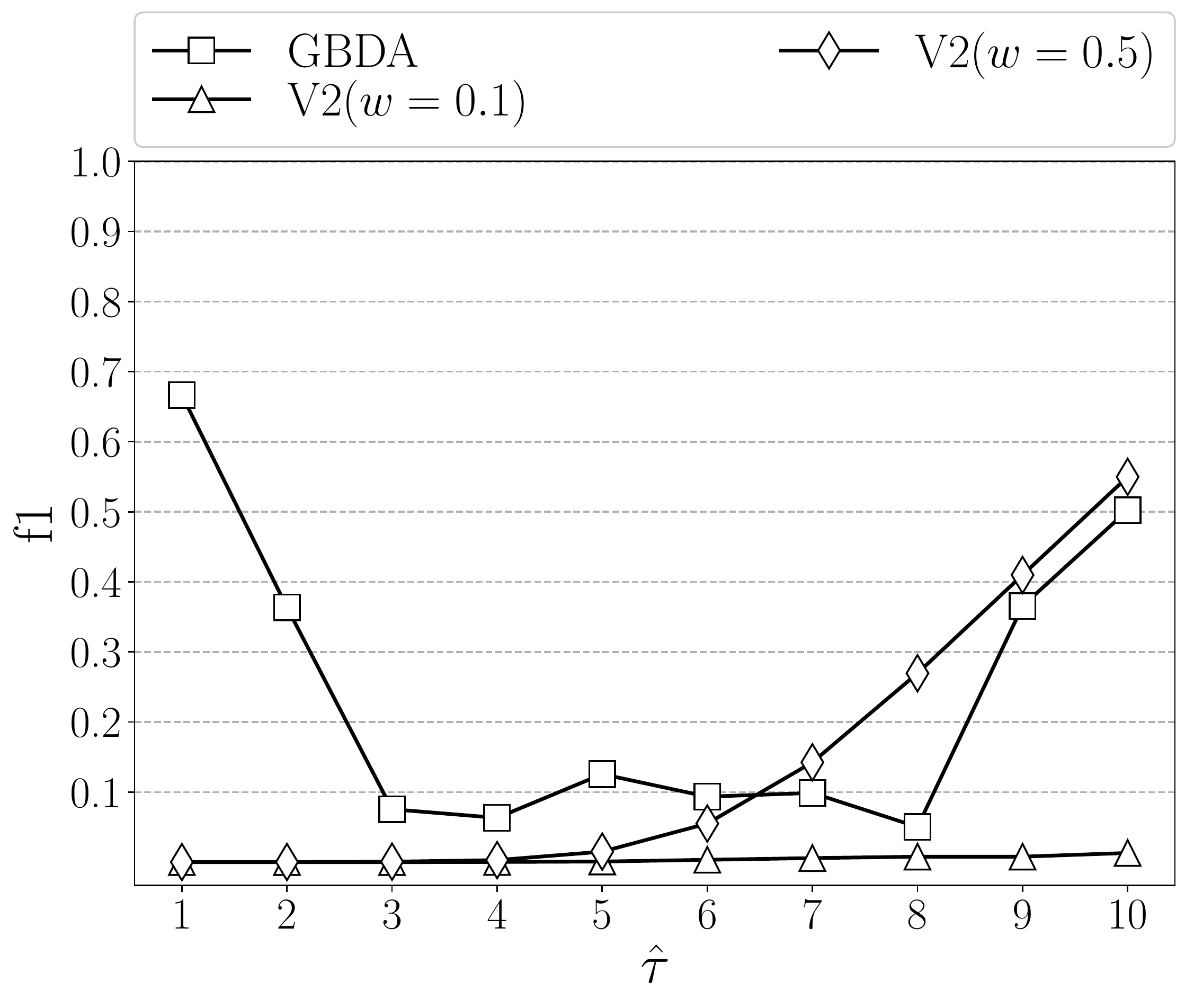}
    \vspace{-25pt}
    \caption{\footnotesize{F1-Score vs. $\hat\tau$ on AIDS \newline (Compared with GBDA-V2)}}\label{fig-v2-f1-aids}
    \endminipage
    \hspace{10pt}
    \minipage{0.24\textwidth}
    \includegraphics[width=\linewidth]{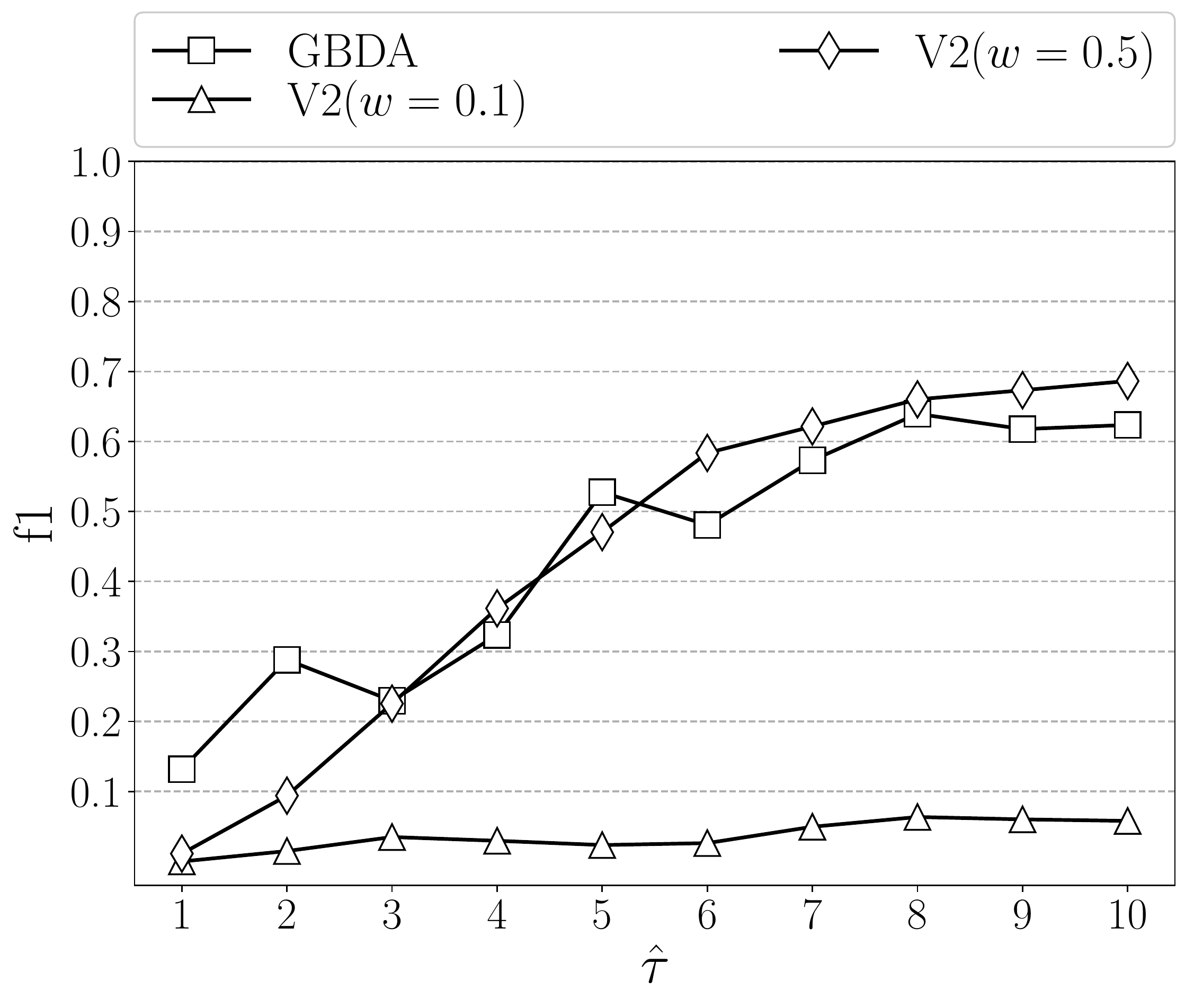}
    \vspace{-25pt}
    \caption{\hspace{-0.8em}\footnotesize{F1-Score vs. $\hat\tau$ on Fingerprint \newline (Compared with GBDA-V2)}}\label{fig-v2-f1-finger}
    \endminipage
    \hspace{10pt}
    \minipage{0.24\textwidth}
    \includegraphics[width=\linewidth]{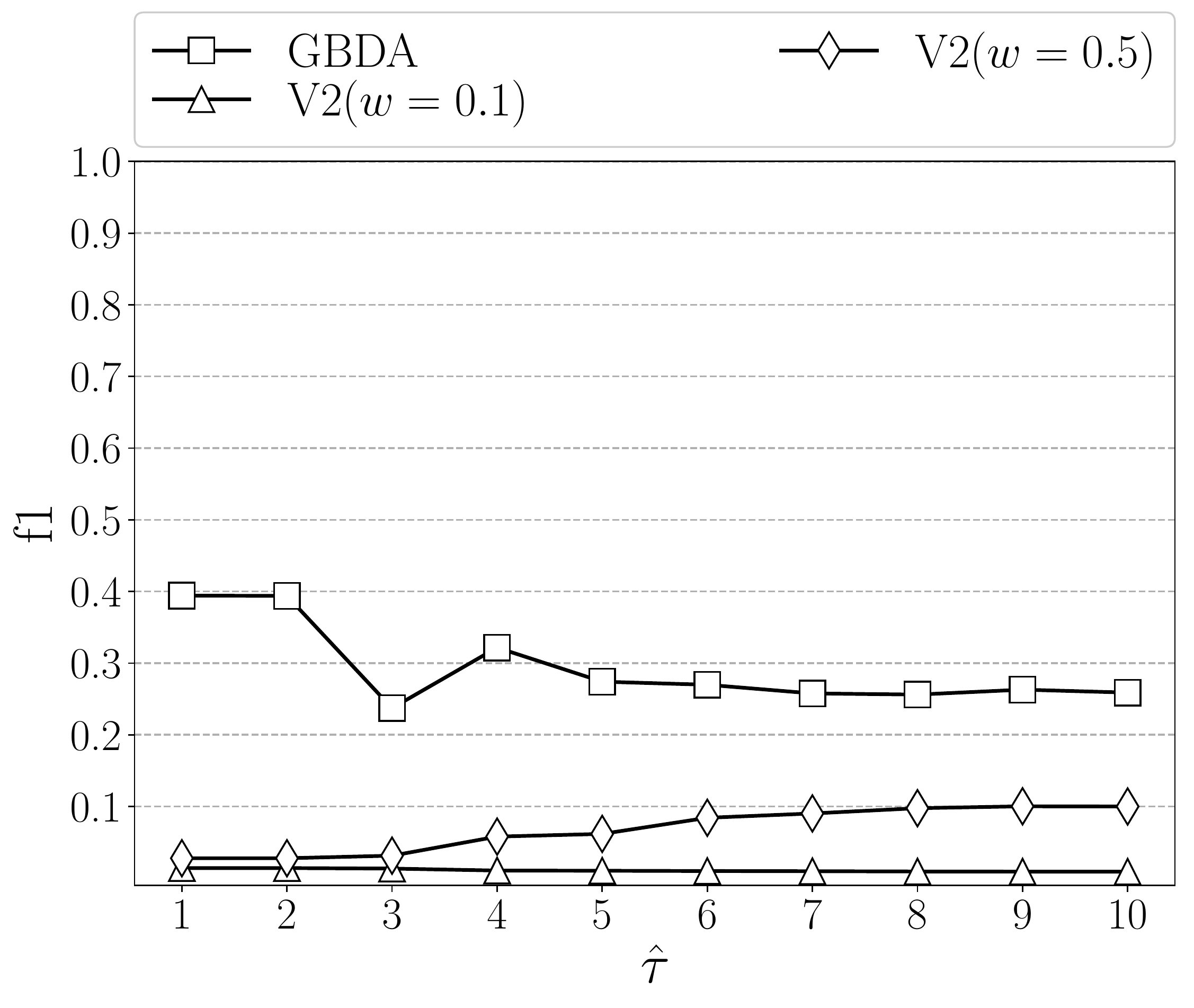}
    \vspace{-25pt}
    \caption{\footnotesize{F1-Score vs. $\hat\tau$ on GREC \newline (Compared with GBDA-V2)}}\label{fig-v2-f1-grec}
    \endminipage
    \hspace{10pt}
    \minipage{0.24\textwidth}                                                          
    \includegraphics[width=\linewidth]{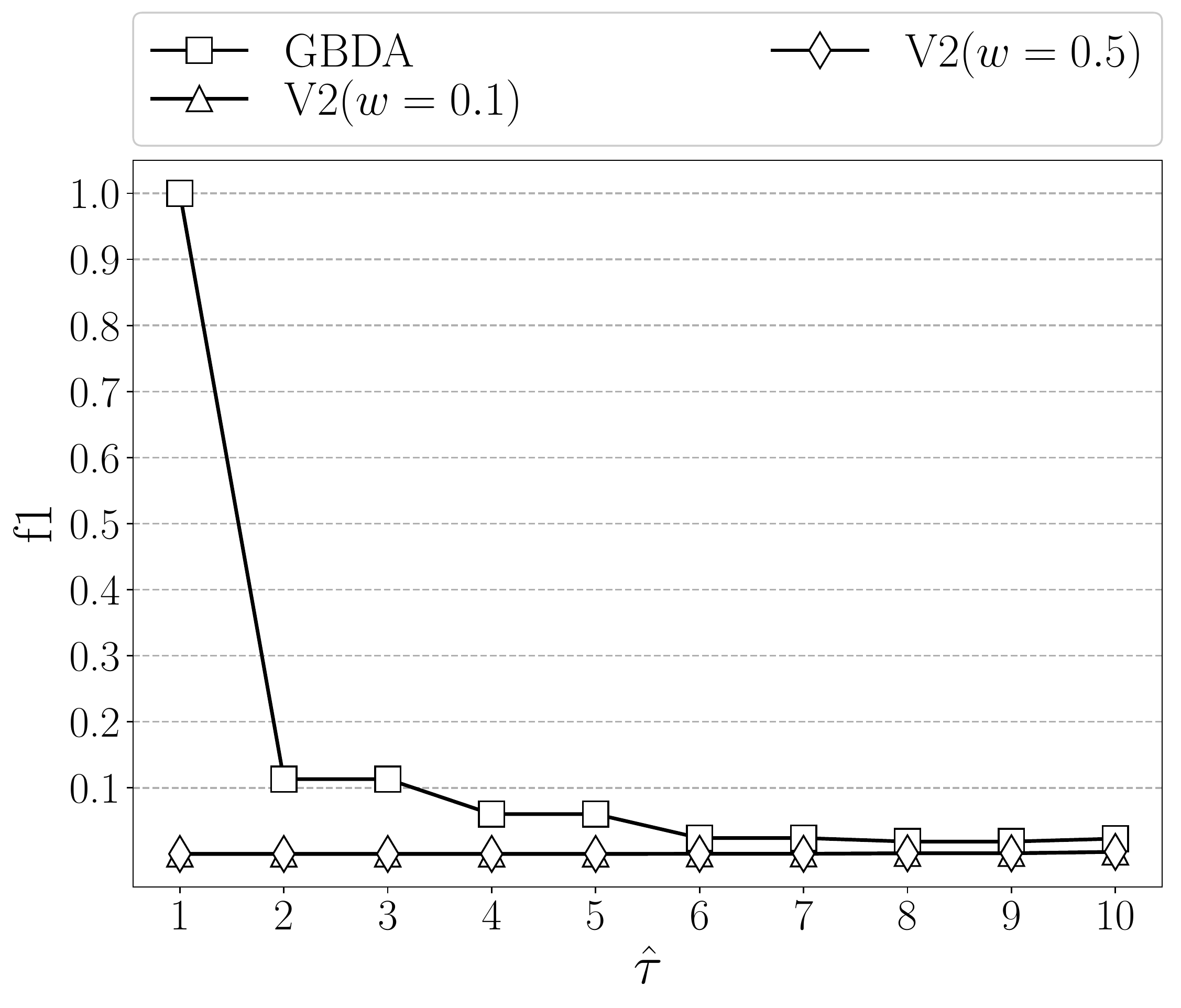}
    \vspace{-25pt}
    \caption{\footnotesize{F1-Score vs. $\hat\tau$ on AASD \newline (Compared with GBDA-V2)}}\label{fig-v2-f1-aasd}
    \endminipage
    \vspace{-20pt}
\end{figure*}

Specifically, we denote the number of sampled graphs in method GBDA-V1 by $\alpha$, and we test the method GBDA-V1 by setting $\alpha=10$, $50$, and $100$ on real data sets. In addition, we evaluate the effectiveness of method GBDA-V2 on real data sets by setting the parameter $w=0.1$ and $0.5$, where $w$ is defined in Equation (26). 

The results in Figures \ref{fig-v1-f1-aids}$\sim$\ref{fig-v1-f1-aasd} show that our GBDA method achieves higher F1-score than its variant GBDA-V1 for the similarity threshold $\hat{\tau} \le 4$, but generally has the same F1-score as GBDA-V1 for $\hat{\tau} \ge 5$. In addition, from the results in Figures \ref{fig-v2-f1-aids}$\sim$\ref{fig-v2-f1-aasd}, we can find that the F1-score of our GBDA method is higher than or almost the same as the \mbox{GBDA-V2} method for the similarity threshold $\hat{\tau} \le 2$ on all real data sets, and slightly lower than GBDA-V2 method on Fingerprint data set for $\hat{\tau} \ge 3$. In general, the experimental results show that our GBDA method outperforms methods GBDA-V1 and GBDA-V2 in most cases for similarity threshold $\hat{\tau} \le 5$, and performs better or almost the same as GBDA-V1 and GBDA-V2 methods for the similarity threshold $\hat{\tau} \ge 6$.

\vspace{-0.6em}
\section{Related Works}\label{sec-related-works}
\vspace{-0.6em}

\subsection{Exact GED Computation}
\vspace{-0.6em}

The state-of-the-art method for exact GED computation is the $A^*$ algorithm \cite{hart1968formal} and its variant \cite{DBLP:conf/icde/GoudaH16}, whose time costs are exponential with respect to graph sizes \cite{zeng2009comparing}. To address this NP-hardness problem, most graph similarity search method are based on the filter-and-verification framework \cite{zeng2009comparing}  \cite{wang2012efficiently} \cite{zhao2012efficient} \cite{Zheng:2013hh}, 
which first filters out undesirable graphs from the graph databases and then only verifies the remaining candidates. A common filtering approach is to use the distance between sub-structures of two graphs as a lower bound of their GED, which includes tree-based \cite{wang2012efficiently}, path-based \cite{zhao2013efficient} , branch-based \cite{Zheng:2013hh} and partition-based \cite{zhao2013partition} approaches. In this paper, we adopt the branch structure \cite{Zheng:2013hh} to build our model. However, we re-define the distance between branches, since the original definition \cite{Zheng:2013hh} of branch distances requires $O(n^3)$ time for computation while ours only requires $O(nd)$ time. In addition, a recent paper \cite{liang2017similarity} propose a multi-layer indexing approach to accelerate the filtering process based on their proposed partition-based filtering method.

\vspace{-0.6em}
\subsection{GED Estimation}
\vspace{-0.3em}

In this paper, we focus on GED estimation approaches. One well-studied method \cite{bougleux2016graph} \cite{riesen2009approximate}  
is to utilize the solution of a
\emph{linear sum assignment problem} (LSAP) as an estimation of GED.
The LSAP is an optimization problem which can be exactly solved by Hungarian method\cite{kuhn1955hungarian}, or be approximately solved by the greedy method\cite{riesen2015approximate} and the genetic algorithm\cite{riesen2014improving}. In our experiment, we compare our GBDA method with the exact \cite{riesen2009approximate} and greedy \cite{riesen2015approximate} solutions of LSAP. Since the exact solution of LSAP defined on two graphs is a lower bound of their GED \cite{riesen2009approximate}, the LSAP method can always obtain all graphs whose GED to the query graph is no larger than the similarity threshold and achieve 100\% recall in the similarity search tasks. On the other hand, the Greedy-Sort-GED method \cite{riesen2015approximate} solves LSAP approximately and has no bound to actual GED. However, the Greedy-Sort-GED method generally achieves better estimations of GEDs \cite{riesen2015approximate} and higher precisions in graph similarity search as shown in our experiments.
Another state-of-the-art approach compared in our experiment is \emph{graph seriation}\cite{robles2005graph},
which first converts graphs into one-dimensional vectors by extracting their leading eigenvalues of the adjacency matrix,
and then exploits a probabilistic model based on these vectors to estimate the GED.
Although both graph seriation and our approach utilize probabilistic models, the structure of our model is totally different from the prior work \cite{robles2005graph}. In addition, their model takes 
the leading eigenvalues of the adjacency matrix as the inputs, while the inputs of our model are the GBDs. Moreover, our GBDA method outperforms the competitors' (i.e., the LSAP \cite{riesen2009approximate},
Greedy-Sort-GED \cite{riesen2015approximate} and Graph Seriation
\cite{robles2005graph}) under most parameter settings on real data sets.

\vspace{-0.6em}
\section{Conclusions}  \label{sec-conclusions}
\vspace{-0.6em}
In this paper, we define the branch distance between two graphs (GBD), and further prove that the GBD has a probabilistic relationship with the GED by considering branch variations as the result ascribed to graph edit operations and modeling this process by probabilistic approaches. Furthermore, this relation between GED and GBD is leveraged to perform graph similarity searches. Experimental results demonstrate both the correctness and effectiveness of our approach, which outperforms the comparable methods.\vspace{-2ex} 


\bibliographystyle{IEEEtran}
\bibliography{aged}

\begin{appendices}
    \footnotesize
    
    \section{Proof of Theorem 1} \label{Appendix-Proof-Theorem-1}
    
    \begin{proof}
        Since GED satisfies the triangle inequality, we have
        \begingroup
        \setlength{\abovedisplayskip}{2pt}
        \setlength{\belowdisplayskip}{2pt}
        \setlength{\abovedisplayshortskip}{2pt}
        \setlength{\belowdisplayshortskip}{2pt}
        \begin{align*}
        GED(G_1,G_2)&\le GED(G_1,G_1') + GED(G_1',G_2') + GED(G_2',G_2)\\
        GED(G_1',G_2')&\le GED(G_1',G_1) + GED(G_1,G_2) + GED(G_2,G_2')
        \end{align*}
        \endgroup
        
        According to Definition \ref{def-ged}, adding a vertex or an edge with the virtual label $\varepsilon$ is not counted as a graph edit operation. Therefore, we have:
        \begingroup
        \setlength{\abovedisplayskip}{2pt}
        \setlength{\belowdisplayskip}{2pt}
        \setlength{\abovedisplayshortskip}{2pt}
        \setlength{\belowdisplayshortskip}{2pt}
        \begin{align*}
        GED(G_1,G_1')&=GED(G_1',G_1)=0\\
        GED(G_2,G_2')&=GED(G_2',G_2)=0
        \end{align*}
        \endgroup
        
        Therefore, $GED(G_1,G_2)\le GED(G_1',G_2') \le GED(G_1,G_2)$.
        
        That is, $GED(G_1,G_2)=GED(G_1',G_2')$. 
    \end{proof}
    
    \vspace{-1em}
    \section{Proof of Theorem 2} \label{Appendix-Proof-Theorem-2}
    
    \begin{proof}
        Let the sets of branches rooted at \emph{virtual} vertices
        in $G_1'$, $G_2'$ be $\Delta B_{G_1}$ and $\Delta B_{G_2}$, respectively.
        We have:
        \begingroup
        \setlength{\abovedisplayskip}{2pt}
        \setlength{\belowdisplayskip}{2pt}
        \setlength{\abovedisplayshortskip}{2pt}
        \setlength{\belowdisplayshortskip}{2pt}
        \begin{align*}
        \lvert B_{G_1'} \cap B_{G_2'} \rvert &= 
        \lvert B_{G_1} \cap B_{G_2} \rvert
        + \lvert \Delta B_{G_1} \cap B_{G_2} \rvert \\
        &+ \lvert B_{G_1} \cap \Delta B_{G_2} \rvert 
        + \lvert \Delta B_{G_1} \cap \Delta B_{G_2} \rvert 
        \end{align*}
        \endgroup
        
        Since branches rooted at \emph{virtual} vertices are not isomorphic 
        to any other branches, we also have:
        \begingroup
        \setlength{\abovedisplayskip}{2pt}
        \setlength{\belowdisplayskip}{2pt}
        \setlength{\abovedisplayshortskip}{2pt}
        \setlength{\belowdisplayshortskip}{2pt}
        $$ \lvert \Delta B_{G_1} \cap B_{G_2} \rvert 
        = \lvert B_{G_1} \cap \Delta B_{G_2} \rvert
        = \lvert \Delta B_{G_1} \cap \Delta B_{G_2} \rvert = 0 $$
        \endgroup

        Therefore, $\lvert B_{G_1'} \cap B_{G_2'} \rvert = \lvert B_{G_1} \cap B_{G_2} \rvert$.
        
        From the definitions of branches and extended graphs, we have:
        \begingroup
        \setlength{\abovedisplayskip}{2pt}
        \setlength{\belowdisplayskip}{2pt}
        \setlength{\abovedisplayshortskip}{2pt}
        \setlength{\belowdisplayshortskip}{2pt}
        $$\max\{|V_1|, |V_2|\}=\max\{|V_1'|, |V_2'|\}$$
        \endgroup
        
        From the definition of GBD, we can obtain:
        \begingroup
        \setlength{\abovedisplayskip}{2pt}
        \setlength{\belowdisplayskip}{2pt}
        \setlength{\abovedisplayshortskip}{2pt}
        \setlength{\belowdisplayshortskip}{2pt}
        \begin{align*}
        GBD&(G_1,G_2)=\max\{|V_1|,|V_2|\}-|B_{G_1}\cap B_{G_2}| \\   
        &=\max\{|V_1'|, |V_2'|\}-|B_{G_1'}\cap B_{G_2'}|=GBD(G_1',G_2')
        \end{align*}
        \endgroup
        
        That is, $GBD(G_1,G_2)= GBD(G_1',G_2')$.
    \end{proof}
    
    \vspace{-0.8em}
    \section{Closed Forms of Equations} \label{Appendix-Closed-Forms-of-Equations}
    \vspace{-0.8em}
    
    \subsection{Closed Form of Equation (\ref{equal-lambda-1-expansion})}
    
    \begingroup
    \setlength{\abovedisplayskip}{-10pt}
    \setlength{\belowdisplayskip}{2pt}
    \setlength{\abovedisplayshortskip}{2pt}
    \setlength{\belowdisplayshortskip}{2pt}
    \begin{align}\label{equal-theo-all-trans}
    & \Lambda_1(G_1',G_2';\tau, \varphi) = Pr[GBD= \varphi \mid GED=\tau] \nonumber \\
    =& \sum_{x}\Omega_1(x,\tau) \sum_{m}{\Omega_2(m,x,\tau) \sum_{r} \Omega_3(r,\varphi) \Omega_4(x,r,m) } \\        
    &\Omega_1(x,\tau)=\textstyle \mathcal{H}\left(x; |V_1'|+{|V_1'| \choose 2}, |V_1'|, \tau\right) \label{equal-omega-1}  \\
    &\Omega_2(m,x,\tau)=\textstyle {{{|V_1'|\choose 2}\choose{\tau-x}}}^{-1}{\sum_{t=0}^m {(-1)^{m-t}}{{|V_1'|}\choose{m}}{m \choose t}{{t\choose 2}\choose{\tau-x}}} \label{equal-omega-2} \\
    &\Omega_3(r, \varphi)=\textstyle {r \choose r-\varphi}\cdot \frac{(\mathbb{D}-1)^{\varphi}}{\mathbb{D}^r} \label{equal-omega-3} \\
    &\Omega_4(x,r,m)=\textstyle \mathcal{H}\left(x+m-r; |V_1'|,m,x\right) \\ \label{equal-omega-4} 
    &\mathcal{H}(x ; M,K,N)=\textstyle{K \choose x}{{{M-K \choose N-x}}}{{M \choose N}}^{-1} \\
    &\mathbb{D}=\textstyle|\mathcal{L}_{V}| \cdot {|V_1'| + |\mathcal{L}_E| - 1 \choose |\mathcal{L}_E|}  \label{equal-ged-constant-d}
    \end{align}
    \endgroup
    
    \subsection{Closed Form of Equation (\ref{equal-prior-ged-jeff-final})}
    \begingroup
    \setlength{\abovedisplayskip}{-10pt}
    \setlength{\belowdisplayskip}{2pt}
    \setlength{\abovedisplayshortskip}{2pt}
    \setlength{\belowdisplayshortskip}{2pt}
    \begin{align}
    & \textstyle Pr[GED] = \frac{1}{C}  \sqrt{\sum_{\varphi=0}^{2\tau} \Lambda_1 \cdot \mathcal{Z}^2} \\
    \label{equal-closed-zfunc} & \hspace*{-2.7em} \textstyle \mathcal{Z} = \frac{1}{\Lambda_1}\Big\{\sum_x \Omega_1\sum_m\frac{\partial  \Omega_2}{\partial\tau}\sum_r\Omega_3\Omega_4  + \sum_x \frac{\partial \Omega_1}{\partial\tau} \sum_m\Omega_2\sum_r\Omega_3\Omega_4\Big\}
    \end{align}
    \endgroup
    where $\Lambda_1$ is defined in Equation (\ref{equal-theo-all-trans}),  
    and $\Omega_1$, $\Omega_2$, $\Omega_3$ and $\Omega_4$ are defined in Equations (\ref{equal-omega-1})$\sim$(\ref{equal-omega-4}), respectively. In addition, we have:
    
    \begingroup
    \setlength{\abovedisplayskip}{2pt}
    \setlength{\belowdisplayskip}{2pt}
    \setlength{\abovedisplayshortskip}{2pt}
    \setlength{\belowdisplayshortskip}{2pt}
    \begin{align}\label{equal-domega1}
    &\textstyle \frac{d}{d\tau} \Omega_1(x,\tau) = \textstyle \binom{\frac{1}{2}v(v+1)}{\tau}\binom{v}{x}\binom{\frac{1}{2}v(v-1)}{\tau-x} \cdot F_1 \\
    &\textstyle \frac{d}{d\tau} \Omega_2(m,x,\tau) = \textstyle \binom{\frac{1}{2}v(v-1)}{\tau-x}^{-1} \binom{v}{m} \cdot F_2\cdot \sum_t^m F_3\cdot F_4
    \end{align}
    \endgroup
    where 
    \begingroup
    \setlength{\abovedisplayskip}{2pt}
    \setlength{\belowdisplayskip}{2pt}
    \setlength{\abovedisplayshortskip}{2pt}
    \setlength{\belowdisplayshortskip}{2pt}
    \begin{align}
    F_1&=\textstyle H(\tau)-H(\frac{1}{2}v(v+1)-2\tau)- H(\tau-x) \nonumber \\
    &\hspace{10pt}+ \textstyle H(x-\tau+\frac{1}{2}v(v-1)) \\\displaybreak[0]
    F_2 &= \textstyle \psi(\tau-x+1)-\psi(x+1-\tau+\frac{1}{2}v(v-1)) \\\displaybreak[0]
    F_3 &= \textstyle (-1)^{m-t}\binom{m}{t}\binom{\frac{1}{2}t(t-1)}{\tau-x} \\\displaybreak[0]
    F_4 &= \textstyle 1 + \psi(x+1-\tau+\frac{1}{2}t(t-1)) - \psi(\tau-x+1) \label{equal-prior-ged-solve-final}
    \end{align}
    \endgroup
    
    Here, $v$ is short for $|V_1'|$, $\tau$ is short for of GED, and $x,m,t$ are summation subscripts in Equation (\ref{equal-theo-all-trans}). $H(n)$ is the $n$-th Harmonic Number, and $\psi(\cdot)$ is the Digamma Function \cite{wiki:digamma}.

       \section{Proof of Equation (8)} \label{proof-theo-all-trans}
   Define:
   \begin{align}\label{equal-theo-all-trans-begin}
   \Omega=Pr[GBD=\varphi \mid GED=\tau]
   \end{align}
   By marginalizing out $S$ from $\Omega$:
   \begin{align*}
   \Omega =\sum_{s}\{Pr[GBD=\varphi, S=s \mid GED=\tau]\}
   \end{align*}
   By applying Chain Rule on $\Omega$:
   \begin{align*}
   \Omega = \sum_{s}&\{Pr[GBD=\varphi \mid S=s, GED=\tau] \nonumber \cdot Pr[S=s \mid GED=\tau] \}
   \end{align*}
   In our model, every GED operation sequence is randomly selected with same probability, therefore,
   \begin{align*}
   Pr[S=s \mid GED=\tau] = 1 / |SEQ|
   \end{align*}
   Let $N=\lvert SEQ\rvert$ and we have:
   \begin{align*}
   \Omega &= \frac{1}{N}\sum_{s}\left\{Pr\left[GBD=\varphi \mid S=s, GED=\tau\right] \right\}
   \end{align*}
   By marginalizing out $X$ and $Y$ from $\Omega$:
   \begin{align}
   \Omega = \frac{1}{N}\sum_{s,x,y}&\{Pr[GBD=\varphi, X=x, Y=y \mid \nonumber S=s, GED=\tau] \}
   \end{align}
   Since $Y = \tau-x$ when given $GED=\tau$ and $X=x$:
   \begin{align}
   \Omega = \frac{1}{N}\sum_{s,x}\{Pr[GBD=\varphi, X=x, Y=\tau-x \mid \nonumber S=s, GED=\tau] \}
   \end{align}
   By applying Chain Rule on $\Omega$:
   \begin{align*}
   \Omega = \frac{1}{N}\sum_{s,x}&\{Pr[GBD=\varphi \mid X=x, Y=\tau-x, \nonumber S=s, GED=\tau] \nonumber \\
   &\cdot Pr[X=x, Y=\tau-x \mid S=s, GED=\tau] \}
   \end{align*}
   According to the Bayesian Network in Figure \ref{fig-bayes-net}, we have:
   \begin{align*}
   \Omega = \frac{1}{N}\sum_{s}\sum_{x}&\left\{Pr\left[GBD=\varphi \mid X=x, Y=\tau-x\right]\right. \nonumber \\
   &\left.\cdot Pr\left[X=x, Y=\tau-x \mid S=s\right] \right\}
   \end{align*}
   Define: 
   \begin{align}
   \Theta_1(x,\tau)&=Pr\left[GBD=\varphi \mid X=x, Y=\tau-x\right] \\
   \Omega_1(x,\tau)&=\frac{1}{N}\sum_{s}Pr\left[X=x, Y=\tau-x \mid S=s\right]
   \end{align}
   Then:
   \begin{align}
   \Omega = \sum_{x}\left\{\Omega_1(x,\tau) \cdot \Theta_1(x,\tau) \right\} 
   \end{align}
   Likewise, by marginalizing out $Z$ from $\Theta_1(x,\tau)$ and applying Chain Rule:
   \begin{align*}
   \Theta_1(x,\tau)&=\sum_{m}\{Pr[GBD=\varphi \mid Z=m, X=x,Y=\tau-x]\nonumber \\
   &   
   \cdot Pr[Z=m \mid X=x, Y=\tau-x]  \}  
   \end{align*}
   From Bayesian Network in Figure \ref{fig-bayes-net}, we have:
   \begin{align*}
   \Theta_1(x,\tau)=&\sum_{m}\left\{Pr\left[GBD=\varphi \mid X=x, Z=m\right] \right. \nonumber \\
   &\left. \cdot Pr\left[Z=m \mid Y=\tau-x\right] \right\}
   \end{align*}
   Define: 
   \begin{align}
   &\Omega_2(m,x,\tau)=Pr\left[Z=m \mid Y=\tau-x\right] \\
   &\Theta_2(m,x,\varphi)=Pr\left[GBD=\varphi \mid X= x, Z=m\right]
   \end{align}
   We have:
   \begin{align}\label{equal-omega-proof-omega2}
   \Theta_1(x,\tau,\varphi)=\sum_{m}\left\{\Omega_2(m,x,\tau) \cdot \Theta_2(m,x,\varphi) \right\} 
   \end{align}
   Likewise, by marginalizing out $R$ from $\Theta_2(m,x,\varphi)$ and then applying Chain Rule:
   \begin{align*}
   \Theta_2(m,x,\varphi)=\sum_{r}&\left\{Pr\left[GBD=\varphi \mid R=r, X=x, Z=m\right] \right. \nonumber \\
   &\left. \cdot Pr\left[R=r \mid X=x, Z=m\right] \right\}
   \end{align*}
   From Bayesian Network in Figure \ref{fig-bayes-net}, we have:
   \begin{align*}
   \Theta_2(m,x,\varphi)=\sum_{r}&\{Pr\left[GBD=\varphi \mid R=r\right] \\
   &\cdot Pr\left[R=r \mid X=x, Z=m\right] \}
   \end{align*}
   Define:
   \begin{align}
   \label{equal-omega-proof-omega4} \Omega_3(r,\varphi)&=Pr\left[GBD=\varphi \mid R=r\right] \\
   \label{equal-omega-proof-omega5} \Omega_4(x,r,m)&=Pr\left[R=r \mid X=x, Z=m\right]
   \end{align}
   Finally, from Equations (\ref{equal-theo-all-trans-begin}) $\sim$ (\ref{equal-omega-proof-omega5}) we can obtain:
   \begin{align*}
   \Omega&=\sum_{x}\Omega_1(x,\tau) \cdot \Theta_1(x,\tau) \nonumber \\
   &=\sum_{x}\Omega_1(x,\tau) \cdot \sum_{m}{\Omega_2(m,x,\tau) \cdot \Theta_2(m,x,\varphi)} \nonumber \\
   &=\sum_{x}\Omega_1(x,\tau) \cdot \sum_{m}{\Omega_2(m,x,\tau) \cdot \sum_{r} \Omega_3(r,\varphi) \cdot \Omega_4(x,r,m) } 
   \end{align*}
   
   Therefore, Equation (8) is proved, while the formulae for calculating $\Omega_1$, $\Omega_2$, $\Omega_3$ and $\Omega_4$ are given in Lemmas \ref{LEMMA_CAL_OMEGA1}, \ref{LEMMA_CAL_OMEGA2}, \ref{LEMMA_CAL_OMEGA3} and \ref{LEMMA_CAL_OMEGA4}, respectively. Please refer to Appendices \ref{proof-lemma-cal-omega1}, \ref{proof-lemma-cal-omega2}, \ref{proof-lemma-cal-omega3} and \ref{proof-lemma-cal-omega4} for Lemmas \ref{LEMMA_CAL_OMEGA1}, \ref{LEMMA_CAL_OMEGA2}, \ref{LEMMA_CAL_OMEGA3} and \ref{LEMMA_CAL_OMEGA4}, respectively.

   \section{Lemma \ref{LEMMA_CAL_OMEGA1} and its Proof} \label{proof-lemma-cal-omega1}
   
   \begin{lemma} \label{LEMMA_CAL_OMEGA1}
       Given $GED=\tau$, for any integer $x \in [0,\tau]$, we have:
       \begin{align}
       \Omega_1(x,\tau) & \textstyle = \frac{1}{|SEQ|}\sum_{s \in SEQ} Pr[X=x, Y=\tau-x |S=s] \nonumber \\
       & \textstyle = \mathcal{H}\left(x; |V_1'|+{|V_1'| \choose 2}, |V_1'|, \tau\right)
       \end{align}
       where function $\mathcal{H}(x;M,K,N)$ is defined in Equation (\ref{equal-hyge-pmf}).
   \end{lemma}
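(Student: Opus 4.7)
The plan is to identify $\Omega_1(x,\tau)$ with the probability mass function of a hypergeometric draw by a counting argument. By the extended-graph reduction in Section \ref{sec-extended-graphs}, every operation appearing in a minimum-length sequence $seq_s \in SEQ$ is either an RV operation acting on one of the $|V_1'|$ vertices of $G_1'$, or an RE operation acting on one of the $\binom{|V_1'|}{2}$ (potential, possibly virtual) edges of $G_1'$. This gives a total of $M \triangleq |V_1'| + \binom{|V_1'|}{2}$ relabelable \emph{slots}, of which $K \triangleq |V_1'|$ are vertex slots.

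First, I would observe that in any minimum-length sequence, no slot can be relabeled twice: two relabelings of the same slot can always be merged into a single relabeling (or cancelled, if they restore the original label), contradicting minimality. Hence each $seq_s$ selects a set of exactly $\tau$ distinct slots from the $M$ available ones, together with an ordering of the $\tau$ operations. Second, since $S$ is uniform on $SEQ$ and $Pr[X=x,\,Y=\tau-x\mid S=s]$ is the indicator of whether $seq_s$ uses exactly $x$ vertex slots and $\tau-x$ edge slots, the definition of $\Omega_1$ reduces to the fraction of sequences in $SEQ$ whose underlying slot-set contains exactly $x$ vertex slots.

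Next, invoking the symmetry of the model — that every $\tau$-subset of slots contributes the same number of valid orderings (namely $\tau!$, since the relabeling operations on distinct slots commute) — the fraction above equals
\begin{align*}
\Omega_1(x,\tau)
&= \frac{\binom{K}{x}\binom{M-K}{\tau-x}\,\tau!}{\sum_{x'=0}^{\tau}\binom{K}{x'}\binom{M-K}{\tau-x'}\,\tau!}
= \frac{\binom{K}{x}\binom{M-K}{\tau-x}}{\binom{M}{\tau}},
\end{align*}
where the denominator is evaluated by Vandermonde's identity. Substituting $M = |V_1'|+\binom{|V_1'|}{2}$ and $K = |V_1'|$ yields exactly $\mathcal{H}\!\left(x;\,|V_1'|+\binom{|V_1'|}{2},\,|V_1'|,\,\tau\right)$, as claimed.

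The main obstacle will be the symmetry step: I must argue that \emph{every} $\tau$-subset of slots is in fact realized as the support of some minimum-length sequence, and realized by the same number of orderings. The second part (equal number of orderings) follows from the commutativity of relabelings on disjoint slots once the support is fixed. The first part is a model-level exchangeability assumption — that within the probabilistic framework the specific identities of vertex and edge slots are immaterial, and only their counts matter — which must be stated cleanly so that the counting argument above applies uniformly across choices of $x$.
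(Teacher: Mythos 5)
Your proposal matches the paper's own proof: both identify $|V_1'|$ vertex slots and $\binom{|V_1'|}{2}$ edge slots in the complete extended graph $G_1'$ and obtain $\Omega_1$ as the hypergeometric ratio $\binom{|V_1'|}{x}\binom{\binom{|V_1'|}{2}}{\tau-x}\big/\binom{|V_1'|+\binom{|V_1'|}{2}}{\tau}$. Your added observations --- that minimality forbids relabelling a slot twice, that the $\tau!$ orderings cancel, and that the argument rests on a model-level exchangeability assumption over $\tau$-subsets of slots --- are refinements the paper leaves implicit, but the underlying argument is the same.
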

   
   \begin{proof}
       From the definitions, $\Omega_1(x)$ is the probability of a random graph edit sequence $seq_s$ exactly relabelling $x$ vertices and $\tau-x$ edges. Since the extended graph $G_1'$ is a complete graph, it has $|E_1'|={|V_1'| \choose 2}$ edges. Therefore, the number of ways to choose $x$ vertices for relabelling is $|V_1'| \choose x$, and the number of ways to choose $\tau-x$ edges for relabelling is ${|V_1'| \choose 2} \choose \tau-x$. Then we have:
       \begin{align*}
       \Omega_1(x,\tau)&=\frac{{|V_1'| \choose x} \cdot {{|V_1'| \choose 2} \choose \tau-x }}{{{|V_1'|+{|V_1'| \choose 2}} \choose \tau}} 
       \textstyle =\mathcal{H}\left(x; |V_1'|+{|V_1'| \choose 2}, |V_1'|, \tau\right)
       \end{align*}
       where function $\mathcal{H}(x;M,K,N)$ is defined in Equation (\ref{equal-hyge-pmf}).
   \end{proof}

   \section{Lemma \ref{LEMMA_CAL_OMEGA2} and its Proof} \label{proof-lemma-cal-omega2}
   
   \begin{lemma} \label{LEMMA_CAL_OMEGA2}
       Given $GED=\tau$, for any integer $x \in [0,\tau]$ and $m \in [0, |V_1'|] $, we have:
       \begin{align}
       \Omega_2(m,x,\tau)&= Pr\left[Z=m \mid Y=\tau-x\right] \nonumber \\
       & \textstyle = {{{|V_1'|\choose 2}\choose{\tau-x}}}^{-1}{\sum_{t=0}^m {(-1)^{m-t}}{{|V_1'|}\choose{m}}{m \choose t}{{t\choose 2}\choose{\tau-x}}}
       \end{align}
   \end{lemma}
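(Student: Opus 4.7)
The plan is to read $\Omega_2(m,x,\tau)$ as a purely combinatorial quantity: conditional on the random edit sequence relabelling exactly $Y=\tau-x$ edges of the extended graph $G_1'$, and on the fact (from Section \ref{sec-extended-graphs}) that the only relevant operations on $G_1'$ are of types \textbf{RV}/\textbf{RE}, the $\tau-x$ relabelled edges are a uniformly random $(\tau-x)$-subset of the edge set of $G_1'$. Since $G_1'$ is the complete graph on $|V_1'|$ vertices, this edge set has size $\binom{|V_1'|}{2}$, which immediately gives the denominator $\binom{\binom{|V_1'|}{2}}{\tau-x}^{-1}$.

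Next I would count the numerator, i.e., the number of $(\tau-x)$-edge subsets of the complete graph on $|V_1'|$ vertices whose vertex-cover has size exactly $m$. First I pick which $m$ of the $|V_1'|$ vertices are covered in $\binom{|V_1'|}{m}$ ways; after that, I need to count $(\tau-x)$-edge subsets of the complete graph on these $m$ chosen vertices such that every one of the $m$ vertices is incident to at least one selected edge. Call this count $N_m$. Then
\begin{align*}
\Omega_2(m,x,\tau) = \binom{\binom{|V_1'|}{2}}{\tau-x}^{-1} \binom{|V_1'|}{m} N_m,
\end{align*}
and the whole proof reduces to showing that $N_m = \sum_{t=0}^{m}(-1)^{m-t}\binom{m}{t}\binom{\binom{t}{2}}{\tau-x}$.

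For $N_m$ I would apply inclusion-exclusion on the set of "uncovered" vertices. For a subset $T$ of the $m$ chosen vertices, let $g(T)$ be the number of $(\tau-x)$-edge subsets whose edges lie entirely inside $T$; clearly $g(T) = \binom{\binom{|T|}{2}}{\tau-x}$, depending only on $|T|$. If $h(T)$ denotes the number of $(\tau-x)$-edge subsets whose covered vertex set is exactly $T$, then $g(S)=\sum_{T\subseteq S} h(T)$, and Möbius inversion on the Boolean lattice yields $h(S)=\sum_{T\subseteq S}(-1)^{|S|-|T|}g(T)$. Taking $S$ equal to the full $m$-element set and grouping by $t=|T|$ (there are $\binom{m}{t}$ such $T$) gives $N_m=\sum_{t=0}^{m}(-1)^{m-t}\binom{m}{t}\binom{\binom{t}{2}}{\tau-x}$, which plugged into the displayed equation above yields the claimed closed form.

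The only delicate point, and the one I would be most careful about, is the inclusion-exclusion bookkeeping: it is easy to confuse "edges contained in $T$" (what $\binom{\binom{t}{2}}{\tau-x}$ counts) with "edges covering exactly $T$" (what we actually want). The Möbius-inversion step above is what separates the two. A minor secondary point is justifying that, conditional on $Y=\tau-x$, the $\tau-x$ relabelled edges are exchangeable and uniformly distributed over $(\tau-x)$-subsets of the edges of $G_1'$; this follows because the probabilistic model of Section \ref{sec-prob-relation-ged-gbd} picks $seq_s \in SEQ$ uniformly at random, and on the complete graph $G_1'$ any $(\tau-x)$-subset of edges is equally represented among minimum-length sequences once $X=x$ is fixed.
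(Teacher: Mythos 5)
Your proposal is correct and follows essentially the same route as the paper's proof: identify the denominator as $\binom{\binom{|V_1'|}{2}}{\tau-x}$ from the uniform choice of relabelled edges in the complete extended graph, factor out $\binom{|V_1'|}{m}$ for the choice of covered vertex set, and count exact covers by inclusion-exclusion over subsets of the $m$ chosen vertices. Your explicit M\"obius-inversion bookkeeping is simply a more careful writeup of the inclusion-exclusion step the paper invokes in one line, so no further changes are needed.
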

   
   \begin{proof}
       Let $x'=\tau-x$. Since the extended graph $G_1'$ is a complete graph, it has $|E_1'|={|V_1'| \choose 2}$ edges. From definitions of $Y$ and $Z$, $\Omega_2(m,x)$ can be modelled by following problem: 
       
       \begin{itemize}[leftmargin=*]
           \item Randomly select $x'$ edges from a complete graph with $|V_1'|$ vertices and ${|V_1'| \choose 2}$ edges, what is the probability of these edges exactly covering $m$ vertices.
       \end{itemize}
       
       Let $V_X$ be the vertices covered by an edge subset $X \subseteq E_1'$. By the inclusion-exclusion principle, for any vertex subset $S \subseteq V_1'$ where $|S|=m$, the number of possible edge sets $X$ which satisfies $V_X=S$ is:
       \begin{equation*}
       \textstyle k_{x',m}=\sum_{t=0}^{m}(-1)^{m-t}{m \choose t}{{t \choose 2} \choose x'}
       \end{equation*}
       Since the number of sets $S \subseteq V_1'$ of size $m$ is ${|V_1'| \choose m}$, the total number of ways to pick an edge set $X$ which satisfies $|X|=x'$ and $|V_X|=m$ is:
       \begin{align*}
       \textstyle \mathcal{K}_{x', m}=\sum_{t=0}^m {(-1)^{m-t}}{{|V_1'|}\choose{m}}{m \choose t}{{t\choose 2}\choose{x'}} 
       \end{align*}
       Also the number of ways to pick an edge set $X\subseteq E_1'$ is:
       \begin{equation*} \textstyle \mathcal{K}_{x'} = \sum_m{\mathcal{K}_{x', m}}={{|V_1'|\choose 2}\choose{x'}}\end{equation*}
       Therefore, we have:
       \begin{equation*} \textstyle \Omega_2(m,x,\tau)=\frac{\mathcal{K}_{x', m}}{\mathcal{K}_{x'}}=\frac{\mathcal{K}_{\tau-x, m}}{\mathcal{K}_{\tau-x}}\end{equation*}
       So Lemma \ref{LEMMA_CAL_OMEGA2} is proved.
   \end{proof}

   \section{Lemma \ref{LEMMA_CAL_OMEGA3} and its Proof} \label{proof-lemma-cal-omega3}
   
   \begin{lemma} \label{LEMMA_CAL_OMEGA3}
       Given $GBD=\varphi$, for any integer $r \in [0, |V_1'|] $, we have:
       \begin{equation}
       \textstyle \Omega_3(r,\varphi)= Pr\left[GBD=\varphi \mid R=r\right] = {r \choose r-\varphi}\cdot \frac{(\mathbb{D}-1)^{\varphi}}{\mathbb{D}^r}
       \end{equation}
       where $\mathbb{D}$ is the number of all possible branch types, and 
       \begin{align}
       \textstyle \mathbb{D}=|\mathcal{L}_{V}| \cdot {|V_1'| + |\mathcal{L}_E| - 1 \choose |\mathcal{L}_E|}
       \end{align}
   \end{lemma}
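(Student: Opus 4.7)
\section*{Proof Proposal for Lemma \ref{LEMMA_CAL_OMEGA3}}

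The plan is to decompose the $|V_1'|$ vertices of the extended graph $G_1'$ into the $r$ vertices that belong to $R$ (i.e., those touched by a relabelling operation, either directly or via an adjacent edge) and the remaining $|V_1'|-r$ untouched vertices, and then to reason about how many branches of each group still match their counterparts in $G_2'$ after the edit sequence has been carried out.

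First I would establish the easy half: every untouched vertex $v$ has exactly the same branch in the edited $G_1'$ as it had originally, because neither its label nor any of its incident edge labels have been modified. Since the edit sequence transforms $G_1'$ into $G_2'$, the branch of $v$ in $G_1'$ must already agree with the branch of its image in $G_2'$. Consequently these $|V_1'|-r$ vertices contribute $|V_1'|-r$ guaranteed isomorphic pairs to $B_{G_1'}\cap B_{G_2'}$, and by Definition \ref{def-gbd} together with Theorem \ref{theorem-preserve-gbd} the random variable $GBD$ equals exactly the number of touched vertices whose new branch fails to match the corresponding branch in $G_2'$.

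Next I would invoke the probabilistic modelling assumption embedded in our Bayesian network: conditional on $R=r$, each of the $r$ touched vertices acquires one branch drawn independently and uniformly from the set of all $\mathbb{D}$ possible branch types. Under this assumption each touched branch coincidentally matches its counterpart in $G_2'$ with probability $1/\mathbb{D}$ and fails to match with probability $(\mathbb{D}-1)/\mathbb{D}$, so the count of mismatches is $\mathrm{Binomial}(r,(\mathbb{D}-1)/\mathbb{D})$. Setting this count equal to $\varphi$ and using $\binom{r}{\varphi}=\binom{r}{r-\varphi}$ yields
\begin{equation*}
Pr[GBD=\varphi\mid R=r]=\binom{r}{r-\varphi}\left(\frac{\mathbb{D}-1}{\mathbb{D}}\right)^{\varphi}\left(\frac{1}{\mathbb{D}}\right)^{r-\varphi}=\binom{r}{r-\varphi}\cdot\frac{(\mathbb{D}-1)^{\varphi}}{\mathbb{D}^{r}},
\end{equation*}
which is exactly the claimed closed form.

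It remains to justify the expression for $\mathbb{D}$. In the extended graph every vertex has degree $|V_1'|-1$, so a branch consists of a root label chosen from the $|\mathcal{L}_V|$ non-virtual vertex labels together with a sorted multiset of $|V_1'|-1$ edge labels drawn from the alphabet $\mathcal{L}_E\cup\{\varepsilon\}$ of size $|\mathcal{L}_E|+1$. A standard stars-and-bars argument counts the number of such multisets as $\binom{(|\mathcal{L}_E|+1)+(|V_1'|-1)-1}{|V_1'|-1}=\binom{|V_1'|+|\mathcal{L}_E|-1}{|\mathcal{L}_E|}$, and multiplying by $|\mathcal{L}_V|$ gives the stated value of $\mathbb{D}$. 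The main obstacle I anticipate is the independence-and-uniformity assumption in the middle step: verifying that conditioning on $R=r$ really does render the $r$ new branches exchangeable and uniformly distributed over the $\mathbb{D}$ types, independently of one another, is the only place where a purely combinatorial argument no longer suffices and the probabilistic model defined in Section \ref{sec-prob-relation-ged-gbd} has to be invoked explicitly; everything else reduces to bookkeeping about which vertices are unaffected by the edit sequence.
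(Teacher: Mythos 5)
Your proposal is correct and follows essentially the same route as the paper's proof: condition on $R=r$, treat the $r$ affected branches as independent uniform draws over the $\mathbb{D}$ branch types, and read off the probability of exactly $\varphi$ mismatches as a binomial, with $\mathbb{D}$ obtained by multiplying the $|\mathcal{L}_V|$ root-label choices by a stars-and-bars count of edge-label multisets. The paper merely dresses the same binomial computation up as a ``ball-pair colouring'' problem in which both sides of each pair are coloured at random (the $r!$ pairings cancel), whereas you fix the target branches in $G_2'$ and randomize only the new ones --- the two formulations give the identical distribution, and your closing caveat about the uniformity/independence assumption matches the assumption the paper itself makes explicitly.
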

   
   \begin{proof}
       
       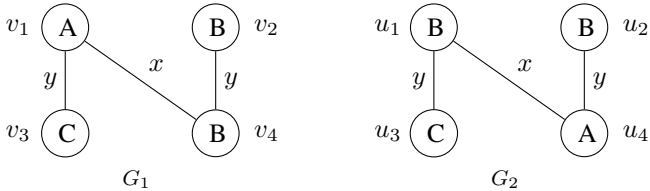
\begin{figure}[!htbp]
           \captionsetup[subfigure]{labelformat=empty}
           \subfloat[$G_1$]{
               \begin{tikzpicture}[
               vertex/.style={draw,circle,text width=5pt,align=center},
               tag/.style={text width=5pt,align=center}
               ]
               \node[vertex] (v1) at (0,1.41) {A};
               \node[tag] (t2) at (-0.7,1.41) {$v_1$};
               \node[vertex] (v2) at (2,1.41) {B};
               \node[tag] (t3) at (2.6,1.41) {$v_2$};
               \node[vertex] (v3) at (0,0) {C};
               \node[tag] (t2) at (-0.7,0) {$v_3$};
               \node[vertex] (v4) at (2,0) {B};
               \node[tag] (t3) at (2.6,0) {$v_4$};
               \path (v1) edge[-] (v3);
               \path (v1) edge[-] (v4);
               \path (v2) edge[-] (v4);
               \node[tag] (te1) at (-0.2,0.7) {$y$};
               \node[tag] (te3) at (1.2,0.9) {$x$};
               \node[tag] (te3) at (2.2,0.7) {$y$};
               \end{tikzpicture}
           }
           \hfill
           \subfloat[$G_2$]{
               \begin{tikzpicture}[
               vertex/.style={draw,circle,text width=5pt,align=center},
               tag/.style={text width=5pt,align=center}
               ]
               \node[vertex] (v1) at (0,1.41) {B};
               \node[tag] (t2) at (-0.7,1.41) {$u_1$};
               \node[vertex] (v2) at (2,1.41) {B};
               \node[tag] (t3) at (2.6,1.41) {$u_2$};
               \node[vertex] (v3) at (0,0) {C};
               \node[tag] (t2) at (-0.7,0) {$u_3$};
               \node[vertex] (v4) at (2,0) {A};
               \node[tag] (t3) at (2.6,0) {$u_4$};
               \path (v1) edge[-] (v3);
               \path (v1) edge[-] (v4);
               \path (v2) edge[-] (v4);
               \node[tag] (te1) at (-0.2,0.7) {$y$};
               \node[tag] (te3) at (1.2,0.9) {$x$};
               \node[tag] (te3) at (2.2,0.7) {$y$};
               \end{tikzpicture}
           }
           \caption{Graphs for the Proof of Lemma \ref{LEMMA_CAL_OMEGA3}}
           \label{fig-lemma3-example}
       \end{figure}
       
       From the definition, $R=r$ means there are exactly $r$ branches
       whose vertices or edges are relabelled when transforming $G_1$ into $G_2$
       during the graph edit process.
       For simplicity, we define these branches as \emph{relabelled branches}.  
       However, the value of $R$ could be larger than the difference between branches in $G_1$ and $G_2$,
       i.e., $GBD(G_1,G_2)$, because it is possible that a subset of relabelled branches 
       are just a re-ordering of the original ones, which are denoted by $\tilde{B}_{G_1}$.
       
       For instance, in the graph edit process of transforming 
       $G_1$ into $G_2$, as shown in Figure \ref{fig-lemma3-example}, 
       we need to relabel $v_1$ by label $B$ and $v_4$ by label $A$,
       which means that the number of relabelled branches $R=2$. However,
       $GBD(G_1,G_2)=0$ since this graph edit process just \emph{swaps}
       the branches rooted at $v_1$ and the one rooted at $v_4$,
       so $\tilde{B}_{G_1} = \{B(v_1), B(v_4)\}$ in this example.
       To model the probability of this situation, we define $t=r -\varphi$, so $t$ is essentially 
       the size of $\tilde{B}_{G_1}$. Here we assume that the occurrence probability
       of each branch type is equal, and denote the number of branch types by $\mathbb{D}$.
       Therefore, $\Omega_3(r)$ can be abstracted as the \emph{ball-pair colouring} problem as follows.
       \begin{itemize}[leftmargin=*]
           \item Given two lists of balls $A_1$, $A_2$ of size $r$, where
           each ball has been randomly coloured by one of the $\mathbb{D}$ colours.
           We define the \emph{ball pair} as two balls where one ball from $A_1$
           and another from $A_2$, where every ball can only be in one pair.
           What is the probability that there are exactly $t$ \emph{ball pairs} 
           where the colour inside each pair is the same.
       \end{itemize}
       
       To solve this problem, we take the following steps to count
       the possible ways to colour balls with the conditions above.
       \begin{enumerate}[leftmargin=*]
           \setlength\itemsep{-2pt}
           \item The number of ways to form \emph{ball pairs} is $r!$;
           \item When pairs are fixed, the number of ways to select $t$ pairs and
           assign $t$ colours to them is ${r \choose t} \mathbb{D}^t$;
           \item For the remaining $r-t$ pairs, the number of ways to 
           assign them different colours inside each pair is $[\mathbb{D}(\mathbb{D}-1)]^{r-t}$.
       \end{enumerate}
       
       Since there are totally $r!\cdot \mathbb{D}^{2r}$ ways to form ball pairs and assign colours to all balls, we have:
       \begin{align*}
       \Omega_3(r,\varphi) = \frac{r! {r \choose t} \mathbb{D}^t [\mathbb{D}(\mathbb{D}-1)]^{r-t}}{r! \cdot \mathbb{D}^{2r}} =  {r \choose r-\varphi}\cdot \frac{(\mathbb{D}-1)^{\varphi}}{\mathbb{D}^r}
       \end{align*}
       where the number of possible branch types $\mathbb{D}$ is the number of ways
       to assign $|\mathcal{L}_V|+1$ labels (including the \emph{virtual} label) to the vertex in a branch,
       multiplying the number of ways to assign $|\mathcal{L}_E|+1$ labels to $|V|-1$
       edges in the same branch. This is a variant of the classic object colouring 
       problem and we omit the detailed proof here. 
   \end{proof}

   \section{Lemma \ref{LEMMA_CAL_OMEGA4} and its Proof} \label{proof-lemma-cal-omega4}

   \begin{lemma} \label{LEMMA_CAL_OMEGA4}
       Given $GBD=\varphi$, for any integer $r \in [0, |V_1'|] $, we have:
       \begin{align}
       \Omega_4(x,r,m)&= Pr[R=r \mid X= x, Z=m] \nonumber \\
       & = \mathcal{H}\left(x+m-r; |V_1'|,m,x\right)
       \end{align}
       where $\mathcal{H}(x;M,K,N)$ is defined in Equation (\ref{equal-hyge-pmf}).
   \end{lemma}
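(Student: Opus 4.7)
\section*{Proof Plan for Lemma \ref{LEMMA_CAL_OMEGA4}}

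The plan is to reduce the computation of $\Omega_4$ to a standard hypergeometric overlap calculation between two uniformly random vertex subsets. First I would observe the following structural identity between the random variables involved: if $X=x$ vertices are relabelled and $Z=m$ vertices are covered by relabelled edges, and if $K$ denotes the size of the overlap (vertices that are simultaneously relabelled and covered by some relabelled edge), then by inclusion–exclusion the number of vertices that are either relabelled or covered by relabelled edges is
\begin{equation*}
R = X + Z - K.
\end{equation*}
Hence conditioning on $X=x,\ Z=m$ gives $R=r$ if and only if $K = x+m-r$, and therefore
\begin{equation*}
\Omega_4(x,r,m) = \Pr[K = x+m-r \mid X=x,\, Z=m].
\end{equation*}

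Next I would argue that under our model the set $V_X \subseteq V_1'$ of $x$ relabelled vertices and the set $V_Z \subseteq V_1'$ of $m$ vertices covered by relabelled edges are, conditional on their sizes, independent and each uniformly distributed over the $x$-subsets and $m$-subsets of $V_1'$, respectively. This is where the extended-graph construction pays off: $G_1'$ is obtained from $G_1$ by adding virtual vertices and making it complete, so every vertex (and every edge) is symmetric under the random choice $S$ of a minimum-length edit sequence, and the \textbf{RV} choices act on vertices independently of the \textbf{RE} choices (which in turn determine $V_Z$ only through which edges they touch). Granting this symmetry, the conditional joint law of $(V_X, V_Z)$ is the product of two uniform distributions on subsets of $V_1'$ of the prescribed sizes.

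Given that independence and uniformity, the number of elements shared by a uniformly random $x$-subset and a fixed $m$-subset of an $|V_1'|$-element ground set follows the hypergeometric distribution: the probability that the overlap equals $k$ is
\begin{equation*}
\Pr[K=k \mid X=x, Z=m] \;=\; \frac{\binom{m}{k}\binom{|V_1'|-m}{x-k}}{\binom{|V_1'|}{x}} \;=\; \mathcal{H}(k;\, |V_1'|,\, m,\, x),
\end{equation*}
where $\mathcal{H}$ is the hypergeometric PMF defined earlier. Substituting $k = x+m-r$ yields $\Omega_4(x,r,m) = \mathcal{H}(x+m-r;\, |V_1'|,\, m,\, x)$, which is exactly the claimed closed form.

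The main obstacle I anticipate is the second step: rigorously justifying that, conditional on $(X,Z)=(x,m)$, the induced subsets $V_X$ and $V_Z$ are independent and uniformly distributed. A careful argument would enumerate pairs $(V_X, V_Z)$ of the right sizes and show that each such pair is realised by the same number of minimum-length sequences $\omega$ (so that the uniform measure on $SEQ$ induces the uniform measure on pairs); this uses the full symmetry of the complete extended graph $G_1'$ under permutations of $V_1'$ together with the fact that in the extended model only \textbf{RV} and \textbf{RE} operations occur and they act on disjoint aspects of the graph. Once this symmetry is established, the hypergeometric identity in the final step is purely combinatorial and immediate.
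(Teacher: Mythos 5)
Your proposal is correct and follows essentially the same route as the paper: both reduce $\Omega_4$ to the hypergeometric probability that a uniformly random $x$-subset of $V_1'$ (the relabelled vertices) overlaps a random $m$-subset (the vertices covered by relabelled edges) in exactly $t=x+m-r$ elements. The only difference is that you explicitly flag the independence-and-uniformity of the two subsets as the step needing justification via the symmetry of the complete extended graph, whereas the paper simply asserts this modelling step; your added care is reasonable but does not change the argument.
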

   
   \begin{proof}
       Let $t=x+m-r$, so $t$ is the number of vertices both relabelled and covered by relabelled edges. Since the order of relabelling operations does not affect the graph edit result, $\Omega_4(x,r,m)$ can be modelled by the following problem: 
       \begin{itemize}[leftmargin=*]
           \item First randomly select $m$ vertices from $V_1'$ and tag these vertices as special ones. Then randomly select $x$ vertices from $V_1'$. What is the probability of exactly selecting $t$ special vertices in the second selection.
       \end{itemize}
       Since the number of ways to exactly select $t$ special vertices in the second pick is ${|V_1'| \choose m}{m\choose t}{{|V_1'|-m}\choose{x-t}}$, and the number of all ways to pick $m$ and $x$ vertices separately from $V_1'$ is ${|V_1'| \choose m}{{|V_1'|}\choose{x}}$, we have:
       \begin{align*}
       \Omega_4(x,r,m)&=\frac{{|V_1'| \choose m}{m\choose t}{{|V_1'|-m}\choose{x-t}}}{{|V_1'| \choose m}{{|V_1'|}\choose{x}}} = \frac{{m\choose t}{{|V_1'|-m}\choose{x-t}}}{{{|V_1'|}\choose{x}}} \nonumber \\
       &= \mathcal{H}\left(x+m-r; |V_1'|,m,x\right)
       \end{align*}
       where function $\mathcal{H}(x;M,K,N)$ is defined in Equation (\ref{equal-hyge-pmf}).
   \end{proof}
   
   \section{Graph Generating Algorithm} \label{append-graph-gen-algo}
   
   The algorithm for generating the synthetic graphs (i.e., Syn-1 and Syn-2 data sets) is as follows. 
   
   The algorithm aims to generate a set of graphs $G$, such that for $\forall g_i,g_j \in G$, the edit distance between $g_i$ and $g_j$ is known. In order to achieve this goal, we first defined a valid modification center, then we designed the generation algorithm which consists two phases: (1) Generate a random ``qualified'' graph as a template; (2) Modify the template to generate the graph set $G$.
   
   A modification center is a vertex $v_c$ in graph $g$ such that $\forall v_i,v_j \in \{\text{neighbors of }v_c\},i \neq j$, the edit distance between $g-e(i, c)$ and $g-e(j, c)$ is greater than 0, where $e(i,c)$ is the edge between vertices $v_i$ and $v_c$, and $e(j,c)$ is the edge between vertices $v_j$ and $v_c$. For any modification center $v_c$, if we randomly modify its adjacent edges, the edit distance between the original and modified graphs can be simply calculated by comparing the edges adjacent to their modification centers in polynomial time.
   
   However, identifying whether a vertex is a modification center is difficult. Therefore, we propose a relatively efficient signature that can help us to filter out some special cases that a vertex is certainly a modification center. For vertex $v_c$'s neighbor $v_i$, the signature is defined as $\{s_0, s_1, s_2,\ldots ,s_n\}$, where $s_0$ is a set contains $v_i$'s label, and other sets $s_k = \{(v_j.label \allowbreak, e(i,j).label); \forall v_j \in \text{k-hop neighbors of }v_i\}, k > 0$. If a vertex is a modification center, then its neighbors' signatures must be pair-wised different. That is, if we find a vertex whose neighbors' signatures are pair-wised different, this vertex must be a modification center. If there is no such a vertex, we re-generate the graph until success. 
   
   In our settings, the graph should be connected, and have at least one modification center of degree at least $d$, to produce a set of graphs among which the edit distance varies from 0 to $d$. To ensure the connectivity of the graph, we force each vertex $v_i$ is connected to another vertex $v_j$, where $i > j$. After all vertices being connected, we then add remaining edges according to the type of the graph. For \textit{random} graphs, we randomly add edges between in-adjacent vertices. For \textit{scale-free} graphs, we add constant number of edges to each vertex $v_i$, where the neighbors of $v_i$ is picked from $\{v_j; \forall j < i\}$ with the probability proportional to their degrees. 
   
   \section{Supplemental Figures For Experiments on Synthetic Data Sets} \label{append-exp-online-syn}
   
   Please refer to Figures \ref{fig-online-acc-syn1-tau15}$\sim$\ref{fig-online-f1-syn1-tau30} on the last page.
   
   \section{Theorem for Average Degree of Scale-free Graphs} \label{append-theorem-scale-free-degree}
   
   \begin{theorem} \label{theorem-scale-free-degree}
       The average degree of a scale free graph $G$ is $O(\log{n})$, where $n$ is the number of vertices in graph $G$.
   \end{theorem}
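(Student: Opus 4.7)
The plan is to exploit the defining property of scale-free graphs: the degree distribution follows a power law, so the fraction of vertices with degree $k$ is $P(k) = C \cdot k^{-\gamma}$ for some exponent $\gamma$ (typically $\gamma \ge 2$) and a normalization constant $C$ that makes $\sum_{k=1}^{k_{\max}} P(k) = 1$. Since degrees are bounded by $n-1$, I would take $k_{\max} = n$ as the largest possible degree and compute the average degree as $\bar{d} = \sum_{k=1}^{n} k \cdot P(k) = C \sum_{k=1}^{n} k^{1-\gamma}$.

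First, I would fix the normalization: for $\gamma \ge 2$ the series $\sum_{k=1}^{\infty} k^{-\gamma}$ converges to a constant (namely $\zeta(\gamma)$), so $C = \Theta(1)$ independent of $n$. Next, I would estimate the sum $S_n := \sum_{k=1}^{n} k^{1-\gamma}$ by comparing it with the integral $\int_{1}^{n} x^{1-\gamma}\,dx$. This yields two cases: when $\gamma > 2$, the integral converges as $n \to \infty$, giving $S_n = O(1)$; when $\gamma = 2$, the integral evaluates to $\ln n$, so $S_n = \Theta(\log n)$. In either case $\bar{d} = C \cdot S_n = O(\log n)$, which is the desired bound.

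The main obstacle, and the place where care is needed, is the boundary case $\gamma < 2$: there the integral diverges polynomially and $S_n = \Theta(n^{2-\gamma})$, so the theorem would \emph{fail}. I would therefore justify the restriction $\gamma \ge 2$ by pointing out that this is precisely the regime observed empirically in the real scale-free graphs cited in Section~\ref{sec-experiments} (and in the standard literature on scale-free networks, e.g.\ Barab\'asi--Albert graphs have $\gamma = 3$), so the hypothesis is natural and the statement $O(\log n)$ captures the worst case $\gamma = 2$.

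One subtle point to check is the lower tail: since the power law strictly speaking holds for $k \ge k_{\min}$ for some fixed $k_{\min} \ge 1$, any finitely many small-degree vertices contribute only an $O(1)$ additive term to $\bar d$, which is absorbed into $O(\log n)$. With that taken care of, combining the normalization bound, the integral estimate, and the case analysis on $\gamma$ yields the claim.
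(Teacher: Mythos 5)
Your proof is correct and follows essentially the same route as the paper's: both compute the average degree as $\sum_k k \cdot C k^{-\gamma}$ under the power-law assumption and bound the resulting sum by $O(\log n)$, the paper doing so via the one-line estimate $k\cdot Ck^{-\delta} < C/k$ for $\delta>2$ and the harmonic number $H(n-1)$, where you use an integral comparison. Your case analysis is in fact slightly sharper, since it shows the average degree is $O(1)$ whenever the exponent strictly exceeds $2$ (which is exactly the paper's assumed regime $2<\delta<3$), with $\Theta(\log n)$ arising only at the boundary exponent $2$.
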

   
   \begin{proof}
       The fraction of vertices with degree $k$ in scale-free graphs is $C\cdot k^{-\delta}$ \cite{wiki:scale-free-graphs}, where 
       $2<\delta <3$, and $C$ is a constant. Therefore, the average degree in scale-free graphs is:
       \begingroup
       \setlength{\abovedisplayskip}{2pt}
       \setlength{\belowdisplayskip}{2pt}
       \setlength{\abovedisplayshortskip}{2pt}
       \setlength{\belowdisplayshortskip}{2pt}
       \begin{align}
       d = \sum_{k=1}^{n-1} k \cdot C k^{-\delta} < \sum_{k=1}^{n-1}\frac{C}{k} = C \cdot H(n-1) = O(\log{n})
       \end{align}
       \endgroup
       where $H(n)$ is the $n$-th Harmonic Number. 
   \end{proof}
   
   \begin{figure*}[p]
       \minipage{0.27\textwidth}
       \includegraphics[width=\linewidth]{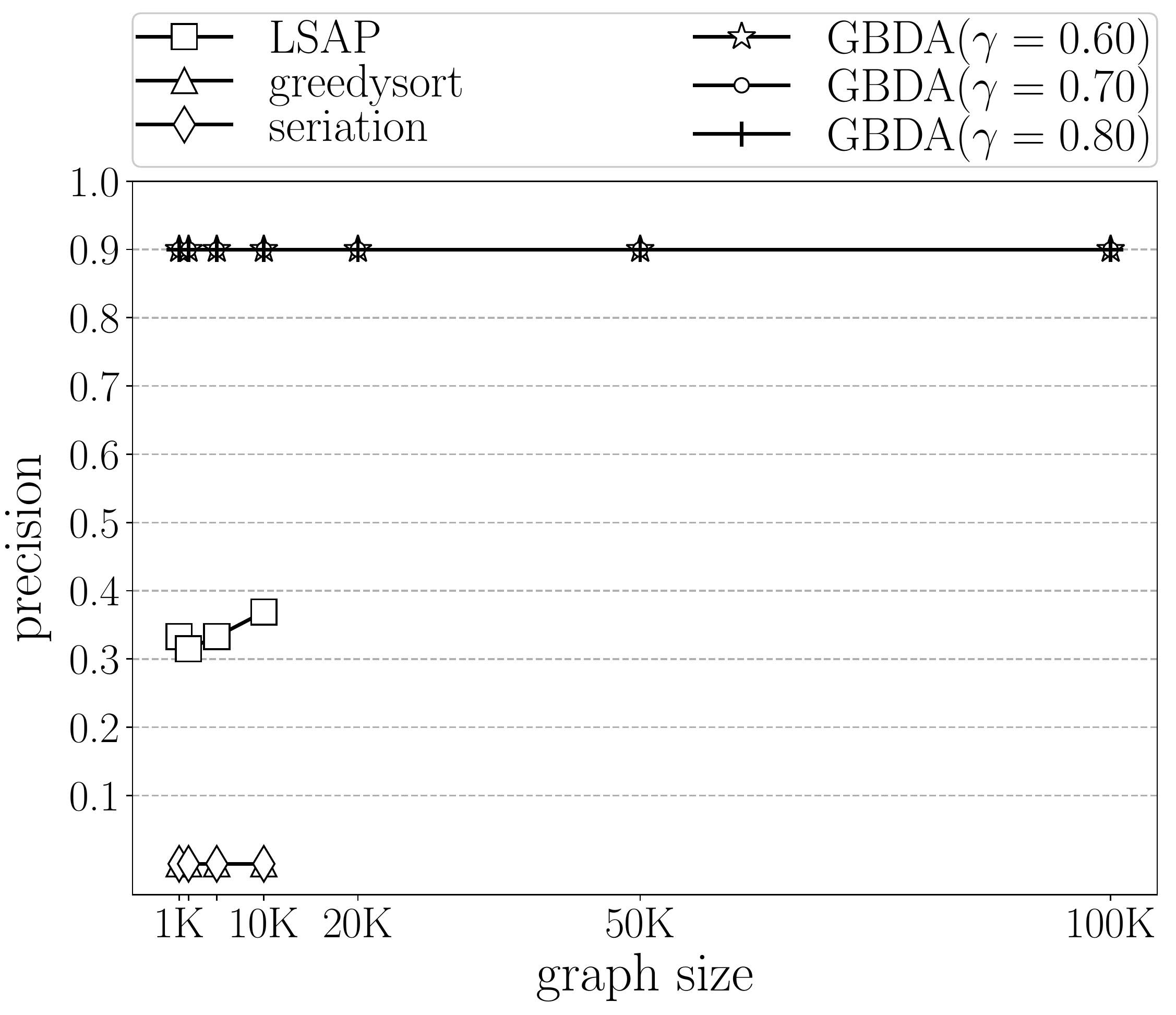}
       \vspace{-20pt}
       \caption{\small{Precision vs. graph size on \mbox{Syn-1} Data Set with various $\gamma$  ($\hat{\tau} = 15$)}}\label{fig-online-acc-syn1-tau15}
       \endminipage\hfill
       \minipage{0.27\textwidth}
       \includegraphics[width=\linewidth]{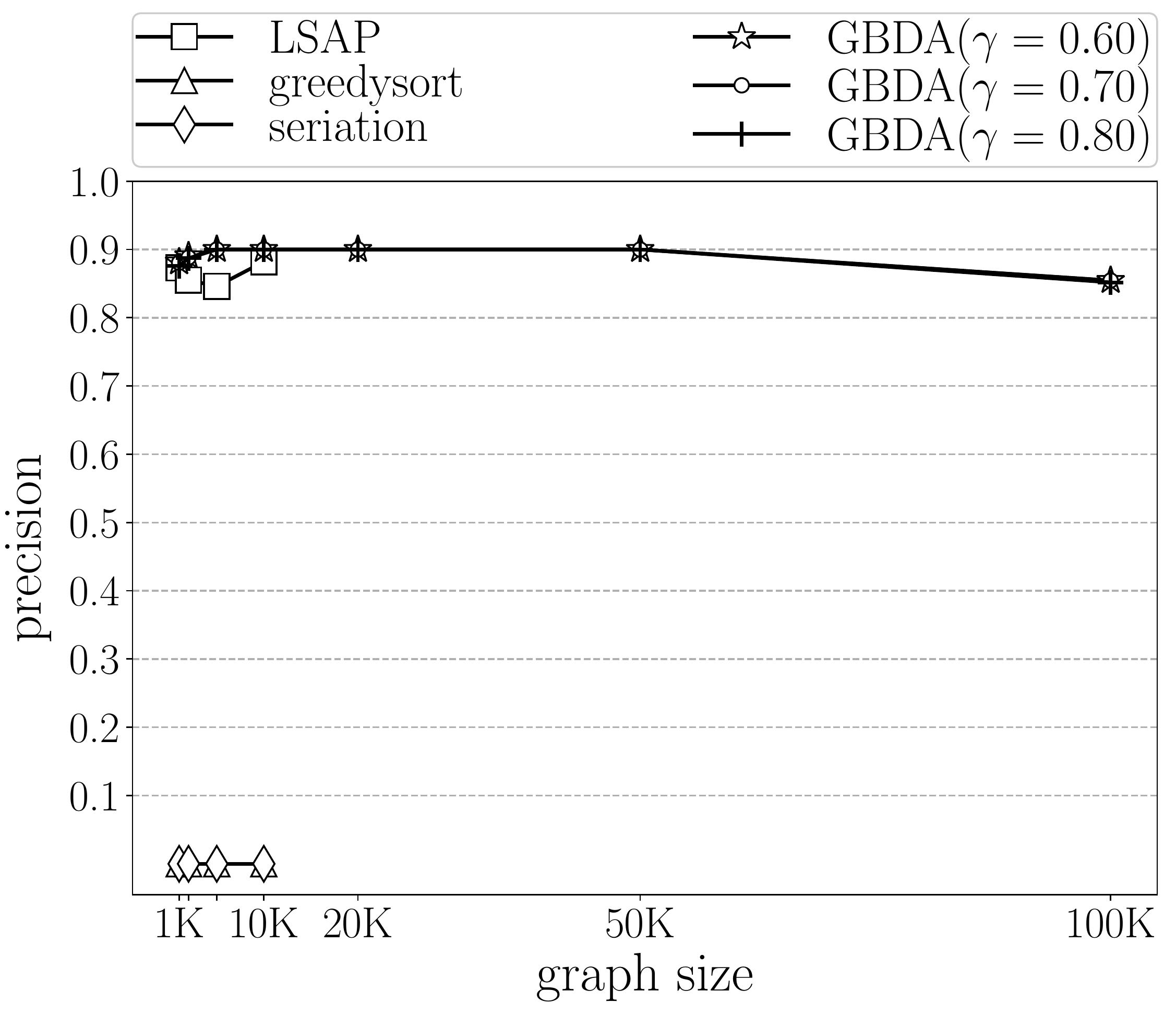}
       \vspace{-20pt}
       \caption{\small{Precision vs. graph size  on \mbox{Syn-1} Data Set with various $\gamma$  ($\hat{\tau} = 20$)}}\label{fig-online-acc-syn1-tau20}
       \endminipage\hfill
       \minipage{0.27\textwidth}
       \includegraphics[width=\linewidth]{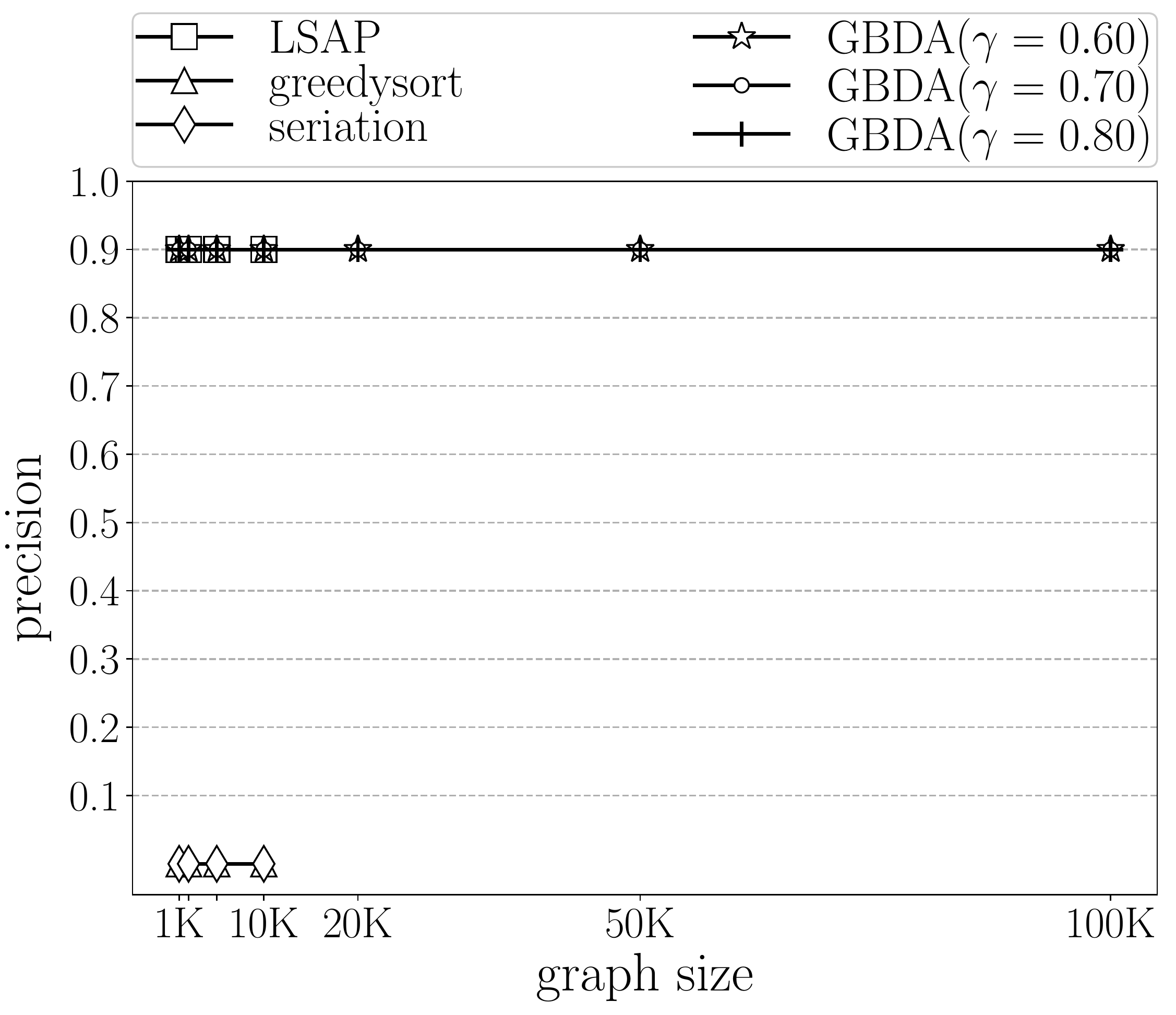}
       \vspace{-20pt}
       \caption{\small{Precision vs. graph size  on \mbox{Syn-1} Data Set with various $\gamma$  ($\hat{\tau} = 25$)}}\label{fig-online-acc-syn1-tau25}
       \endminipage
   \end{figure*}
   
   \begin{figure*}[p]
       \minipage{0.27\textwidth}
       \includegraphics[width=\linewidth]{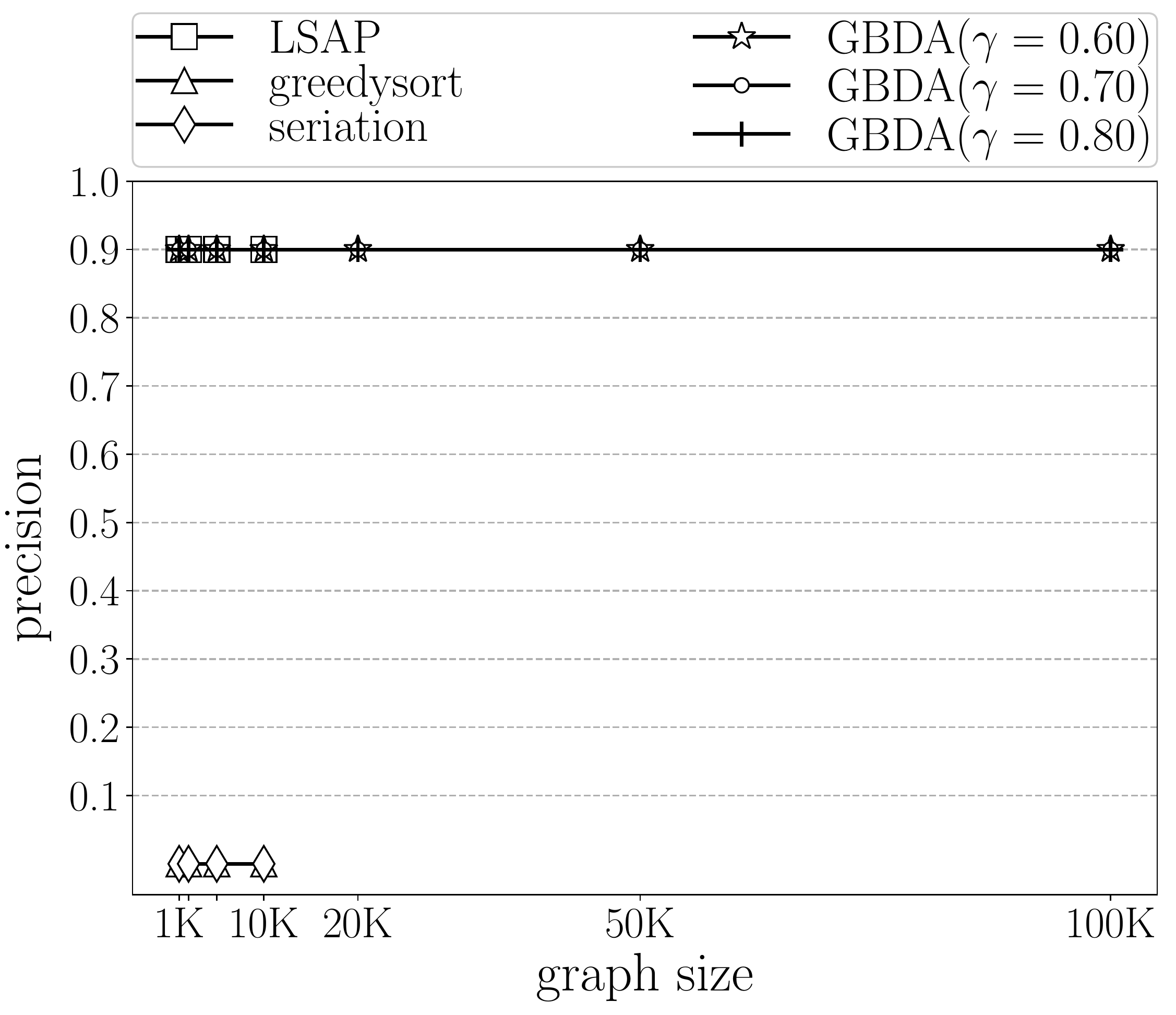}
       \vspace{-20pt}
       \caption{\small{Precision vs. graph size on \mbox{Syn-1} Data Set with various $\gamma$  ($\hat{\tau} = 30$)}}\label{fig-online-acc-syn1-tau30}
       \endminipage\hfill
       \minipage{0.27\textwidth}
       \includegraphics[width=\linewidth]{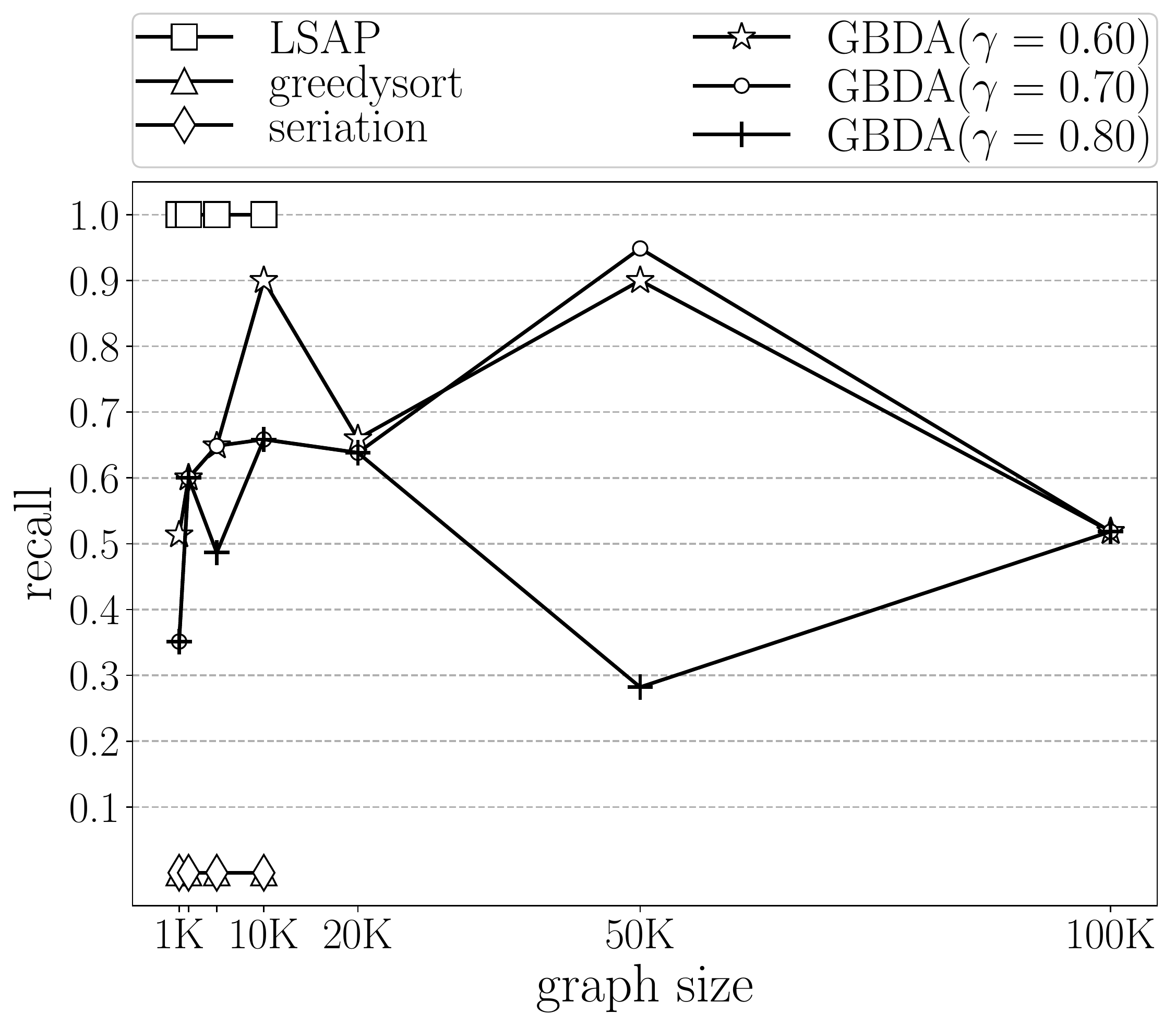}
       \vspace{-20pt}
       \caption{\small{Recall vs. graph size  on \mbox{Syn-1} Data Set with various $\gamma$  ($\hat{\tau} = 15$)}}\label{fig-online-recall-syn1-tau15}
       \endminipage\hfill
       \minipage{0.27\textwidth}
       \includegraphics[width=\linewidth]{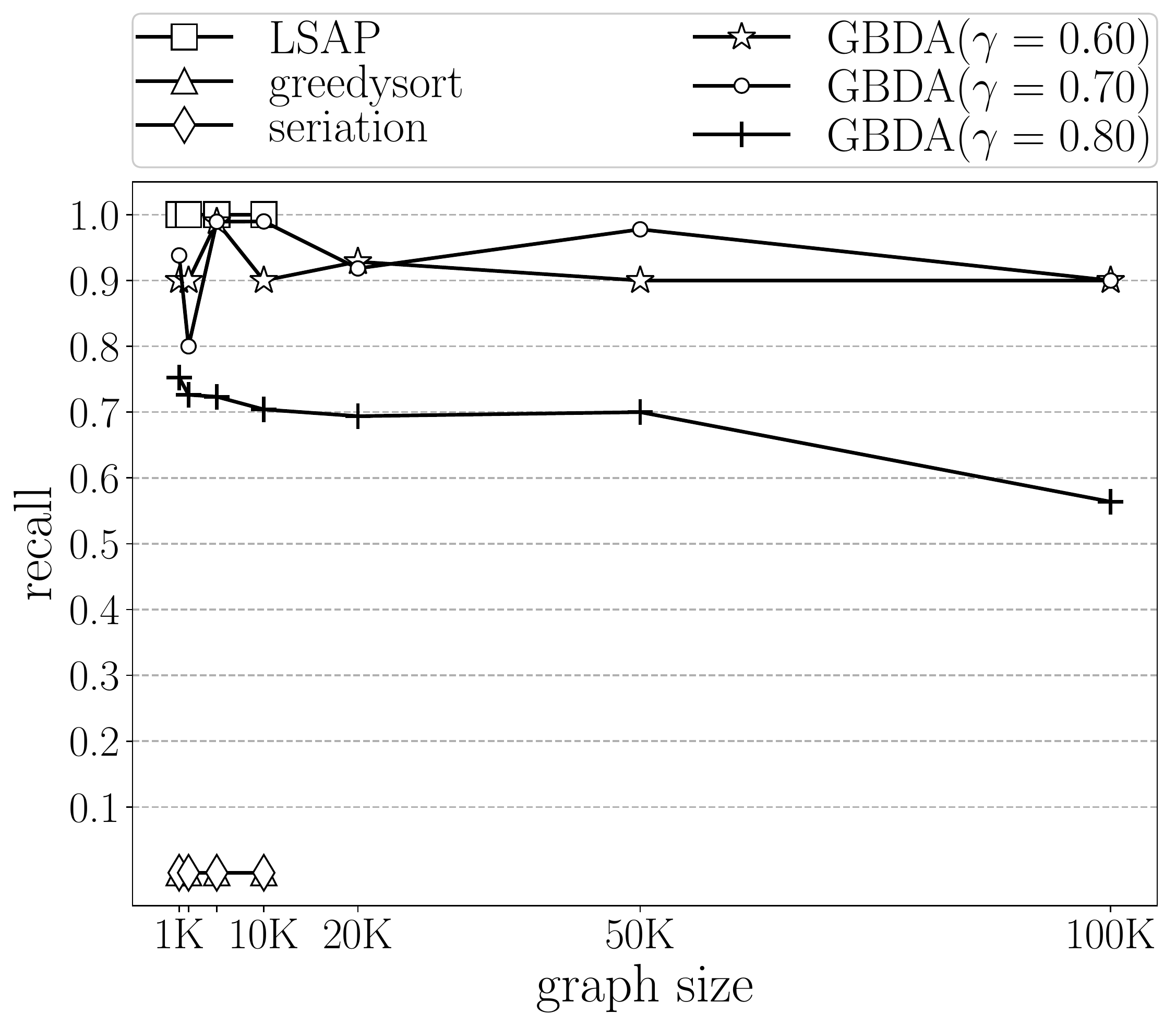}
       \vspace{-20pt}
       \caption{\small{Recall vs. graph size  on \mbox{Syn-1} Data Set with various $\gamma$  ($\hat{\tau} = 20$)}}\label{fig-online-recall-syn1-tau20}
       \endminipage
   \end{figure*}
   
   \begin{figure*}[p]
       \minipage{0.27\textwidth}
       \includegraphics[width=\linewidth]{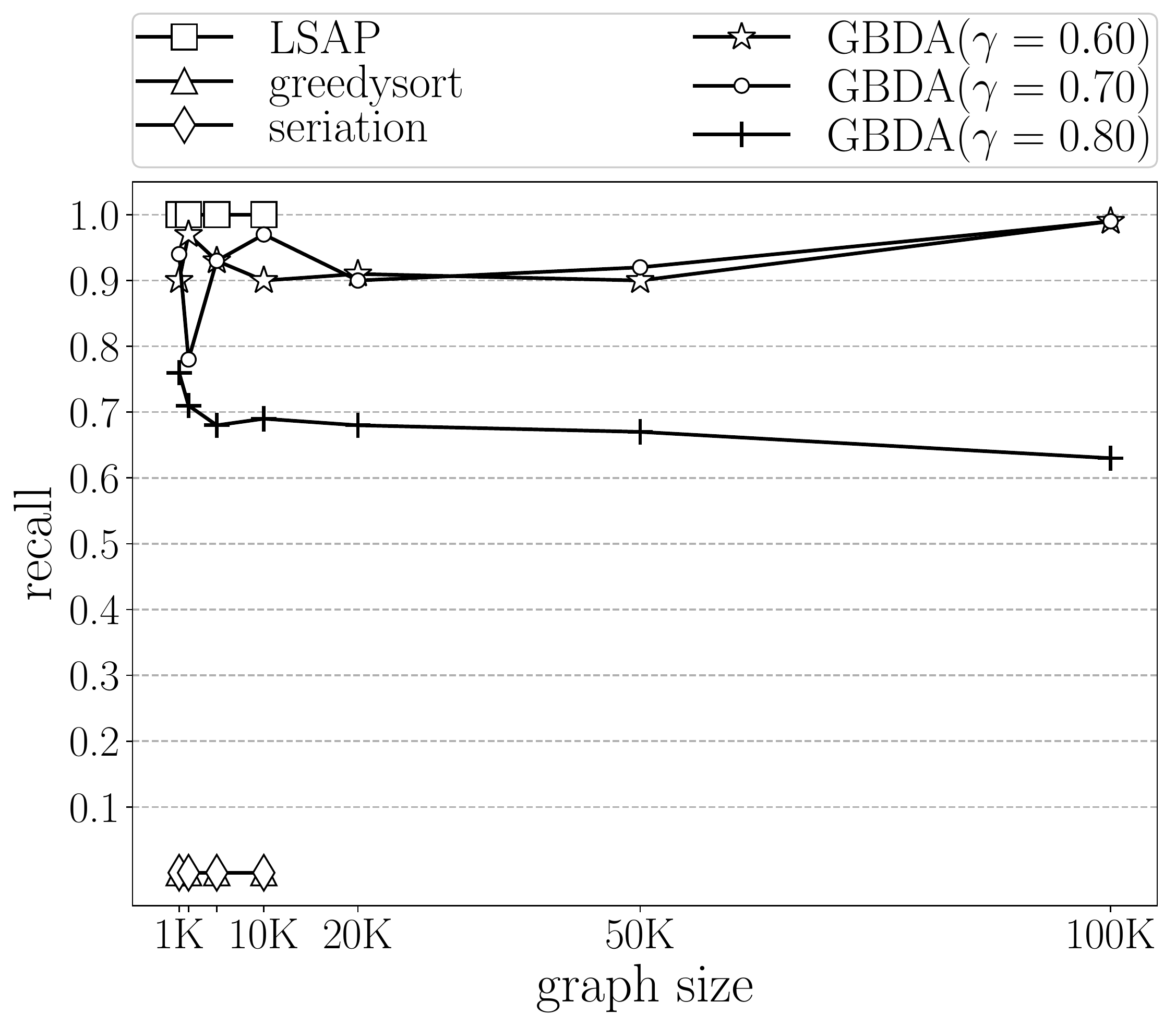}
       \vspace{-20pt}
       \caption{\small{Recall vs. graph size on \mbox{Syn-1} Data Set with various $\gamma$  ($\hat{\tau} = 25$)}}\label{fig-online-recall-syn1-tau25}
       \endminipage\hfill
       \minipage{0.27\textwidth}
       \includegraphics[width=\linewidth]{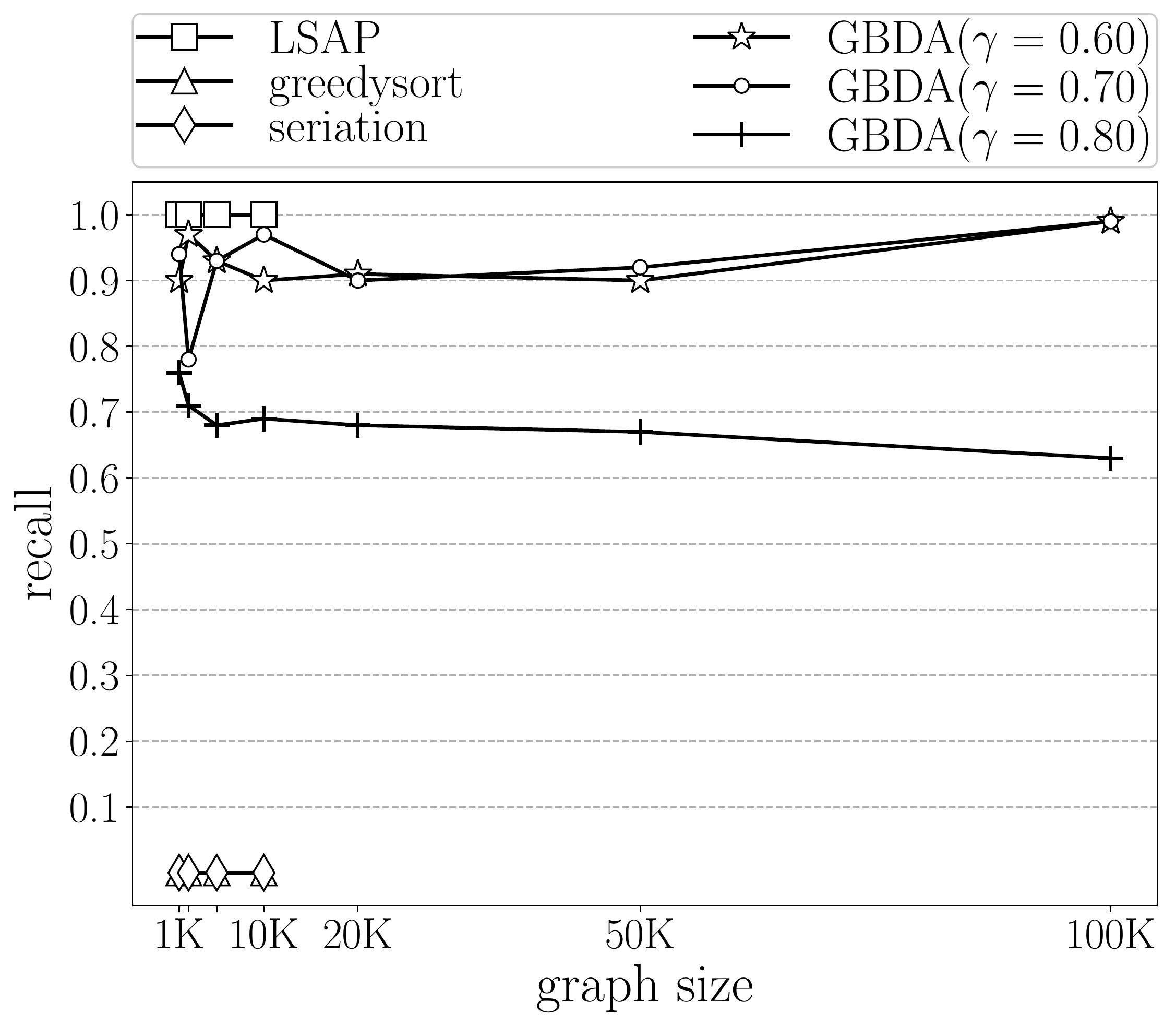}
       \vspace{-20pt}
       \caption{\small{Recall vs. graph size  on \mbox{Syn-1} Data Set with various $\gamma$  ($\hat{\tau} = 30$)}}\label{fig-online-recall-syn1-tau30}
       \endminipage\hfill
       \minipage{0.27\textwidth}
       \includegraphics[width=\linewidth]{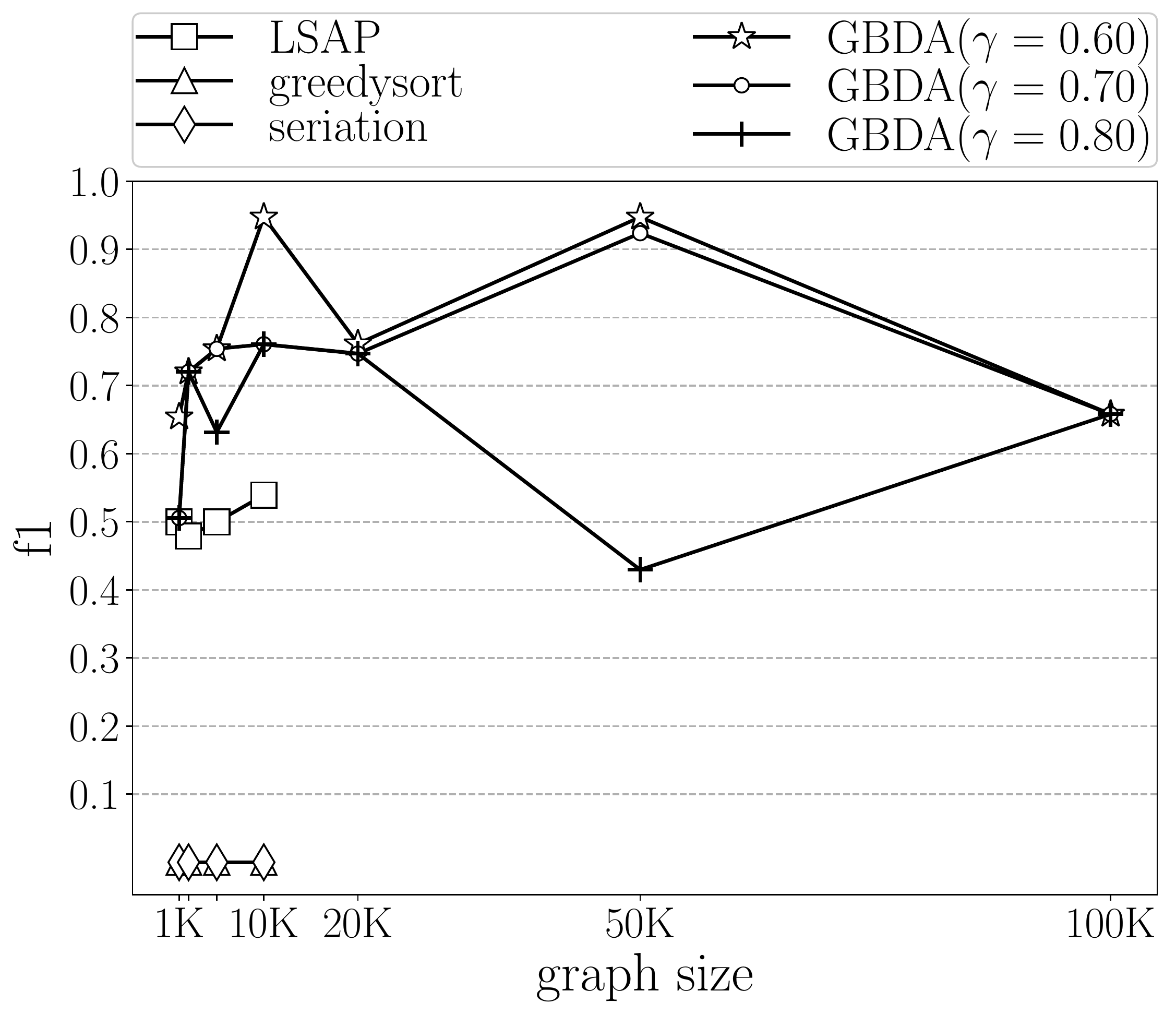}
       \vspace{-20pt}
       \caption{\small{F1-score vs. graph size on \mbox{Syn-1} Data Set with various $\gamma$  ($\hat{\tau} = 15$)}}\label{fig-online-f1-syn1-tau15}
       \endminipage
   \end{figure*}
   
   \begin{figure*}[p]
       \minipage{0.27\textwidth}
       \includegraphics[width=\linewidth]{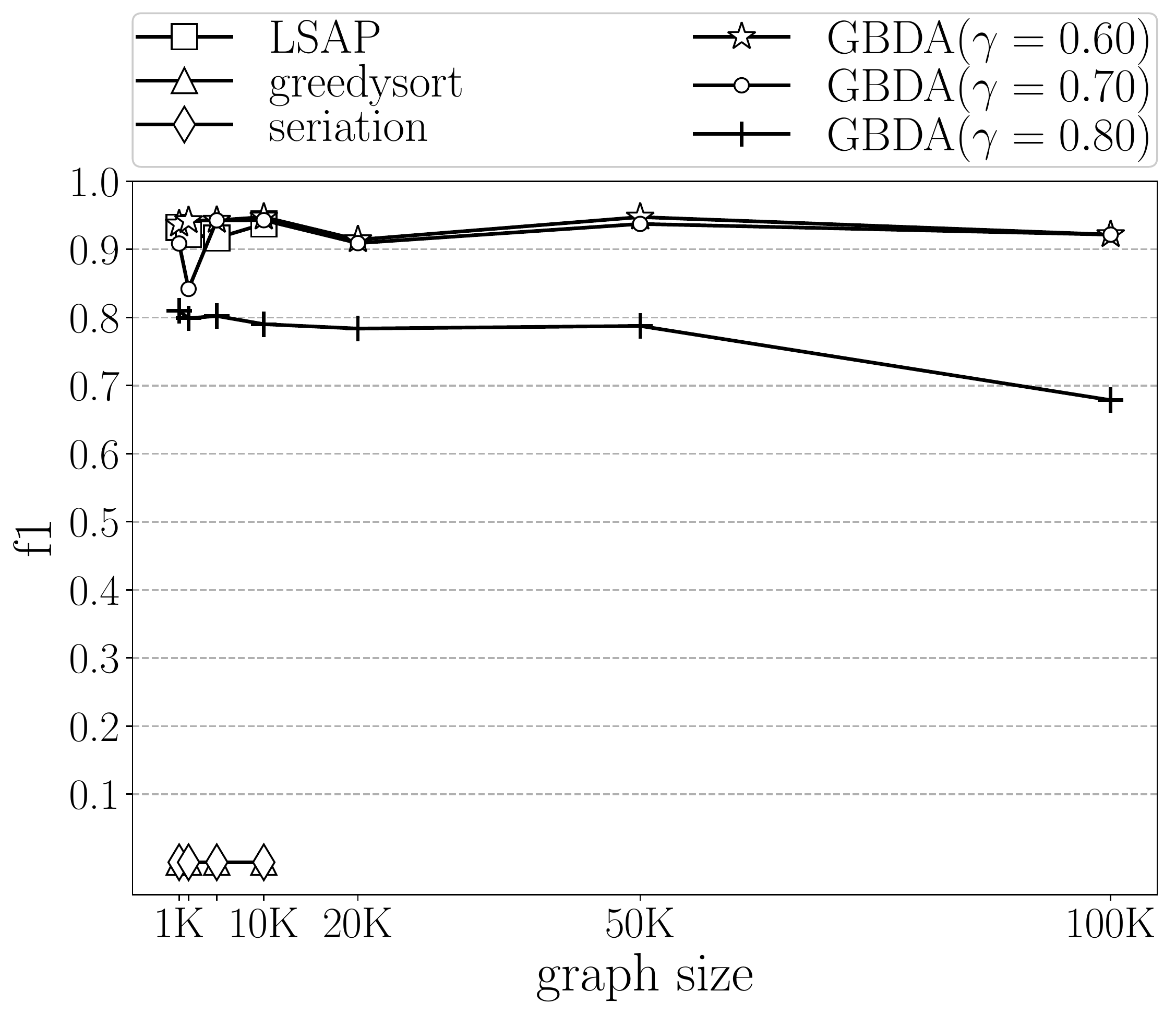}
       \vspace{-20pt}
       \caption{\small{F1-score vs. graph size on \mbox{Syn-1} Data Set with various $\gamma$  ($\hat{\tau} = 20$)}}\label{fig-online-f1-syn1-tau20}
       \endminipage\hfill
       \minipage{0.27\textwidth}
       \includegraphics[width=\linewidth]{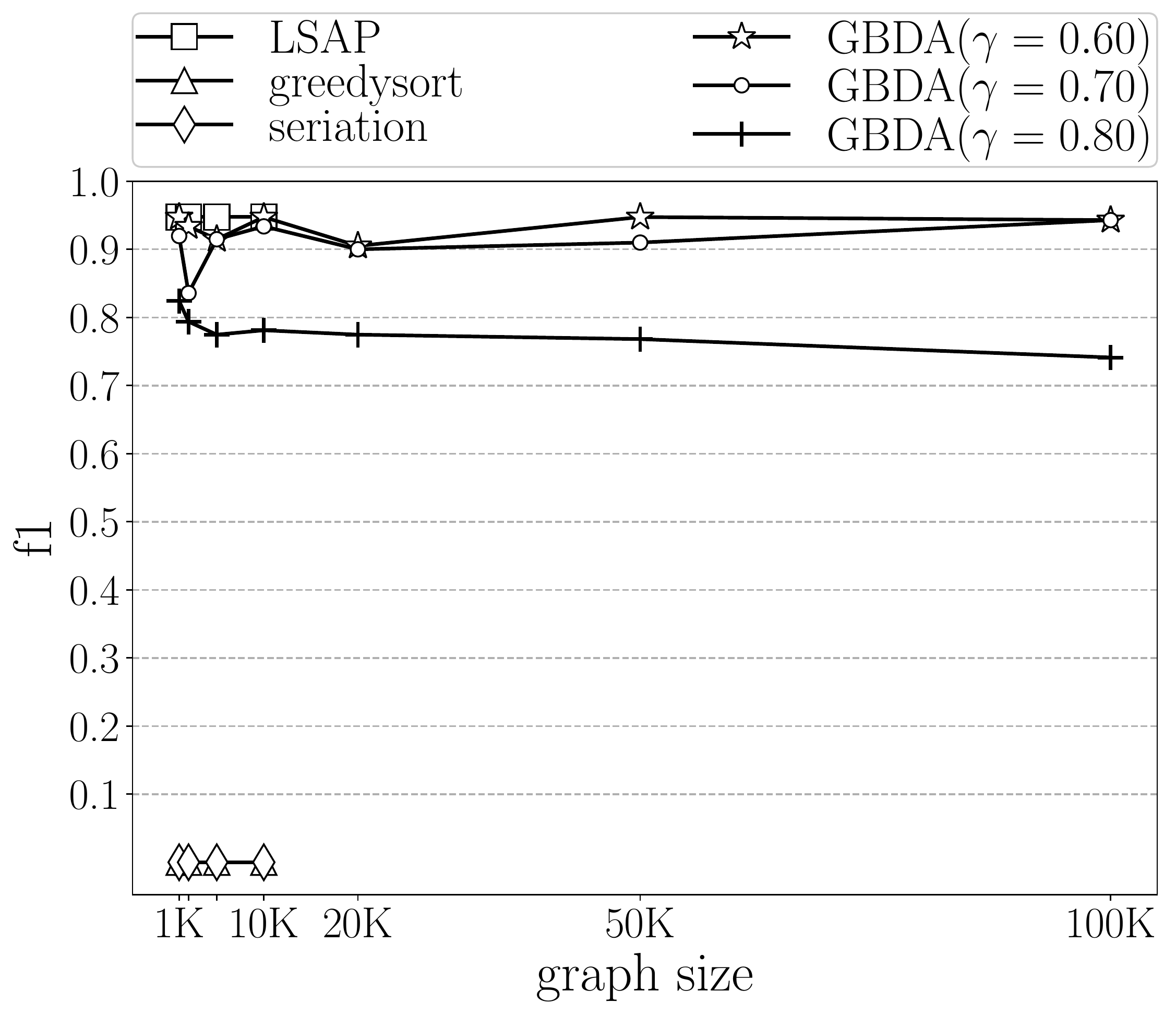}
       \vspace{-20pt}
       \caption{\small{F1-score vs. graph size  on \mbox{Syn-1} Data Set with various $\gamma$  ($\hat{\tau} = 25$)}}\label{fig-online-f1-syn1-tau25}
       \endminipage\hfill
       \minipage{0.27\textwidth}
       \includegraphics[width=\linewidth]{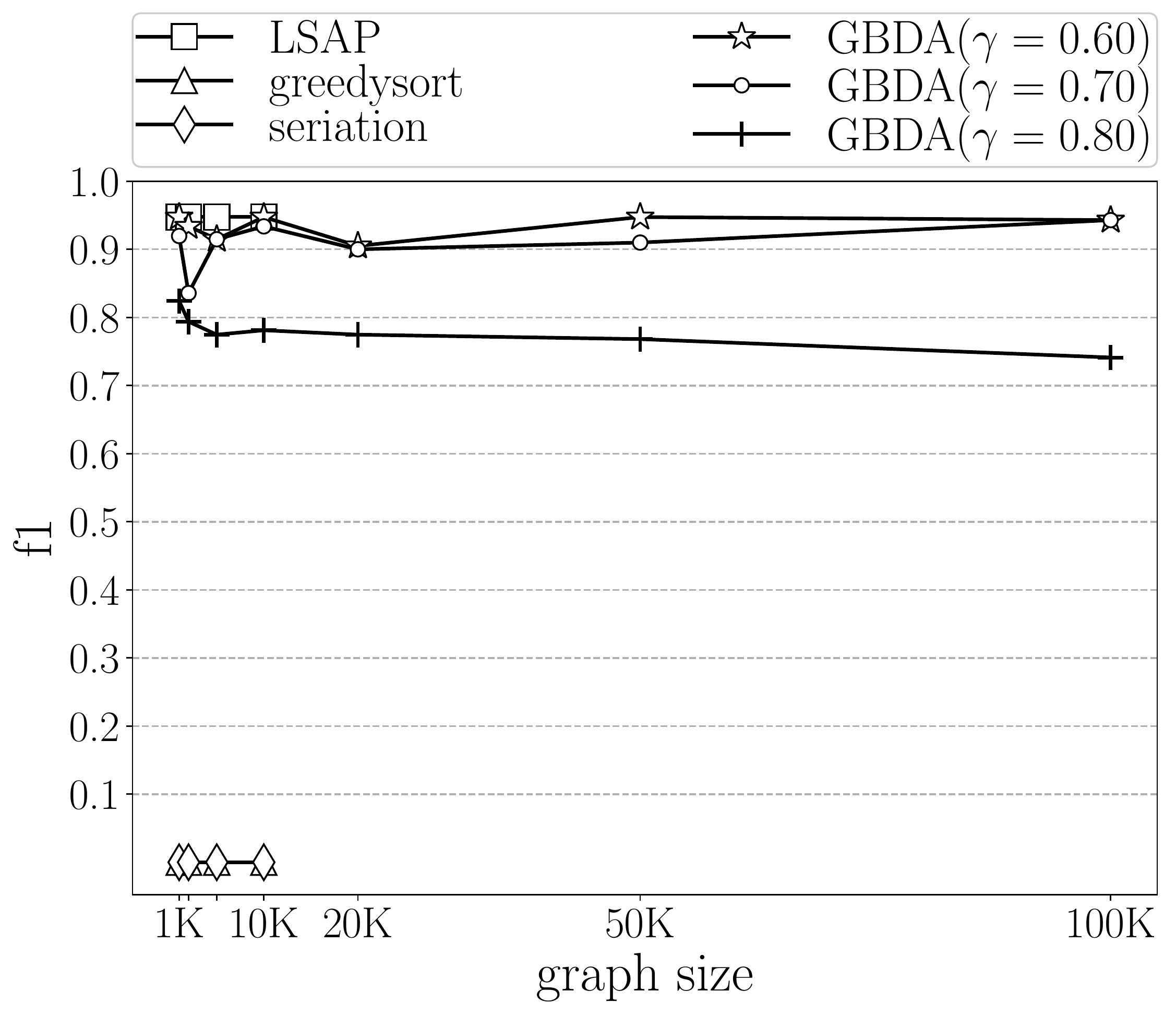}
       \vspace{-20pt}
       \caption{\small{F1-score vs. graph size on \mbox{Syn-1} Data Set with various $\gamma$  ($\hat{\tau} = 30$)}}\label{fig-online-f1-syn1-tau30}
       \endminipage
   \end{figure*}

   \end{appendices}

\end{document}